\documentclass[11pt,letterpaper]{article}

\usepackage[margin=1in]{geometry}

\usepackage{setspace}
\usepackage{parskip}

\usepackage[T1]{fontenc}
\usepackage[utf8]{inputenc}
\usepackage{lmodern} 

\usepackage{microtype}

\usepackage{amsmath,amssymb,amsthm}

\usepackage{graphicx}
\usepackage{booktabs}
\usepackage{graphicx}
\usepackage{latexsym}
\usepackage{times}
\usepackage{soul}
\usepackage{url}
\usepackage[utf8]{inputenc}
\usepackage[small]{caption}
\usepackage{booktabs}
\usepackage{algorithm}
\usepackage{comment}
\usepackage[switch]{lineno}
\usepackage{tikz}
\usetikzlibrary{graphs}
\usetikzlibrary{arrows.meta, positioning}
\usetikzlibrary{patterns}
\usepackage{algpseudocode}
\usepackage{luatex85}
\usepackage{subcaption}
\usepackage{bussproofs}

\def\Underline{\setbox0\hbox\bgroup\let\\\endUnderline}
\def\endUnderline{\vphantom{y}\egroup\smash{\underline{\box0}}\\}
\def\|{\verb|}

\theoremstyle{plain}
\newtheorem{definition}{Definition}[section]

\newtheorem{lemma}[definition]{Lemma}
\newtheorem{prop}[definition]{Proposition}

\newtheorem{definition-prop}[definition]{Definition-Proposition}

\newtheorem{example}{Example}
\newtheorem{theorem}{Theorem}

\usepackage[hidelinks]{hyperref}
\usepackage{cleveref}

\crefalias{prop}{prop}
\crefname{prop}{Proposition}{Proposition}
\crefname{appendix}{Appendix}{Appendices}

\usepackage{enumitem}
\usepackage{comment}
\usepackage{lineno}
\usepackage[page]{appendix}
\usepackage{authblk}

\newcommand{\cvle}{\mathop{\mathrm{cvle}}}
\newcommand{\cvri}{\mathop{\mathrm{cvri}}}
\newcommand{\cover}{\mathop{\mathrm{cover}}}

\newcommand{\ini}{w_{0}}

\date{} 

\title{Arrow-Type Impossibility for Genuinely Modal Judgments}

\author[1]{Yutaka Nagai\thanks{nagai.yutaka.u8@s.mail.nagoya-u.ac.jp}}
\author[1]{Hirotaka Ono\thanks{ono@i.nagoya-u.ac.jp}}
\affil[1]{Graduate School of Informatics, Nagoya University, Furo-cho, Chikusa-ku, Nagoya 464--8601, Japan}

\begin{document}

\maketitle

\begin{abstract}
Judgment aggregation studies how to combine individual judgments on logically related propositions into a collective judgment.
Classical impossibility results show that sufficiently strong logical interconnections force dictatorship under natural aggregation axioms.
In this paper, we ask whether such impossibility can still arise when the objects of aggregation are required to be genuinely modal judgments rather than plain factual propositions.
Since modal logic contains propositional logic, this question is meaningful only if one excludes fact-based aggregation in disguise.
We show that Arrow-type impossibility already re-emerges in a strikingly sparse modal setting.
We prove an impossibility theorem on a simple cyclic frame for an agenda generated from a single propositional variable by repeated applications of a single modal operator, and we further demonstrate this phenomenon for an alternative family of frames satisfying a natural symmetry condition.
Thus, even under a modal-operator requirement, semantic structure alone can generate the logical interconnections needed for dictatorship.
Technically, our analysis has two layers.
First, we prove a semantic reduction theorem showing that certain iterated modal patterns can be collapsed by shifting the evaluation point.
Second, building on this reduction, we identify a local-to-global frame mechanism by which frame geometry yields minimally inconsistent modal judgment sets and the strong path-connectivity required for impossibility.
The same reduction also turns consistency checking into a small combinatorial covering problem, which yields efficient implementations of non-dictatorial aggregation procedures.
\end{abstract}

\begin{quote}
\begin{small}
\textbf{Keywords:} Arrow-type impossibility, Independent aggregation, Logical interconnections, Logic-specific properties, Modal Logic, Computational social choice, Judgment aggregation, 
    \end{small}
\end{quote}
\noindent

\section{Introduction}

How should we aggregate rational judgments?  
Judgment aggregation studies how to combine individual judgments on logically related propositions into a collective judgment \cite{Listdicur}.  
A central difficulty is that logical interconnections among propositions may turn individually rational judgments into an irrational collective outcome.  
A classical example is the doctrinal paradox, where majority voting on $p$, $q$, and $p \land q$ yields an inconsistent collective judgment even though each individual judgment set is consistent \cite{Kornhauser1986, Pettit}.  
This phenomenon has led to a rich line of impossibility results in judgment aggregation \cite{Dietrich, List, Dokow2010, Listdicur}. In particular, subsequent work showed that sufficiently strong logical interconnections force dictatorship under natural aggregation axioms \cite{List, Dokow2010}.

These impossibility results are by now understood from several strong viewpoints. At the level of logical frameworks, propositional judgment aggregation and its abstractions have been extensively studied \cite{Dietrich, List}. 
At the level of agendas and feasible evaluations, previous work gives sharp structural characterizations of when dictatorship is unavoidable \cite{Dokow2010}. Thus, the classical picture is already powerful: impossibility can be identified either through the logical structure of the agenda or through the geometry of feasible $0$--$1$ evaluations.

Our starting point is a different question.  
Much of the existing literature allows the aggregation of judgments on plain factual propositions themselves.  
This is natural and has led to powerful impossibility results.  
However, many aggregation problems are not primarily about settling bare factual truth, but about aggregating attitudes toward facts under uncertainty or interpretation: whether a claim must hold, may hold, or cannot hold.  
These are inherently modal judgments.  
This suggests studying judgment aggregation in a setting where the aggregated items are themselves modal.

At this point, however, a basic difficulty arises.  
Modal logic contains propositional logic, so classical impossibility results remain present in a trivial way if one simply allows modal formulas in addition to propositional ones.  
That is not the phenomenon we want to isolate.  
Our goal is to set aside such fact-based aggregation and ask what happens when every aggregated item is required to involve modal operators.  
In other words, we ask whether Arrow-type impossibility still arises when the objects of aggregation are genuinely modal judgments rather than propositional judgments in disguise.

At first sight, one might expect impossibility to become harder to obtain in such a setting. Modal operators seem to loosen, rather than rigidify, logical structure: they move attention from outright truth to possibility, necessity, or related attitudes. One may therefore suspect that, once fact-based aggregation is excluded in the above sense, purely modal agendas would no longer generate the strong logical interconnections needed for dictatorship.
However, the situation is more subtle. Unlike propositional formulas, modal formulas are governed not only by syntax but also by the geometry of the underlying frame. 

Our main result shows that this semantic structure itself can generate the interconnections needed for impossibility.
We prove an Arrow-type impossibility theorem on a simple cyclic frame, for an agenda generated from a single propositional variable by repeated applications of a single modal operator.  
We further show that the same phenomenon arises for an alternative family of frames satisfying a natural symmetry condition under another restricted iterated modal pattern. 
Thus, even under a modal-operator requirement, impossibility already reappears in a strikingly sparse setting.

The technical core of our analysis has two layers.  
First, we prove a semantic reduction theorem showing that certain iterated modal patterns can be collapsed by shifting the evaluation point.  
This provides a semantic normalization of modal formulas and reduces consistency questions to a finite combinatorial form.  
Second, building on this reduction, we identify a local overlap pattern among reachable regions on the frame, which in turn yields minimally inconsistent modal judgment sets.  
Combined with a global propagation argument based on a coprimality condition, this also yields the strong path-connectivity required for impossibility.  
Thus, the proof identifies a concrete route from frame geometry to logical interconnection, and from there to dictatorship.  
The same reduction also turns consistency checking into a small combinatorial covering problem, which in turn yields efficient implementations of non-dictatorial aggregation procedures. 

Our work complements existing impossibility results in judgment aggregation.  
Prior work gives powerful characterizations of when a given agenda, or more abstractly a feasible set of binary evaluations, forces dictatorship under natural axioms \cite{List, Dokow2010}.  
In particular, binary-aggregation results already identify sharp agenda-level sources of impossibility \cite{Dokow2010}.  
Our focus is different: we ask how such impossibility-inducing interconnections can be generated by the logic itself once the objects of aggregation are required to be genuinely modal judgments.  
This viewpoint is specific to modal logic.  
While modal logic has appeared in the literature as a meta-language for describing aggregation procedures \cite{Novaro, Pauly, Agotnes}, here modal propositions themselves are the objects of aggregation.

\medskip

\noindent\textbf{Our contributions.}
\begin{enumerate}
\item\label{con1} \textbf{Semantic reduction of modal operators.}
We prove a semantic reduction theorem for the symmetric cyclic setting, showing that certain iterated modal patterns can be collapsed by shifting the evaluation point.  
This gives a semantic normalization of modal formulas and serves as the main technical tool throughout the paper.
(Theorem~\ref{th:calcu reduc pro})

\item\label{con2}\textbf{A local-to-global frame mechanism for impossibility.}
Building on the reduction, we identify a local frame pattern that yields minimally inconsistent modal judgment sets, and thus a one-step propagation relation.
Combined with a coprimality argument, this propagates to the strong path-connectivity needed for impossibility.
This gives a concrete route from frame geometry to the logical interconnections that force dictatorship.

\item\label{con3}\textbf{Arrow-type impossibility in a strikingly sparse modal setting.}
We prove that impossibility already arises on a simple cyclic frame for an agenda generated from a single propositional variable by repeated applications of a single modal operator.  
Thus, even after setting aside fact-based aggregation in the sense discussed above, dictatorship can already be forced by a highly restricted modal pattern.  
We further extend this phenomenon to a broader family of frames satisfying a natural symmetry condition.
(Theorem~\ref{th:impossible})
 \item \label{con4} \textbf{Efficient implementation of non-dictatorial aggregation procedures.}
    We apply the sequential majority procedure of Peleg and Zamir \cite{Peleg} to our modal setting.  
    The reduction turns consistency checking, which is the bottleneck of the procedure, into a small combinatorial covering problem.  
    This yields efficient algorithms for implementing non-dictatorial aggregation procedures in our setting.
    (Section~\ref{sec:altaggre})
\end{enumerate}
Taken together, our results show that Arrow-type impossibility does not disappear when aggregation is restricted to genuinely modal judgments.  
Even in a highly sparse modal setting, semantic structure alone already generates enough logical interconnection to force dictatorship.  
In this sense, our results suggest that Arrow-type impossibility is not confined to rigid propositional agendas, but can re-emerge directly from modal semantics itself.
At the same time, the same structural analysis yields efficient implementations of non-dictatorial aggregation procedures, linking impossibility and algorithmic tractability through a common semantic reduction.
	

\section{Preliminaries}\label{sec:pre modal}
In Subsection \ref{subsec:pre modal}, we define the syntax and semantics of modal logic. In Subsection \ref{subsec:JA}, we present the setting of judgment aggregation. 

\subsection{Modal Logic}\label{subsec:pre modal}
In this subsection, we introduce the restricted modal language used in the paper, recall its Kripke semantics, and then specify the two frame classes on which our analysis is carried out.
\paragraph{A minimal modal language.}
Modal logic enriches propositional statements with operators such as \emph{necessity} ($\Box$) and \emph{possibility} ($\Diamond$) \cite[Section 1.2]{Blackburn}.
Intuitively, $\Box p$ means that $p$ holds necessarily, while $\Diamond p$ means that $p$ may hold.
For example, if $p$ denotes the statement ``Alice lives in New York,'' then $\Box p$ means that Alice necessarily lives in New York, while $\Diamond p$ means that Alice may live in New York.
In this paper, we work with a minimal modal language consisting of a single propositional variable $p$ and the two modal operators $\Box$ and $\Diamond$.
Thus, modal formulas are strings of the form $\alpha_1 \alpha_2 \cdots \alpha_n p$, where $n \ge 0$ and each $\alpha_i \in \{\Box,\Diamond\}$.
For example, $p$, $\Box p$, $\Diamond p$, $\Box\Box p$, $\Box\Diamond p$, $\Diamond\Box p$, and $\Diamond\Diamond p$ are modal formulas.



\paragraph{Kripke semantics.}
To interpret modal formulas, we use Kripke semantics, in which formulas are evaluated at worlds connected by an accessibility relation.
Intuitively, a Kripke frame specifies which worlds are available from which other worlds, and a valuation specifies at which worlds the propositional variable $p$ is true.
Formally, a Kripke frame and a Kripke model are defined as follows \cite[Section 1.3]{Blackburn}.

\begin{enumerate}
	\item $W$: a non-empty set of possible worlds.
	\item $R \subseteq W \times W$: an accessibility relation. $(w_{1}, w_{2}) \in R$ means that the world $w_{1}$ can access $w_{2}$.
	\item $V \subseteq W$: a valuation.
\end{enumerate}

	The pair $(W, R)$ and the triple $M=(W, R, V)$ are called a Kripke frame and a Kripke model, respectively. For all worlds $w$ and modal formulas $\Phi$, we define a satisfaction relation $M \vDash w : \Phi$ inductively. $M \vDash w : \Phi$ means that $\Phi$ is true at $w$ under the Kripke model $M$. The notation $w: \Phi$ follows \cite{Negri2005}.
\begin{align*}
	M \vDash w : p  \,\, \qquad &\Leftrightarrow \quad w \in V \\
	M \vDash w_{1} : \Box \Phi  \quad &\Leftrightarrow \quad \text{For all $w_{2} \in W$, $(w_{1}, w_{2}) \in R$ implies $M \vDash w_{2} : \Phi$.} \\
	M \vDash w_{1} : \Diamond \Phi  \quad &\Leftrightarrow \quad \text{There exists some $w_{2} \in W$ such that $(w_{1}, w_{2}) \in R$ and $M \vDash w_{2} : \Phi$.}
\end{align*}

\Cref{fig:Kripke frame} illustrates a Kripke frame with $W=\{0,1,2,3\}, 
R=\{(0,1),(0,2),(1,1),(1,3),(2,3),(3,0)\}$. Let the valuation be $V=\{0,1\}$.
Since the worlds accessible from $0$ are $1$ and $2$, we have
\[
M \vDash 0 : \Box p
\Leftrightarrow
M \vDash 1 : p \land M \vDash 2 : p
\Leftrightarrow
1 \in V \land 2 \in V.
\]
By $2 \notin V$, it follows that $M \nvDash 0 : \Box p$.
Similarly, since the worlds accessible from $1$ are $1$ and $3$, we have
\[
M \vDash 1 : \Diamond p
\Leftrightarrow
M \vDash 1 : p \lor M \vDash 3 : p
\Leftrightarrow
1 \in V \lor 3 \in V.
\]
By $1 \in V$, we have $M \vDash 1 : \Diamond p$.

\begin{figure}[htbp]
  \begin{minipage}[b]{0.30\columnwidth}
    \centering
    \centering
\begin{tikzpicture}[
    every node/.style={circle, draw, minimum size=5mm}, scale=0.7,
    every edge/.style={draw, ->}
]

	\node (0) at (0, 0) {0};
    	\node (1) at (2, 0) {1};
	\node (2) at (0, 2) {2};
	\node (3) at (2, 2) {3};
	
    	\path (0) edge (1);
    	\path (1) edge (3);
    	\path (0) edge (2);
    	\path (2) edge (3);
    	\path (3) edge (0);
    	\path (1) edge[loop right] (1);
    \path (0, -2.0);
\end{tikzpicture}
    \caption{A Kripke frame}
    \label{fig:Kripke frame}
  \end{minipage}
  \hfill 
\begin{minipage}[b]{0.65\columnwidth}
    \centering
\begin{subfigure}{0.48\textwidth}
\centering
\begin{tikzpicture}[
    every node/.style={circle, draw, minimum size=5mm},
    every edge/.style={draw, ->},
    scale=0.7,
    transform shape
]

	\node (-1) at (1.73, 5) {x};
	\node (0) at (0,4) {0};
	\node (1) at (1.73,3) {1};
	\node (2) at (1.73,1) {2};
	\node (3) at (0,0) {3};
	\node (4) at (-1.73,1) {4};
	\node (5) at (-1.73,3) {5};
	
	\path (-1) edge (0);
	\path (-1) edge (1);
	\path (0) edge (1);
	\path (1) edge (2);
	\path (2) edge (3);
	\path (3) edge (4);
	\path (4) edge (5);
	\path (5) edge (0);

\end{tikzpicture}
\caption{Frame 1}
\label{fig:GALS0}
\end{subfigure}
\hfill
\begin{subfigure}{0.48\textwidth}
\centering
\begin{tikzpicture}[
    every node/.style={circle, draw, minimum size=5mm},
    every edge/.style={draw, ->},
    scale=0.7,
    transform shape
]

	\node (0) at (0,4) {0};
	\node (1) at (1.73,3) {1};
	\node (2) at (1.73,1) {2};
	\node (3) at (0,0) {3};
	\node (4) at (-1.73,1) {4};
	\node (5) at (-1.73,3) {5};
	
	\path (0) edge (1);
	\path (0) edge (2);
	\path (1) edge (2);
	\path (1) edge (3);
	\path (2) edge (3);
	\path (2) edge (4);
	\path (3) edge (4);
	\path (3) edge (5);
	\path (4) edge (5);
	\path (4) edge (0);
	\path (5) edge (0);
	\path (5) edge (1);

	\path (0) edge[loop above] (0);
	\path (1) edge[loop right] (1);
	\path (2) edge[loop right] (2);
	\path (3) edge[loop below] (3);
	\path (4) edge[loop left] (4);
	\path (5) edge[loop left] (5);

\end{tikzpicture}
\caption{Frame 2}
\label{fig:GALS}
\end{subfigure}
   \caption{Examples of Frame 1 and 2}
\label{fig:frames}
  \end{minipage}
\end{figure}

\paragraph{Frame classes used in this paper.}
Fix positive integers $r>k\ge 1$ and a nonempty subset $A \subseteq \mathbb{Z}/r\mathbb{Z}$.
We refer to Appendix \ref{appendix: addition} for the definition of $\mathbb{Z}/r\mathbb{Z}$ and its addition $+$.

We consider the following two classes of Kripke frames.
\begin{enumerate}
	\item Frame 1 : $W:=\{x\} \sqcup \mathbb{Z}/r\mathbb{Z}, R:=\{(x, a) \mid a \in A\} \sqcup \{(w, w+k) \mid w \in \mathbb{Z}/r\mathbb{Z}\}$.
	\item Frame 2 : $W:=\mathbb{Z}/r\mathbb{Z}, R:=\{(w, w+a) \mid w \in \mathbb{Z}/r\mathbb{Z}, a \in A\}$.
\end{enumerate}

For simplicity, 
we usually abbreviate the residue class $\overline{a} \in \mathbb{Z}/r\mathbb{Z}$ as $a$. 
\Cref{fig:frames} shows examples of Frame~1 and Frame~2, with parameters $(r,k,A)=(6,1,\{0,1\})$ and $(6,2,\{0,1,2\})$, respectively.

These frame classes are related to cyclic pursuit.
In particular, if $r$ agents are arranged on a ring and each agent $i$ tracks agent $i+1 \bmod r$, then the resulting reference graph is Frame~2 with $W=\mathbb{Z}/r\mathbb{Z}$ and $A=\{1\}$; see \cite{Klamkin1971, Marshall2004, Park2024, Parsegov2023}.

\subsection{Judgment aggregation}\label{subsec:JA}
In this subsection, we introduce the judgment-aggregation setting for our modal agendas.
We first define rational judgment sets and aggregation rules, and then introduce the logical-interconnection notions used later in the definition of impossibility frames.

Let $N=\{1,2,\cdots,n\}$ with $n\ge 2$ be the set of individuals.
For every modal formula $\Phi$, we write $\lnot \Phi$ for its negation, where
$M \vDash w:\lnot \Phi \Leftrightarrow M \nvDash w:\Phi$.
Throughout the paper, we identify $\lnot\lnot\Phi$ with $\Phi$.

\paragraph{Agenda and rational judgment sets.}
Fix a set of modal formulas $\Gamma$, called an \textit{agenda}.
In this paper, we consider the agendas $\Gamma=\{\Box^{j}p,\lnot\Box^{j}p\}_{j\ge 1}$ or $\{(\Box\Diamond)^{j}\Box p,\lnot(\Box\Diamond)^{j}\Box p\}_{j\ge 0}$.
Although each of these sets is infinite syntactically, it has only finitely many semantic equivalence classes, so for the purposes of our analysis the agenda can be treated as finite.

Fix a Kripke frame $(W,R)$, which is either Frame~1 or Frame~2, and a world $\ini\in W$.
A subset $\Gamma_0\subseteq \Gamma$ is \textit{complete} in $\Gamma$ if for every $\Phi\in\Gamma$, either $\Phi\in\Gamma_0$ or $\lnot\Phi\in\Gamma_0$.
A subset $\Gamma_0\subseteq \Gamma$ is \textit{consistent} under $(W,R)$ if there exists a valuation $V\subseteq W$ such that
$(W,R,V)\vDash \ini:\Phi$ for all $\Phi\in\Gamma_0.$ 
A subset $\Gamma_0\subseteq \Gamma$ is \textit{rational} in $\Gamma$ if it is complete in $\Gamma$ and consistent under $(W,R)$.

We write $\mathcal J$ for the set of all rational subsets of $\Gamma$.
A subset $\Delta\subseteq \Gamma$ is \textit{minimally inconsistent} under $(W,R)$ if for every $\Delta_0\subseteq \Delta$, the set $\Delta_0$ is inconsistent under $(W,R)$ if and only if $\Delta_0=\Delta$.

\paragraph{Profiles, aggregation rules, and axioms.}
A \textit{profile} of individual judgments is a tuple $(\Gamma_1,\Gamma_2,\cdots,\Gamma_n)\in \mathcal{J}^N$.
We identify this tuple with the function $f:N\to\mathcal{J}$ given by $f(i)=\Gamma_i$.
For $f\in\mathcal{J}^N$ and $\Phi\in\Gamma$, we write $f^{-1}(\Phi):=\{i\in N \mid \Phi\in f(i)\}$, that is, the set of individuals who accept $\Phi$ under the profile $f$.
A \textit{(judgment) aggregation rule} is a function $F:\mathcal{J}^N\to\mathcal{J}$.
For $f\in\mathcal{J}^N$ and $\Phi\in\Gamma$, the statement $\Phi\in F(f)$ means that the group accepts $\Phi$ under the profile $f$.

We impose the following axioms on aggregation rules.
\begin{description}[leftmargin=2em, labelindent=1em]
    \item[\textbf{Unanimity \cite{List}.}]
    For all $f\in\mathcal{J}^N$ and $\Phi\in\Gamma$, if $f^{-1}(\Phi)=N$, then $\Phi\in F(f)$.
    That is, if all individuals accept $\Phi$, then the group also accepts $\Phi$.

    \item[\textbf{Independence \cite{List}.}]
    For all $f,g\in\mathcal{J}^N$ and $\Phi\in\Gamma$, if $f^{-1}(\Phi)=g^{-1}(\Phi)$, then $\Phi\in F(f) \;\Leftrightarrow\; \Phi\in F(g).$
    That is, the collective acceptance of $\Phi$ depends only on which individuals accept $\Phi$.

    \item[\textbf{Positive-Negative Neutrality.}]
    For all $f,g\in\mathcal{J}^N$ and $\Phi\in\Gamma$, if $f^{-1}(\Phi)=g^{-1}(\lnot\Phi)$, then $   \Phi\in F(f) \;\Leftrightarrow\; \lnot\Phi\in F(g).$
 
    That is, acceptance of $\Phi$ and acceptance of $\lnot\Phi$ are treated symmetrically across profiles.
    This axiom generalizes Neutrality in \cite{May}.

    \item[\textbf{Dictatorship \cite{List}.}]
    There exists an individual $i_0\in N$ such that for all $f\in\mathcal{J}^N$, $F(f)=f(i_0).$
    That is, one individual always determines the group judgment.
\end{description}

\paragraph{Agenda interconnections and impossibility frames.}
We now introduce the logical-interconnection notions used later in our impossibility results.
Following the judgment-aggregation literature, minimal inconsistency and path-connectedness capture the kind of logical dependence that can force dictatorship \cite{List}.
The agenda $\Gamma$ is \textit{minimally connected} under $(W,R)$ if there exists a subset $\Gamma_0\subseteq\Gamma$ with $|\Gamma_0|\ge 3$ such that $\Gamma_0$ is minimally inconsistent under $(W,R)$.
For formulas $\Phi,\Psi\in\Gamma$ with no negation operator, we write $\Phi <_0 \Psi$ if there exists a subset $\Gamma_0\subseteq\Gamma$ such that $\Gamma_0\cup\{\Phi,\lnot\Psi\}$
is minimally inconsistent under $(W,R)$, while $\Gamma_0\cup\{\lnot\Phi,\Psi\}$
is consistent under $(W,R)$.
The relation $<_0$ is a special case of the even-negation property in \cite{List}.
The agenda $\Gamma$ is \textit{strongly path-connected} under $(W,R)$ if for all formulas $\Phi,\Psi\in\Gamma$ with no negation operator, there exist
$\Phi=\Phi_0,\Phi_1,\cdots,\Phi_{m-1},\Phi_m=\Psi$ such that $\Phi_i <_0 \Phi_{i+1}$ for all $0\le i\le m-1$.

Finally, a Kripke frame $(W,R)$ is an \textit{impossibility frame} under $\Gamma$ if $\Gamma$ is minimally connected and strongly path-connected under $(W,R)$.
This notion is analogous to strong connectedness in \cite{List}, but there is an important difference: strong connectedness in \cite{List} is a condition on agendas, whereas our impossibility-frame notion is formulated as a condition on Kripke frames.

\section{A framework for impossibility}
In this section, we develop the framework that underlies our impossibility results.
Our route has two components.
First, we prove a semantic reduction theorem showing that certain iterated modal patterns can be collapsed by shifting the evaluation point.
Second, we introduce a general frame-based criterion under which sufficiently strong logical interconnections force dictatorship.
Together, these two ingredients reduce the main task in later sections to constructing Kripke frames that generate the required interconnections.

\subsection{Semantic reduction of modal operators}\label{sec:semantic_reduction}


We begin with the semantic reduction that forms the first component of our framework.
A central difficulty in our setting is that modal formulas are not controlled purely by syntax: their truth depends on how evaluation propagates across worlds in the underlying frame.
The key observation is that, under a natural symmetry condition, certain iterated modal blocks can be collapsed semantically by shifting the evaluation point.
This reduction will be used in two ways later: it turns modal judgments into a finite combinatorial form for the impossibility proof, and it also underlies the aggregation algorithms.

\begin{definition}
	The set $A$ is $k$-symmetric if for all $a \in A$, $k-a \in A$.
\end{definition}

\begin{theorem}\label{th:calcu reduc pro}
	Suppose that $(W, R)$ is Frame 2. If the set $A$ is $k$-symmetric, then for all worlds $w \in \mathbb{Z}/r\mathbb{Z}$, modal formulas $\Phi$, and valuations $V$, $M=(\mathbb{Z}/r\mathbb{Z}, R, V) \vDash w : \Box \Diamond \Box \Phi$ if and only if $M=(\mathbb{Z}/r\mathbb{Z}, R, V) \vDash w+k : \Box \Phi$.
\end{theorem}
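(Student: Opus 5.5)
We reduce the statement to set-theoretic identities about $A$. Write $S:=\{v\in\mathbb{Z}/r\mathbb{Z} \mid M\vDash v:\Phi\}$. In Frame~2 the successors of $w$ are exactly the worlds $w+a$ with $a\in A$, so
\[
M\vDash u:\Box\Phi \iff u+A\subseteq S .
\]
Unfolding one modality at a time, $M\vDash w:\Box\Diamond\Box\Phi$ holds iff for every $a\in A$ there is $b\in A$ with $w+a+b+A\subseteq S$. The right-hand side, $M\vDash w+k:\Box\Phi$, says that $w+k+A\subseteq S$. We prove the two directions separately.

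\emph{($\Leftarrow$)} Assume $w+k+A\subseteq S$, and let $a\in A$ be arbitrary. By $k$-symmetry we have $b:=k-a\in A$. Then $w+a+b+A=w+k+A\subseteq S$. So the required witness exists for every $a\in A$, and $w:\Box\Diamond\Box\Phi$ holds.

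\emph{($\Rightarrow$)} Assume $w:\Box\Diamond\Box\Phi$, and fix an arbitrary $c\in A$. We must show $w+k+c\in S$. The idea is to choose the outer $\Box$-successor cleverly, so that every admissible middle step yields a set containing $w+k+c$. Take $a:=k-c$, which lies in $A$ by $k$-symmetry. By hypothesis there is $b\in A$ with $w+(k-c)+b+A\subseteq S$. By $k$-symmetry, $k-b\in A$, so $w+(k-c)+b+(k-b)=w+2k-c$ lies in $S$.

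This is not yet the target $w+k+c$, and at this point the choice of $a$ needs to be revisited. The requirement is that for \emph{every} $b\in A$, the set $a+b+A$ contains $k+c$. Equivalently, $k+c-a-b\in A$ for all $b\in A$. Setting $a:=c$, this becomes $k-b\in A$ for all $b\in A$, which is exactly $k$-symmetry. So we take $a:=c$ instead, which is allowed since $c\in A$. The hypothesis then gives some $b\in A$ with $w+c+b+A\subseteq S$. Since $k-b\in A$, we get $w+c+b+(k-b)=w+k+c\in S$. As $c\in A$ was arbitrary, $w+k+A\subseteq S$, that is, $M\vDash w+k:\Box\Phi$.

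The main obstacle is the ($\Rightarrow$) direction: one must pick the outer $\Box$-successor $a$ so that the unknown existential witness $b$ is always cancelled by the innermost $\Box$-step. Once we choose $a=c$ and use $k-b\in A$, the rest is bookkeeping in $\mathbb{Z}/r\mathbb{Z}$ using commutativity and associativity of $+$ (Appendix~\ref{appendix: addition}). Note that the argument never uses the inner structure of $\Phi$, because only the set $S$ enters. So no induction on $\Phi$ is needed, and the claim holds uniformly for all $w$, $\Phi$ and $V$.
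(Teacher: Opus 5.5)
Your proof is correct and is essentially the paper's argument. The paper packages the two halves as Lemma~\ref{lem:leq} ($w+k:\Phi \Rightarrow w:\Box\Diamond\Phi$ and $w:\Diamond\Box\Phi \Rightarrow w+k:\Phi$) and applies the first with $\Box\Phi$ in place of $\Phi$ and the second at each successor $w+a$, which matches your witness $b=k-a$ and your final choice $a=c$ with the witness cancelled by $k-b\in A$ (the abandoned first attempt with $a:=k-c$ is harmless but can simply be deleted).
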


\Cref{th:calcu reduc pro} is not a purely syntactic rewriting rule.  
It exploits the geometry of the frame to replace a nontrivial modal block by a shifted evaluation point, which is what makes the later reduction to set covering possible. For this reason, it also isolates a semantic phenomenon that is not visible at the purely syntactic level. The proof is provided in Appendix \ref{sec:reduc modal}.

\subsection{A Frame-based Criterion for Dictatorship}

We now introduce the second component of our framework: a general criterion that derives dictatorship from sufficiently strong logical interconnections on a fixed Kripke frame.
This criterion plays the same structural role as related dictatorship criteria in judgment aggregation, but is formulated here for our modal setting.
We prove this theorem in Appendix~\ref{appendix: arrow}.
Since Positive-Negative Neutrality implies Independence (Lemma~\ref{lem:PN imp I}), we do not assume Independence explicitly in \Cref{th:P and DN is D}.
\begin{theorem}\label{th:P and DN is D}
	Suppose a Kripke frame $(W, R)$ is an \textit{impossibility frame} under $\Gamma$. If an aggregation rule $F:\mathcal{J}^{N} \to \mathcal{J}$ satisfies Unanimity and Positive-Negative Neutrality, then it is dictatorial.
\end{theorem}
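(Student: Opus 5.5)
The proof follows the standard decisive-coalition (ultrafilter) argument of List--Pettit and Dietrich--List, adapted to our axioms. Positive-Negative Neutrality implies Independence (Lemma~\ref{lem:PN imp I}), so for every negation-free $\Phi\in\Gamma$ there is a family $\mathcal{W}_\Phi\subseteq 2^N$ with $\Phi\in F(f)\Leftrightarrow f^{-1}(\Phi)\in\mathcal{W}_\Phi$. Positive-Negative Neutrality also gives $\lnot\Phi\in F(f)\Leftrightarrow f^{-1}(\lnot\Phi)\in\mathcal{W}_\Phi$. Completeness and consistency of $F(f)$ then imply that for every coalition $C$ exactly one of $C$ and $N\setminus C$ lies in $\mathcal{W}_\Phi$, provided some profile realizes $f^{-1}(\Phi)=C$. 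Such a profile exists because $\Phi$ and $\lnot\Phi$ are each contained in some rational set; this follows from minimal inconsistency being attained along the paths. Unanimity gives $N\in\mathcal{W}_\Phi$ and hence $\emptyset\notin\mathcal{W}_\Phi$.

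\textbf{Step 1 (a single winning family).} We show $\Phi<_0\Psi$ implies $\mathcal{W}_\Phi\subseteq\mathcal{W}_\Psi$. Take $\Gamma_0$ as in the definition of $<_0$, and fix $C\in\mathcal{W}_\Phi$. Build a profile in which members of $C$ accept $\Gamma_0\cup\{\Phi,\Psi\}$ and the others accept $\Gamma_0\cup\{\lnot\Phi,\lnot\Psi\}$, each extended to a rational set. The second set is consistent because $\Gamma_0\cup\{\lnot\Phi,\lnot\Psi\}$ is a subset of nothing inconsistent; we still need to check this carefully. Since $\Gamma_0\cup\{\Phi,\lnot\Psi\}$ is minimally inconsistent, $\Gamma_0\cup\{\Phi\}$ and $\Gamma_0\cup\{\lnot\Psi\}$ are consistent, and the consistency of $\Gamma_0\cup\{\lnot\Phi,\Psi\}$ supplies further profile options. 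The cleanest assignment is: members of $C$ accept $\Gamma_0\cup\{\Phi\}$, which forces $\Psi$; nonmembers accept $\Gamma_0\cup\{\lnot\Phi,\Psi\}$ or $\Gamma_0\cup\{\lnot\Psi\}$ as needed, chosen so that $f^{-1}(\Psi)=C$ exactly. Under this profile, unanimity on $\Gamma_0$ and $C\in\mathcal{W}_\Phi$ put $\Gamma_0\cup\{\Phi\}$ into $F(f)$. Minimal inconsistency then forces $\Psi\in F(f)$, so $C\in\mathcal{W}_\Psi$. Strong path-connectivity chains these inclusions in both directions, so all $\mathcal{W}_\Phi$ equal a single family $\mathcal{W}$. \emph{I expect the main obstacle to be this step}: arranging the profile so that $f^{-1}(\Psi)$ is exactly $C$ while every individual set remains rational. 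Here the condition that $\Gamma_0\cup\{\lnot\Phi,\Psi\}$ is consistent must be used precisely, together with the fact that each formula of $\Gamma_0$ receives unanimous support.

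\textbf{Step 2 (ultrafilter).} Using a minimally inconsistent $Y$ with $|Y|\ge3$ (minimal connectedness), we show $\mathcal{W}$ is closed under intersection; here $Y$ may contain negated formulas, and Positive-Negative Neutrality lets the same $\mathcal{W}$ govern them. Given $C_1,C_2\in\mathcal{W}$ and $Y=\{\Phi_1,\Phi_2,\Phi_3,\dots\}$, build a profile in which $C_1$ rejects only $\Phi_2$, $C_2\setminus C_1$ rejects only $\Phi_1$, and everyone else rejects only $\Phi_3$. Each individual set is a proper subset of $Y$, hence consistent, extended to a rational set, with the remaining members of $Y$ unanimous. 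Then $\Phi_1$ is accepted by $C_1$, $\Phi_2$ by a superset of $C_2$, and the remaining members of $Y$ unanimously. If $C_1\cap C_2\notin\mathcal{W}$, then $N\setminus(C_1\cap C_2)\in\mathcal{W}$ would make $\Phi_3$ accepted as well, so $Y\subseteq F(f)$, a contradiction. Monotonicity follows similarly, via the complement property. Thus $\mathcal{W}$ is an ultrafilter on the finite set $N$, hence principal, generated by some $i_0$. Every formula of $\Gamma$ is then accepted exactly when $i_0$ accepts it, so $F(f)=f(i_0)$.
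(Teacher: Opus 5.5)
There is a genuine gap. Your overall architecture is the paper's: Independence from PN Neutrality, one decisive family, intersection via a minimally inconsistent set of size $\ge 3$, and a principal ultrafilter. However, you never establish monotonicity of $\mathcal{W}$, and you never actually use the clause of $<_0$ requiring $\Gamma_0\cup\{\lnot\Phi,\Psi\}$ to be consistent.

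Because your Step 1 aims at $f^{-1}(\Psi)=C$ exactly, it yields only $\mathcal{W}_\Phi\subseteq\mathcal{W}_\Psi$. For that, your first assignment already suffices: rational extensions of $\Gamma_0\cup\{\Phi,\Psi\}$ on $C$ and of $\Gamma_0\cup\{\lnot\Phi,\lnot\Psi\}$ off $C$. These sets are consistent because $\Gamma_0\cup\{\Phi\}$ and $\Gamma_0\cup\{\lnot\Psi\}$ are proper subsets of a minimally inconsistent set, and any valuation satisfying them must satisfy $\Psi$, resp.\ $\lnot\Phi$. The extra clause plays no role there.

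Step 2, however, needs monotonicity. As written, your profile gives $f^{-1}(\Phi_2)=N\setminus C_1\notin\mathcal{W}$. With the intended profile ($C_1\cap C_2$ rejects only $\Phi_3$, $C_1\setminus C_2$ only $\Phi_2$, $N\setminus C_1$ only $\Phi_1$), $\Phi_2$ is accepted by $(C_1\cap C_2)\cup(N\setminus C_1)$. In general this strictly contains $C_2$, so concluding $\Phi_2\in F(f)$ requires monotonicity.

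Deriving monotonicity ``via the complement property'' is circular if it goes through intersection-closure, and otherwise impossible. Systematicity, the complement property, Unanimity, PN Neutrality, minimal-inconsistency links between all pairs, and a minimally inconsistent set of size $3$ together do not suffice. For example, take the propositional agenda $\{p,q,p\leftrightarrow q\}$ with negations. Every ordered pair of unnegated formulas is linked by a minimally inconsistent set of the form $\Gamma_0\cup\{\Phi,\lnot\Psi\}$, and $\{p,q,\lnot(p\leftrightarrow q)\}$ is minimally inconsistent; only the even-negation clause fails. For $n=3$, the parity rule (accept a formula iff an odd number of individuals do) preserves the constraint $x_p+x_q+x_{p\leftrightarrow q}\equiv 1 \pmod 2$. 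It is unanimous, PN-neutral and non-dictatorial. Its family of odd coalitions has the complement property but is neither monotone nor closed under intersection.

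The missing idea is to use the even-negation clause in Step 1 to get monotonicity, exactly as the paper does. For $C\in\mathcal{W}_\Phi$ and $C\subseteq D$, give $C$ a rational extension of $\Gamma_0\cup\{\Phi,\Psi\}$, $D\setminus C$ one of $\Gamma_0\cup\{\lnot\Phi,\Psi\}$, and $N\setminus D$ one of $\Gamma_0\cup\{\lnot\Phi,\lnot\Psi\}$. Then $f^{-1}(\Phi)=C$ and $f^{-1}(\Psi)=D$. Unanimity and $C\in\mathcal{W}_\Phi$ put $\Gamma_0\cup\{\Phi\}$ into $F(f)$, and the inconsistency of $\Gamma_0\cup\{\Phi,\lnot\Psi\}$ forces $\Psi\in F(f)$, so $D\in\mathcal{W}_\Psi$. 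This yields a single monotone family. After that, your corrected intersection argument and the principal-ultrafilter conclusion go through.
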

Given \Cref{th:P and DN is D}, the remaining task is to construct impossibility frames.
In the next section, the semantic reduction from \Cref{sec:semantic_reduction} will allow us to do so by converting modal consistency into a finite combinatorial problem.

\section{Impossibility frames with a propositional variable and modal operators}\label{sec:impossible agenda}
In this section, we construct impossibility frames for the modal agendas considered in this paper.
Our main result is the following theorem. 
For any integers $a,b \in \mathbb{Z}$ with $a \le b$, let $[a,b]=\{\overline{a},\overline{a+1},\cdots,\overline{b}\} \subseteq \mathbb{Z}/r\mathbb{Z}$.


\begin{theorem}\label{th:impossible}
	Suppose $r$, $k$, and $A$ satisfy the following conditions.
\begin{enumerate}
	\item $r$ and $k$ are coprime and $1 \le k <\dfrac{r}{3}$.
	\item $\{0, k\} \subseteq A \subseteq [0, k] \subseteq \mathbb{Z}/r\mathbb{Z}$.
\end{enumerate}
Then, the following statements hold.
\begin{enumerate}
	\item If $(W, R)$ is Frame 1 and $\ini$ is $x$, then $(W, R)$ is an impossibility frame under $\Gamma:=\{\Box^{j} p, \lnot \Box^{j} p\}_{j \ge 1}$.
	\item If $(W, R)$ is Frame 2 satisfying $k$-symmetry and $\ini$ is $0$, then $(W, R)$ is an impossibility frame under $\Gamma:=\{(\Box \Diamond)^{j} \Box p, \lnot (\Box \Diamond)^{j} \Box p\}_{j \ge 0}$. 
\end{enumerate}
\end{theorem}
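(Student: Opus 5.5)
The plan is to reduce both parts of \Cref{th:impossible} to one combinatorial family of ``windows'' on $\mathbb{Z}/r\mathbb{Z}$. Then we show that this family is minimally connected and strongly path-connected, and \Cref{th:P and DN is D} is not needed for the proof itself.

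\textbf{Reduction.} For $t\in\mathbb{Z}/r\mathbb{Z}$ write $U_t:=A+t\subseteq[t,t+k]$. Since $\{0,k\}\subseteq A$, each $U_t$ contains both $t$ and $t+k$.

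In Frame~1 every world of $\mathbb{Z}/r\mathbb{Z}$ has the unique successor $w+k$, and $x$ sees exactly $A$. By induction on $j$, $M\vDash x:\Box^{j}p$ holds iff $U_{(j-1)k}\subseteq V$.

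In Frame~2, iterating \Cref{th:calcu reduc pro} gives $M\vDash 0:(\Box\Diamond)^{j}\Box p$ iff $M\vDash jk:\Box p$. This holds iff $U_{jk}\subseteq V$.

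Since $\gcd(r,k)=1$, the starting points $(j-1)k$ (resp.\ $jk$) run through all of $\mathbb{Z}/r\mathbb{Z}$. Moreover, advancing $j$ by one shifts the start by $k$. So in both cases the unnegated agenda items are exactly the $U_t$, one per $t\in\mathbb{Z}/r\mathbb{Z}$.

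A judgment set accepting $\{U_s\}_{s\in S}$ and rejecting $\{U_t\}_{t\in T}$ is consistent iff no rejected $U_t$ is contained in $\bigcup_{s\in S}U_s$. For the forward direction take $V=\bigcup_{s\in S}U_s$. The converse is clear.

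\textbf{Minimal connectedness.} Fix $t$ and consider $C=\{t-k\}\cup\{t+1,\dots,t+k\}$. Each $U_{t+i}$ contains $t+i$, and $U_{t-k}$ contains $t$, so $\bigcup_{s\in C}U_s\supseteq[t,t+k]\supseteq U_t$. Shrink $C$ to a minimal subfamily $C'$ still covering $U_t$.

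The point $t$ is covered by no $U_{t+i}$ with $1\le i\le k$. These windows lie in $[t+1,t+2k]$, and $2k<r$, so they do not wrap around to $t$. Hence $t-k\in C'$.

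Next, $U_t$ is not contained in any single $U_s$ with $s\neq t$. Any window $[s,s+k]$ containing both $t$ and $t+k$ must have $s=t$, using $r>3k$ to exclude wrap-around. Therefore $|C'|\ge 2$.

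Thus $\{U_s\}_{s\in C'}\cup\{\lnot U_t\}$ is inconsistent. It is minimally so: dropping $\lnot U_t$ leaves only acceptances, which are consistent with $V=W$, and dropping any accepted $U_s$ uncovers $U_t$ by minimality of $C'$. This set has size $\ge 3$, which gives minimal connectedness.

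\textbf{Strong path-connectedness.} Take $\Phi=U_{t-k}$, $\Psi=U_t$ and $\Gamma_0=\{U_s\}_{s\in C'\setminus\{t-k\}}$. By the above, $\Gamma_0\cup\{\Phi,\lnot\Psi\}$ is minimally inconsistent.

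For the flipped set $\Gamma_0\cup\{\lnot\Phi,\Psi\}$, all accepted windows lie in $[t,t+2k]$. But $t-k\equiv t+(r-k)$ with $r-k>2k$, so $t-k\notin[t,t+2k]$. Hence $U_{t-k}$ is not covered and the set is consistent.

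So $U_{t-k}<_0U_t$ for every $t$, i.e.\ each formula is $<_0$-below the formula with index one higher. Following the cycle $j\mapsto j+1$ through all $r$ residues, which is a single cycle because $\gcd(r,k)=1$, connects any $\Phi$ to any $\Psi$.

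\textbf{Main obstacle.} I expect the delicate step to be the wrap-around bookkeeping. We must check that the windows $[t+1,t+2k]$, $[t-k,t]$ and $U_t$ interact only as in $\mathbb{Z}$. This is exactly where $k<r/3$ is used, and each containment claim above needs to be verified modulo $r$ with it. A smaller point is confirming that the syntactically infinite agenda indeed has one semantic class per residue, so that ``no negation operator'' formulas are precisely the $U_t$.
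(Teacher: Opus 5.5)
Your argument is correct and follows the paper's route. Both reduce the agendas to the translates $A+t$, use the covering criterion for consistency, and extract a minimal cover of $A+t$ in which one distinguished translate is forced and has a witness point outside all the others; both then propagate around the cycle using $\gcd(r,k)=1$. The only real difference is orientation: the paper covers $A+w$ by $A+w+k$ together with $A+w+(a-k)$, $a\in A\setminus\{k\}$, obtaining $P_{w+k}<_0P_w$ with witness $w+2k$. Your mirror-image cover, with distinguished window $U_{t-k}$ and witness $t-k$, gives the chain in the $+k$ direction, which works equally well since $k$ generates $\mathbb{Z}/r\mathbb{Z}$.
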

The proof proceeds by translating the modal agendas into a combinatorial structure on translates of $A$.
Using the semantic reduction from \Cref{sec:semantic_reduction}, we first rewrite the relevant modal formulas as propositions of the form $P_w$, where $P_w$ expresses that $A+w$ is contained in the valuation.
We then show that consistency of modal judgments is equivalent to a covering condition among these translates.
This allows us to identify a local overlap pattern that yields minimally inconsistent sets and one-step propagation.
Finally, using the coprimality of $r$ and $k$, we propagate this local structure around the cycle and obtain the strong path-connectedness required for impossibility.

Using \Cref{th:calcu reduc pro}, we rewrite the modal formulas in $\Gamma$ as propositions about translates of $A$.
Let $V$ denote the valuation of the Kripke model.
\begin{enumerate}
	\item Frame 1 : By the definition of Frame 1, we have $x : \Box \Phi \Leftrightarrow \land_{a \in A} a : \Phi$, and for all $w \in \mathbb{Z}/r\mathbb{Z}$, we have $w : \Box \Phi \Leftrightarrow w+\overline{k} : \Phi$. Therefore, for all $j \ge 1$, we have the following relation. We do not use Theorem \ref{th:calcu reduc pro} when $(W, R)$ is Frame 1. Also, we write $\overline{k}$ instead of abbreviating it as $k$ since $j$ is in $\mathbb{Z}$ and $\overline{k}$ is in $\mathbb{Z}/r\mathbb{Z}$.
	\begin{align*}
		x : \Box^{j} p &\Leftrightarrow \land_{a \in A} a : \Box^{j-1} p
		\Leftrightarrow \land_{a \in A} a+\overline{k} : \Box^{j-2} p \Leftrightarrow \cdots \Leftrightarrow \land_{a \in A} a+(j-1) \cdot \overline{k} : p \\
		&\Leftrightarrow \land_{a \in A} a+(j-1) \cdot \overline{k} \in V
		\Leftrightarrow A+(j-1) \cdot \overline{k} \subseteq V
	\end{align*}
		
	\item Frame 2 : From Theorem \ref{th:calcu reduc pro}, for all $j \in \mathbb{Z}_{\ge 0}$, we have the following relation.
	\begin{align*}
		0 : (\Box \Diamond)^{j} \Box p &\Leftrightarrow \overline{k} : (\Box \Diamond)^{(j-1)} \Box p
		\Leftrightarrow \cdots \Leftrightarrow j \cdot \overline{k} : \Box p
		\Leftrightarrow \land_{a \in A} j \cdot \overline{k}+a : p \\
		&\Leftrightarrow \land_{a \in A} j \cdot \overline{k}+a \in V
		\Leftrightarrow A+j \cdot \overline{k} \subseteq V
	\end{align*}
\end{enumerate}

	For all $w \in \mathbb{Z}/r\mathbb{Z}$, let the proposition $P_{w}$ mean $A+w \subseteq V$. From the above simplification, we regard $\Gamma$ as $\{P_{j \cdot \overline{k}}, \lnot P_{j \cdot \overline{k}}\}_{j \ge 0}$. Since $r$ and $k$ are coprime, for all $a \in \mathbb{Z}$, there exists some $b \in \mathbb{Z}$ and $c \in \mathbb{Z}_{\ge 0}$ such that $a=b \cdot r+c \cdot k$. Therefore, we have $a \equiv c \cdot k \mod r$, and thus $\{j \cdot \overline{k}\}_{j \ge 0}=\mathbb{Z}/r\mathbb{Z}$. Consequently, we regard $\Gamma$ as $\{P_{w}, \lnot P_{w}\}_{w \in \mathbb{Z}/r\mathbb{Z}}$. 
    This overlap-based mechanism plays a role analogous to shared propositional structure in the doctrinal paradox~\cite{Kornhauser1986, Pettit}.
    There, logical interconnections arise from overlap among propositional variables. Here, they arise from overlap among the translates $A+w$.

\begin{figure}[t]
\centering
\begin{tikzpicture}[ 
    every node/.style={circle, draw, minimum size=5mm}, scale=0.7,
    every edge/.style={draw, ->}
]

	\node (0) at (0,4) {$0$};
    	\node (1) at (1.73,3) {$1$};
	\node (2) at (1.73,1) {$2$};
	\node (3) at (0,0) {$3$};
	\node (4) at (-1.73,1) {$4$};
	\node (5) at (-1.73,3) {$5$};
	
    	\path (0) edge (1);
    	\path (1) edge (2);
    	\path (2) edge (3);
    	\path (3) edge (4);
    	\path (4) edge (5);
    	\path (5) edge (0);
    	\path (0) edge[loop above] (0);
    	\path (1) edge[loop right] (1);
    	\path (2) edge[loop right] (2);
    	\path (3) edge[loop below] (3);
    	\path (4) edge[loop left] (4);
    	\path (5) edge[loop left] (5);
	
	\draw[line width=1pt, red, rotate around={-30:(1,3.67)}] (1,3.67) ellipse [x radius=1.5cm, y radius=0.15cm];
	\draw[line width=1pt, blue, rotate around={90:(1.95, 2)}] (1.95, 2) ellipse [x radius=1.5cm, y radius=0.15cm];
	\draw[line width=1pt, rotate around={30:(1,0.33)}] (1,0.33) ellipse [x radius=1.5cm, y radius=0.15cm];
	
	\draw[line width=1pt, red, rotate around={-30:(-1,0.33)}] (-1,0.33) ellipse [x radius=1.5cm, y radius=0.15cm];
	\draw[line width=1pt, blue, rotate around={90:(-1.95, 2)}] (-1.95, 2) ellipse [x radius=1.5cm, y radius=0.15cm];
	\draw[line width=1pt, rotate around={30:(-1,3.67)}] (-1,3.67) ellipse [x radius=1.5cm, y radius=0.15cm];

\end{tikzpicture}
\caption{Frame 2 with $W=\mathbb{Z}/6\mathbb{Z}$, $k=1$, and $A=\{0,1\}$. The ellipses mean the sets of worlds $\{w,w+1\}$.}
\label{fig:logi chain}
\end{figure}


The following proposition characterizes the consistency of modal judgments in terms of a covering condition.
Its proof is deferred to Appendix \ref{sec:impossible agenda pro}.
\begin{prop}\label{lem:plus minus con}
	Let $J_{+}, J_{-} \subseteq \mathbb{Z}/r\mathbb{Z}$. Then $\Gamma_{0}:=\{P_{w}\}_{w \in J_{+}} \sqcup \{\lnot P_{w_{0}}\}_{w_{0} \in J_{-}}$ is consistent under $(W, R)$ if and only if for all $w_{0} \in J_{-}$, $A+w_{0} \not\subseteq \cup_{w \in J_{+}} A+w$.
\end{prop}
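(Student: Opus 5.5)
We argue both directions directly from the reformulation established just above: under either frame, the formulas of $\Gamma$ correspond to the propositions $P_w$, and $P_w$ holds in a model exactly when $A+w \subseteq V$. Consistency of $\Gamma_0$ therefore means there is a set $V \subseteq \mathbb{Z}/r\mathbb{Z}$ with $A+w \subseteq V$ for every $w \in J_+$ and $A+w_0 \not\subseteq V$ for every $w_0 \in J_-$. In Frame~1 the world $x$ may also be included in $V$ or not. This is harmless, because no formula of the agenda depends on the truth value at $x$: each $\Box^j p$ with $j\ge 1$ is evaluated only at worlds of $\mathbb{Z}/r\mathbb{Z}$.

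For the ``if'' direction, we take the smallest candidate valuation
\[ V := \bigcup_{w \in J_+} (A+w). \]
By construction $A+w \subseteq V$ for each $w\in J_+$, so every $P_w$ with $w \in J_+$ holds. For each $w_0 \in J_-$, the hypothesis $A+w_0 \not\subseteq V$ is exactly the statement that $\lnot P_{w_0}$ holds. Hence all of $\Gamma_0$ is satisfied at $\ini$, and $\Gamma_0$ is consistent.

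For the ``only if'' direction, we argue by contraposition. Suppose some $w_0 \in J_-$ satisfies $A+w_0 \subseteq \bigcup_{w\in J_+}(A+w)$, and let $V$ be any valuation making all $P_w$ with $w\in J_+$ true. Then $A+w \subseteq V$ for each such $w$, so $\bigcup_{w\in J_+}(A+w) \subseteq V$, and therefore $A+w_0 \subseteq V$. Thus $P_{w_0}$ holds, contradicting $\lnot P_{w_0}$. So no valuation satisfies $\Gamma_0$.

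The only step needing any care is the translation from modal formulas to the propositions $P_w$. For Frame~2 this rests on \Cref{th:calcu reduc pro}, and for Frame~1 on the direct unfolding of $\Box^j p$. In both frames the translation holds for every valuation $V$ and every relevant $j$, so satisfaction of the agenda at $\ini$ depends on $V$ only through the set $V\cap\mathbb{Z}/r\mathbb{Z}$. Once this is in place, both directions are immediate monotonicity arguments using the union as the minimal valuation.
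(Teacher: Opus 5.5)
Your proof is correct and follows the paper's own argument. In both, the valuation $V=\bigcup_{w\in J_+}(A+w)$ witnesses consistency, and the other direction is the same monotonicity observation that this union lies inside any $V$ satisfying all $P_w$ with $w\in J_+$; you phrase it contrapositively. Your explicit remark that the value of $p$ at $x$ in Frame~1 is irrelevant is a point the paper leaves implicit.
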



\begin{example}\label{ex:lem:plus minus con}
	Whether $(W, R)$ is Frame 1 or 2 makes no difference. Suppose $k=3$, $A=\{0,1,2,3\}$, and $J_{-}:=\{0\}$. $r$ is a natural integer larger than $9$ which is not divisible by $3$. 
    
    See Figure \ref{fig:pmcon}. Suppose $J_{+}=\{-3,3\}$. Then, we have $A+0 \setminus (\cup_{w \in J_{+}} A+w)=A+0 \setminus (A+(-3) \cup A+3)=\{1,2\} \neq \varnothing$. Set $V=A+(-3) \cup A+3=\{-3,-2,-1,0,3,4,5,6\}$, then $\lnot P_{0}$, $P_{-3}$, and $P_{3}$ are true. Therefore, $\Gamma_{0}=\{\lnot P_{0}, P_{-3}, P_{3}\}$ is consistent.

	 See Figure \ref{fig:pmincon}. Suppose $J_{+}=\{-1,2\}$. Then, we have $A+0 \subseteq A+(-1) \cup A+2=\cup_{w \in J_{+}} A+w$. If $P_{-1}$ and $P_{2}$ are true, we have $A+(-1) \subseteq V$ and $A+2 \subseteq V$. Thus, we have $A+0 \subseteq A+(-1) \cup A+2 \subseteq V$, and therefore $P_{0}$ is true. Consequently, $\Gamma_{0}=\{\lnot P_{0}, P_{-1}, P_{2}\}$ is inconsistent.

\begin{figure}[t]
\centering

\begin{subfigure}{0.48\textwidth}
\centering
\begin{tikzpicture}[x=6mm,y=6mm, line cap=round, line join=round, scale=1]

\tikzset{
  thickset/.style={line width=1.0pt, draw=black},
  redset/.style={line width=0.8pt, draw=red!70},
  a0fill/.style={draw=none, fill=gray!25},
  dotpt/.style={circle, fill=black, inner sep=0.6pt}
}

  \def\xL{-3}
  \def\xA{0}
  \def\xB{3}
  \def\xR{6}
  \def\xO{0}
  \def\xM{3}
  
  \draw[black, very thick] (\xO,0.5) -- (\xM,0.5);
  \draw[black, very thick] (\xO,0.75) -- (\xO,0.25);
  \draw[black, very thick] (\xM,0.75) -- (\xM,0.25);

  \draw[red!80!black, very thick] (\xL,1.5) -- (\xA,1.5);
  \draw[red!80!black, very thick] (\xB,1.5) -- (\xR,1.5);
  \draw[red!80!black, very thick] (\xL,1.75) -- (\xL,1.25);
  \draw[red!80!black, very thick] (\xA,1.75) -- (\xA,1.25);
  \draw[red!80!black, very thick] (\xB,1.75) -- (\xB,1.25);
  \draw[red!80!black, very thick] (\xR,1.75) -- (\xR,1.25);

  \node[red!80!black] at (-1.5,1.95) {$A+(-3)$};
  \node[red!80!black] at (4.5,1.95) {$A+3$};
  \node[black] at (1.5,0.95) {$A+0$};

  \node[black] at (\xL,-0.7) {$-3$};
  \node[black] at (\xA,-0.7) {$0$};
  \node[black] at (\xB,-0.7) {$3$};
  \node[black] at (\xR,-0.7) {$6$};

  \foreach \x in {-5,-4,...,8}
    \node[dotpt] at (\x,0) {};

  \fill (0,0) circle (0.2);

\end{tikzpicture}
\caption{$J_{+}=\{-3,3\}$}
\label{fig:pmcon}
\end{subfigure}
\hfill
\begin{subfigure}{0.48\textwidth}
\centering
\begin{tikzpicture}[x=6mm,y=6mm, line cap=round, line join=round, scale=1]

\tikzset{
  thickset/.style={line width=1.0pt, draw=black},
  redset/.style={line width=0.8pt, draw=red!70},
  a0fill/.style={draw=none, fill=gray!25},
  dotpt/.style={circle, fill=black, inner sep=0.6pt}
}

  \def\xL{-1}
  \def\xA{2}
  \def\xB{2}
  \def\xR{5}
  \def\xO{0}
  \def\xM{3}
  
  \draw[black, very thick] (\xO,0.5) -- (\xM,0.5);
  \draw[black, very thick] (\xO,0.75) -- (\xO,0.25);
  \draw[black, very thick] (\xM,0.75) -- (\xM,0.25);

  \draw[red!80!black, very thick] (\xL,1.5) -- (\xA,1.5);
  \draw[red!80!black, very thick] (\xB,2) -- (\xR,2);
  \draw[red!80!black, very thick] (\xL,1.7) -- (\xL,1.3);
  \draw[red!80!black, very thick] (\xA,1.7) -- (\xA,1.3);
  \draw[red!80!black, very thick] (\xB,2.2) -- (\xB,1.8);
  \draw[red!80!black, very thick] (\xR,2.2) -- (\xR,1.8);

  \node[red!80!black] at (0,2.15) {$A+(-1)$};
  \node[red!80!black] at (3.5,2.45) {$A+2$};
  \node[black] at (1.5,0.95) {$A+0$};

  \node[black] at (-3,-0.7) {$-3$};
  \node[black] at (0,-0.7) {$0$};
  \node[black] at (3,-0.7) {$3$};
  \node[black] at (6,-0.7) {$6$};

  \foreach \x in {-5,-4,...,8}
    \node[dotpt] at (\x,0) {};

  \fill (0,0) circle (0.2);

\end{tikzpicture}
\caption{$J_{+}=\{-1,2\}$}
\label{fig:pmincon}
\end{subfigure}

\caption{Examples of consistent and inconsistent choices of $J_{+}$}
\label{fig:pmcompare}
\end{figure}
\end{example}

The following definition isolates the set-theoretic structure that yields the required logical interconnections.

\begin{definition}\label{def:point min cov}
	Let $w_{0}, w_{1} \in \mathbb{Z}/r\mathbb{Z}$ and $S_{0} \subseteq \mathbb{Z}/r\mathbb{Z}$ with $|S_{0}| \ge 2$ and $w_{1} \in S_{0}$. The triple $(w_{0}, w_{1}, S_{0})$ is a pointed minimal cover if the following conditions are satisfied.
\begin{enumerate}
	\item Minimality: $\cup_{s \in S_{0}} A+s$ is a minimal cover of $A+w_{0}$. That is, for all $S_{1} \subseteq S_{0}$, $A+w_{0} \subseteq \cup_{s \in S_{1}} A+s$ if and only if $S_{1}=S_{0}$.
	\item Irreducibility: $A+w_{1} \not\subseteq \cup_{w \in (S_{0} \setminus \{w_{1}\}) \cup \{w_{0}\}} A+w$.
\end{enumerate}
\end{definition}


\begin{example}\label{ex:point min cov}
	See Figure \ref{fig: point min cov}. Note that whether $(W, R)$ is Frame 1 or 2 makes no difference. Suppose $k=3$ and $A=\{0,1,2,3\}$. $r$ is a natural integer larger than $9$ which is not divisible by $3$. Then $(0, k, \{-1,k\})$ is a pointed minimal cover. In fact, $A+(-1) \cup A+k$ is a minimal cover of $A+0=A$ (Minimality). Also, we have $2 \cdot k \in A+k \setminus (A+(-1) \cup A+0)=A+k \setminus (\cup_{s \in (\{-1, k\} \setminus \{k\}) \cup \{0\}} A+s)$ (Irreducibility).
	
\smallskip

\begin{figure}[t]
\centering
\begin{tikzpicture}[x=8mm,y=8mm, line cap=round, line join=round,scale=0.6]

\tikzset{
  thickset/.style={line width=1.0pt, draw=black},
  redset/.style={line width=0.8pt, draw=red!70},
  a0fill/.style={draw=none, fill=gray!25},
  dotpt/.style={circle, fill=black, inner sep=0.6pt}
}

  \def\xL{-1}
  \def\xA{2}
  \def\xB{3}
  \def\xR{6}
  \def\xO{0}
  \def\xM{3}
  
  \draw[black, very thick] (\xO,0.5) -- (\xM,0.5);
  \draw[black, very thick] (\xO,0.75) -- (\xO,0.25);
  \draw[black, very thick] (\xM,0.75) -- (\xM,0.25);

  \draw[red!80!black, very thick]  (\xL,1.5) -- (\xA,1.5);
  \draw[red!80!black, very thick]  (\xB,2) -- (\xR,2);
  \draw[red!80!black, very thick] (\xL,1.7) -- (\xL,1.3);
  \draw[red!80!black, very thick] (\xA,1.7) -- (\xA,1.3);
  \draw[red!80!black, very thick] (\xB,2.2) -- (\xB, 1.8);
  \draw[red!80!black, very thick] (\xR,2.2) -- (\xR,1.8);

  \node[red!80!black] at (0,2.15) {$A+(-1)$};
  \node[red!80!black] at (4.5,2.45) {$A+k$};
  \node[black] at (1.5,0.95) {$A+0$};
  \node[red!80!black] at (7,2.45) {$2 \cdot k$};

  \node[black] at (-3,-0.7) {$-3$};
  \node[black] at (0,-0.7) {$0$};
  \node[black] at (3,-0.7) {$3$};
  \node[black] at (6,-0.7) {$6$};
  
\foreach \x in {-5,-4,...,8}
    \node[dotpt] at (\x,0) {};
    
\fill (0,0) circle (0.2);
\fill [red] (6,2) circle (0.2);

\end{tikzpicture}
\caption{A pointed minimal cover $(0, k, \{-1, k\})$}
\label{fig: point min cov}
\end{figure}
\end{example}

	The following lemma translates Definition \ref{def:point min cov} into logical interconnections for impossibility. We prove this lemma in Appendix  \ref{sec:impossible agenda pro}.

\begin{lemma}\label{lem:min cov good}
If $(w_{0}, w_{1},S_{0})$ is a pointed minimal cover, then the following statements hold.
\begin{enumerate}
	\item \label{lem:min cov good 1} $\{\lnot P_{w_{0}}\} \cup \{P_{s}\}_{s \in S_{0}}$ is minimally inconsistent.
	\item $P_{w_{1}} <_{0} P_{w_{0}}$.
\end{enumerate}
\end{lemma}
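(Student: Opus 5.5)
Both parts follow from Proposition~\ref{lem:plus minus con}, which reduces consistency of sets of literals $P_w$, $\lnot P_w$ to a covering condition on the translates $A+w$. We use it in both directions and never return to the Kripke semantics.

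\textbf{Part 1.} Let $\Delta:=\{\lnot P_{w_0}\}\cup\{P_s\}_{s\in S_0}$. Take $J_+=S_0$ and $J_-=\{w_0\}$. Minimality with $S_1=S_0$ gives $A+w_0\subseteq\bigcup_{s\in S_0}A+s$, so $\Delta$ is inconsistent by Proposition~\ref{lem:plus minus con}. Now let $\Delta_0\subsetneq\Delta$ be a proper subset.
\begin{itemize}
\item If $\Delta_0$ omits $\lnot P_{w_0}$, then $J_-=\varnothing$ and the covering condition holds vacuously, so $\Delta_0$ is consistent.
\item Otherwise $\Delta_0=\{\lnot P_{w_0}\}\cup\{P_s\}_{s\in S_1}$ with $S_1\subsetneq S_0$. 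Minimality gives $A+w_0\not\subseteq\bigcup_{s\in S_1}A+s$, so $\Delta_0$ is consistent.
\end{itemize}
Hence $\Delta$ is minimally inconsistent. We also record the identity
\[
\Delta=\Gamma_0\cup\{P_{w_1},\lnot P_{w_0}\},\qquad \Gamma_0:=\{P_s\}_{s\in S_0\setminus\{w_1\}},
\]
which is valid because $w_1\in S_0$. One point needs care: $w_0\notin S_0$. If $w_0\in S_0$, then $\{w_0\}$ alone would cover $A+w_0$, so minimality would force $S_0=\{w_0\}$, contradicting $|S_0|\ge2$. Thus $P_{w_0}$ and $\lnot P_{w_0}$ do not both occur in $\Delta$, and $w_1\neq w_0$.

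\textbf{Part 2.} We need $\Gamma_0\cup\{P_{w_1},\lnot P_{w_0}\}$ minimally inconsistent, which is Part~1. We also need $\Gamma_0\cup\{\lnot P_{w_1},P_{w_0}\}$ consistent. Here $J_+=(S_0\setminus\{w_1\})\cup\{w_0\}$ and $J_-=\{w_1\}$. By Proposition~\ref{lem:plus minus con}, consistency is equivalent to
\[
A+w_1\not\subseteq\bigcup_{w\in(S_0\setminus\{w_1\})\cup\{w_0\}}A+w,
\]
which is exactly the Irreducibility condition.

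\textbf{Formal requirements.} The definition of $<_0$ asks that $P_{w_1}$ and $P_{w_0}$ carry no negation, which holds since both are positive literals. It also asks that they be (identified with) agenda formulas, which holds by the identification $\Gamma\cong\{P_w,\lnot P_w\}_{w\in\mathbb{Z}/r\mathbb{Z}}$ made above. Finally, Proposition~\ref{lem:plus minus con} is stated for disjoint index sets, and $J_+$, $J_-$ are disjoint in both applications since $w_0\notin S_0$ and $w_1\notin S_0\setminus\{w_1\}$.

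The only real obstacle is this bookkeeping, namely checking that $w_0\notin S_0$ and $w_0\neq w_1$. The rest is a direct application of Proposition~\ref{lem:plus minus con}.
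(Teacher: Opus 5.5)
Your proof is correct and follows the paper's argument essentially step for step: Part 1 uses the covering criterion of Proposition~\ref{lem:plus minus con} together with Minimality, and Part 2 takes $\Gamma_0=\{P_s\}_{s\in S_0\setminus\{w_1\}}$ and reads the required consistency directly off Irreducibility. Your extra check that $w_0\notin S_0$ (hence $w_1\neq w_0$) is valid bookkeeping that the paper leaves implicit, but the disjointness of $J_+$ and $J_-$ is not actually needed: the $\sqcup$ in Proposition~\ref{lem:plus minus con} is between sets of formulas that are automatically disjoint, and the proposition and its proof hold for arbitrary $J_+,J_-$.
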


The following proposition constructs pointed minimal covers.
Its proof is deferred to Appendix \ref{sec:impossible agenda pro}.
\begin{prop}\label{prop:ex of minimal cover}
	For all $w \in \mathbb{Z}/r\mathbb{Z}$, there exists a subset $S_{0} \subseteq \mathbb{Z}/r\mathbb{Z}$ with $|S_{0}| \ge 2$ and $w+k \in S_{0}$ such that $(w, w+k, S_{0})$ is a pointed minimal cover.
\end{prop}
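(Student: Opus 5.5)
By translation invariance of the covering conditions (replacing every shift $s$ by $s+w$ preserves all inclusions among the translates $A+s$), it suffices to treat $w=0$ and then translate the resulting $S_0$ by $w$. So the plan is to build $S_0$ containing $k$ such that $(0,k,S_0)$ is a pointed minimal cover. Throughout I work with integer representatives and use $3k<r$ to rule out wrap-around coincidences mod $r$.

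\textbf{Step 1: a non-minimal cover of $A$.} Since $k\in A$, each $a\in A$ lies in $A+(a-k)$. Put $T:=\{a-k \mid a\in A\setminus\{k\}\}$, so that $T\subseteq[-k,-1]$ as integers, and consider the family $T\cup\{k\}$.
\begin{itemize}
\item The translates $A+s$ with $s\in T$ cover $A\setminus\{k\}$, and $A+k$ covers $k$. Hence $A\subseteq\bigcup_{s\in T\cup\{k\}}A+s$.
\item For $s\in T$ we have $A+s\subseteq[-k,k-1]$. Since $k<r-k$, the residue $k$ does not lie in this interval mod $r$, so $k$ is covered \emph{only} by $A+k$.
\item We have $A+k\subseteq[k,2k]$ and $2k<r$, so $0\notin A+k$. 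Thus $A+k$ alone does not cover $A$.
\end{itemize}

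\textbf{Step 2: extract a minimal subcover.} The family $T\cup\{k\}$ is finite, so it contains an inclusion-minimal subset $S_0$ with $A\subseteq\bigcup_{s\in S_0}A+s$. By definition this $S_0$ satisfies the Minimality condition. By the second bullet of Step 1, $k\in S_0$. By the third bullet, $S_0\neq\{k\}$, so $|S_0|\ge 2$. Because $k<r/3$, the shifts in $[-k,-1]$ are distinct from $k$ mod $r$, so no identifications spoil this count.

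\textbf{Step 3: Irreducibility.} This is the step where the bound $k<r/3$ is really needed, and I expect it to be the main obstacle. The witness is $2k\in A+k$, and I must check that $2k$ lies in no translate $A+s$ with $s\in(S_0\setminus\{k\})\cup\{0\}$.
\begin{itemize}
\item Every such translate lies in $[-k,k]$ mod $r$: the ones with $s\in T$ lie in $[-k,k-1]$, and $A+0\subseteq[0,k]$.
\item Since $k<2k<r-k$, which is exactly $3k<r$, the residue $2k$ is not in $[0,k]$ and not in $[r-k,r-1]$.
\end{itemize}
Hence $A+k\not\subseteq\bigcup_{s\in(S_0\setminus\{k\})\cup\{0\}}A+s$.

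Translating $S_0$ by $w$ then gives the required pointed minimal cover $(w,w+k,S_0+w)$.
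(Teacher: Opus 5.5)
Your proof is correct and takes essentially the same route as the paper. Both arguments reduce to $w=0$ and cover $A$ by $A+k$ together with the translates $A+(a-k)$ for $a\in A\setminus\{k\}$, then extract a minimal subcover $S_0$. The same interval bounds force $k\in S_0$ and $|S_0|\ge 2$, and both use $2k\in A+k$ as the Irreducibility witness, which needs $3k<r$.
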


The proof of \Cref{th:impossible} relies on the partial overlap of the translates $A+w$.
\begin{proof}[Proof of Theorem \ref{th:impossible}]
	Recall the definition of an impossibility frame in Subsection \ref{subsec:JA}. In this proof, we write $\overline{k}$ instead of abbreviating it as $k$ since $j$ is in $\mathbb{Z}$ and $\overline{k}$ is in $\mathbb{Z}/r\mathbb{Z}$. From Proposition \ref{prop:ex of minimal cover}, for all $w \in \mathbb{Z}/r\mathbb{Z}$, choose $S_{0} \subseteq \mathbb{Z}/r\mathbb{Z}$ with $|S_{0}| \ge 2$ and $w+\overline{k} \in S_{0}$ such that $(w, w+\overline{k}, S_{0})$ is a pointed minimal cover. From Lemma \ref{lem:min cov good}, we have the following statements. Therefore, $(W, R)$ is an impossibility frame under $\Gamma$.
\begin{enumerate}
	\item For all $w \in \mathbb{Z}/r\mathbb{Z}$, $\{\lnot P_{w}\} \cup \{P_{s}\}_{s \in S_{0}}$ is minimally inconsistent and has more than two elements.
	\item For all $w \in \mathbb{Z}/r\mathbb{Z}$, we have $P_{w+\overline{k}} <_{0} P_{w}$. Since $r$ and $k$ are coprime, for all $w_{0}, w_{1} \in \mathbb{Z}/r\mathbb{Z}$, there exists some $j_{0} \in \mathbb{Z}_{\ge 1}$ such that $j_{0} \cdot \overline{k}=w_{1}-w_{0}$ in $\mathbb{Z}/r\mathbb{Z}$. Therefore, we have $P_{w_{1}}=P_{w_{0}+j_{0} \cdot \overline{k}} <_{0} P_{w_{0}+(j_{0}-1) \cdot \overline{k}} <_{0} \cdots <_{0} P_{w_{0}}$. Consequently, $\Gamma$ is strongly path-connected.
\end{enumerate}
\end{proof}

\section{Efficient non-dictatorial aggregation}\label{sec:altaggre}
The impossibility results of \Cref{sec:impossible agenda} show that proposition-wise independent aggregation fails even in a highly sparse modal setting.
This does not mean, however, that modal judgment aggregation is computationally or procedurally hopeless.
On the contrary, the semantic reduction of \Cref{sec:semantic_reduction} turns consistency checking into a small combinatorial covering problem, which makes constructive non-dictatorial aggregation possible.
The point of this section is not to circumvent the impossibility results of Section~4 under the same aggregation requirements.
Rather, we turn to constructive procedures that do not operate proposition-wise independently.
Our contribution in this section is to show that, in the present modal setting, the semantic reduction of Section~3.1 makes the required consistency checks efficient, and thereby yields efficiently implementable non-dictatorial aggregation procedures.

In this section, we exploit this reduction to obtain efficient aggregation procedures.
Subsection \ref{subsec:horn} presents a simple alternative based on forward chaining for Horn SAT \cite{Dowling1984}.
Subsection \ref{subsec:SAP} recalls the sequential majority procedure of \cite{Peleg}, and Subsections \ref{subsec:red} and \ref{subsec:red d1} show how Proposition \ref{lem:plus minus con} yields efficient implementations of its consistency-checking step.
The algorithm in Subsection \ref{subsec:red} applies to arbitrary $A$, whereas the algorithm in Subsection \ref{subsec:red d1} is restricted to $A=[0,k]$ but is more efficient.

Throughout this section, we assume the setting of \Cref{th:impossible}.
Rather than listing all individual judgment sets explicitly, we represent a profile by the counts $c(w):=\left|\{i\in N \mid P_w\in \Gamma_i=f(i)\}\right| (w\in \mathbb{Z}/r\mathbb{Z}),$
so the input consists of the frame parameters together with these acceptance counts.
We assume that $r$ and $k$ are given in binary, while $A\subseteq \mathbb{Z}/r\mathbb{Z}$ is given explicitly as a subset.
In the special case $A=[0,k]$, the set $A$ is determined by $r$ and $k$ and need not be given separately.

\subsection{Aggregation with forward chaining for Horn SAT}\label{subsec:horn}
We begin with a simple baseline procedure based on majority decisions, followed by Horn-style propagation.
Its purpose is to illustrate in a transparent way how the covering reduction can be used to enforce consistency efficiently.
The key point is that the covering relation from Proposition \ref{lem:plus minus con} induces Horn-type implications among the propositions $P_w$, so consistency can be enforced by forward propagation.
\begin{enumerate}
\item Prepare a list $\cover[w]\in\{0,1\}$ for all $w\in\mathbb{Z}/r\mathbb{Z}$, and initialize $\cover[w]=0$.
This list is used to represent a valuation $V$, where $\cover[w]=1$ and $\cover[w]=0$ mean $w\in V$ and $w\notin V$, respectively.

\item For each $w\in\mathbb{Z}/r\mathbb{Z}$, perform majority voting on $P_w$ and $\lnot P_w$ with anonymous tie-breaking.
In this paper, if $c(w)=n-c(w)$, then $P_w$ wins.
\begin{enumerate}
\item Suppose $P_w$ wins, that is, $c(w)\ge n-c(w)$.
Set $\cover[w_0]=1$ for all $w_0\in A+w$.
\item Suppose $\lnot P_w$ wins, that is, $c(w)<n-c(w)$.
Do nothing.
\end{enumerate}

\item Define the final valuation $V\subseteq W$ by $V=\{w\in\mathbb{Z}/r\mathbb{Z}\mid \cover[w]=1\}.$
The truth values of the propositions $P_w$ are then determined from this valuation.
\end{enumerate}
Step 1 takes $r$ steps.  Both Steps 2 and 3 take $r \cdot |A|$ steps. 
In total, the computation time is $O(r+r \cdot |A|+r \cdot |A|)=O(r \cdot |A|)$.

If $A+w \subseteq \cup_{w_{0} \in S} A+w_{0}$, then we have $(\lor_{w_{0} \in S} \lnot P_{w_{0}}) \lor P_{w}$. Thus, satisfying $\Gamma_{0}:=\{P_{w}\}_{w \in J_{+}} \cup \{\lnot P_{w_{0}}\}_{w_{0} \in J_{-}}$ corresponds to forward chaining for Horn SAT in \cite{Dowling1984}.

\subsection{The Sequential Majority Procedure}\label{subsec:SAP}
As a more principled non-dictatorial alternative, we consider the sequential majority procedure of Peleg and Zamir~\cite{Peleg}.
This procedure applies majority voting sequentially along a fixed exogenous order of the issues, represented here by a bijection
$\pi:\{0,1,\cdots,r-1\}\to\mathbb{Z}/r\mathbb{Z}$, where $\{0,1,\cdots,r-1\}\subseteq\mathbb{Z}$.
Our point is not to propose a new aggregation rule, but to show that, in our modal setting, its consistency-checking step admits efficient implementations.
Let $\Gamma$ be the agenda from \Cref{th:impossible}.

	The outputs are $J_{+}, J_{-} \subseteq \mathbb{Z}/r\mathbb{Z}$, which means that the group judgment is $\Gamma_{0}:=\{P_{w}\}_{w \in J_{+}} \cup \{\lnot P_{w_{0}}\}_{w_{0} \in J_{-}}$. First, set $J_{+}=J_{-}=\varnothing$.

\begin{enumerate}
	\item Choose between $P_{\pi(0)}$ and $\lnot P_{\pi(0)}$ by majority rule with anonymous tie-breaking and add $\pi(0)$ to $J_{+}$ or $J_{-}$.
	\item Suppose for all $0 \le q \le m-1$, either $\pi(q) \in J_{+}$ or $\pi(q) \in J_{-}$. 
	\begin{enumerate}
		\item \label{set a} If $\Gamma_{0} \sqcup \{\lnot P_{\pi(m)}\}$ is inconsistent, then add $\pi(m)$ to $J_{+}$.
		\item \label{set b} If $\Gamma_{0} \sqcup \{P_{\pi(m)}\}$ is inconsistent, then add $\pi(m)$ to $J_{-}$.
		\item \label{set c} Otherwise, choose between $P_{\pi(m)}$ and $\lnot P_{\pi(m)}$ by majority rule with anonymous tie-breaking and add $\pi(m)$ to $J_{+}$ or $J_{-}$.
	\end{enumerate}
\end{enumerate}

The final $J_{+}$ and $J_{-}$ are the outputs.

\medskip

	The bottleneck step in the above algorithm is consistency checking. To address this issue, Subsections \ref{subsec:red} and \ref{subsec:red d1} present two efficient algorithms by using Proposition \ref{lem:plus minus con}. The algorithm in Subsection \ref{subsec:red} covers arbitrary $A$. On the other hand, the algorithm in Subsection \ref{subsec:red d1} covers only $A=[0, k]$, but is more efficient than the one in Subsection \ref{subsec:red}.

\subsection{Sequential Majority Procedure in general}\label{subsec:red}

	In the sequential majority procedure, the set $\Gamma_{0}:=\{P_{w}\}_{w \in J_{+}} \cup \{\lnot P_{w_{0}}\}_{w_{0} \in J_{-}}$ is always consistent. In this subsection, we present an algorithm to determine whether $\Gamma_{0} \sqcup \{P_{\pi(m)}\}$ and $\Gamma_{0} \sqcup \{\lnot P_{\pi(m)}\}$ are consistent or not. The following two lemmata help reduce the computation time.

\begin{lemma}\label{lem:plus judge}
If $\Gamma_{0}:=\{P_{w}\}_{w \in J_{+}} \cup \{\lnot P_{w_{0}}\}_{w_{0} \in J_{-}}$ is consistent, then the following conditions are equivalent.
\begin{enumerate}
	\item \label{lem:plus judge 1} $\Gamma_{0} \sqcup \{\lnot P_{\pi(m)}\}$ is inconsistent.
	\item \label{lem:plus judge 2} $A+\pi(m) \subseteq \cup_{w \in J_{+}} A+w$.
\end{enumerate}
\end{lemma}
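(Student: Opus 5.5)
The plan is to apply Proposition~\ref{lem:plus minus con} directly to the enlarged set $\Gamma_{0}\sqcup\{\lnot P_{\pi(m)}\}$, and to use the hypothesis that $\Gamma_{0}$ is consistent to discard every covering condition except the new one. Write $J_{-}':=J_{-}\sqcup\{\pi(m)\}$. The enlarged set has exactly the form $\{P_{w}\}_{w\in J_{+}}\sqcup\{\lnot P_{w_{0}}\}_{w_{0}\in J_{-}'}$. Proposition~\ref{lem:plus minus con} therefore says it is consistent if and only if
\[
A+w_{0}\not\subseteq \cup_{w\in J_{+}} A+w \quad \text{for every } w_{0}\in J_{-}'.
\]

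Next, apply Proposition~\ref{lem:plus minus con} to $\Gamma_{0}$ itself. Since $\Gamma_{0}$ is consistent, the non-covering condition already holds for every $w_{0}\in J_{-}$. So consistency of the enlarged set is equivalent to the single remaining condition $A+\pi(m)\not\subseteq\cup_{w\in J_{+}}A+w$. Negating both sides gives the equivalence of \ref{lem:plus judge 1} and \ref{lem:plus judge 2}.

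There is one degenerate case to address. In the sequential procedure $\pi(m)$ is new, so $\pi(m)\notin J_{+}\cup J_{-}$, and the disjoint union is well defined. If one still wants $\pi(m)\in J_{+}$, then the set contains both $P_{\pi(m)}$ and $\lnot P_{\pi(m)}$ and is trivially inconsistent. Condition~\ref{lem:plus judge 2} also holds trivially in that case, so the equivalence survives. If $\pi(m)\in J_{-}$, nothing changes on either side.

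No step is a real obstacle here. The only care needed is to confirm that Proposition~\ref{lem:plus minus con} applies to arbitrary $J_{+}$ and $J_{-}$, including empty ones. When $J_{+}=\varnothing$ the union is empty, and a nonempty $A+\pi(m)$ is never contained in it. This matches the fact that $\lnot P_{\pi(m)}$ alone is satisfiable.
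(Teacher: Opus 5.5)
Your proof is correct and is essentially the paper's argument: both apply Proposition~\ref{lem:plus minus con} to $J_{-}\cup\{\pi(m)\}$ and use the consistency of $\Gamma_{0}$ to rule out a covering at any $w_{0}\in J_{-}$, so only the covering condition at $\pi(m)$ remains. Your remarks on the degenerate cases $\pi(m)\in J_{+}$ and $\pi(m)\in J_{-}$, and on $J_{+}=\varnothing$, are accurate, though the paper's setting does not need them.
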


\begin{proof}
	From Proposition \ref{lem:plus minus con}, we have $(\ref{lem:plus judge 2}) \rightarrow (\ref{lem:plus judge 1})$. We prove the converse. Suppose $(\ref{lem:plus judge 1})$. From Proposition \ref{lem:plus minus con}, there exists some $w_{1} \in J_{-} \cup \{\pi(m)\}$ such that $A+w_{1} \subseteq \cup_{w \in J_{+}} A+w$. We prove $w_{1}=\pi(m)$. Assume $w_{1} \in J_{-}$. From Proposition \ref{lem:plus minus con},  $\Gamma_{0}$ is inconsistent, which contradicts the consistency of $\Gamma_{0}$. Therefore, we have $w_{1}=\pi(m)$ and finish the proof.
\end{proof}

	Intuitively, Lemma \ref{lem:minus judge} means that we have to work only near $A+\pi(m)$.

\begin{lemma}\label{lem:minus judge}
If $\Gamma_{0}:=\{P_{w}\}_{w \in J_{+}} \cup \{\lnot P_{w_{0}}\}_{w_{0} \in J_{-}}$ is consistent, then the following conditions are equivalent.
\begin{enumerate}
	\item \label{lem:minus judge 1} $\Gamma_{0} \sqcup \{P_{\pi(m)}\}$ is inconsistent.
	\item \label{lem:minus judge 2} There exists some $w_{0} \in J_{-}$ with $A+w_{0} \cap A+\pi(m) \neq \varnothing$ such that $A+w_{0} \subseteq \cup_{w \in J_{+} \cup \{\pi(m)\}} A+w$.
\end{enumerate}
\end{lemma}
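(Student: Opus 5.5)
The plan is to apply Proposition~\ref{lem:plus minus con} to the enlarged set $\Gamma_{0} \sqcup \{P_{\pi(m)}\}$. Its positive index set is $J_{+} \cup \{\pi(m)\}$ and its negative index set is $J_{-}$. The proposition then says that $\Gamma_{0} \sqcup \{P_{\pi(m)}\}$ is inconsistent if and only if there is some $w_{0} \in J_{-}$ with $A+w_{0} \subseteq \cup_{w \in J_{+} \cup \{\pi(m)\}} A+w$. This already gives the direction $(\ref{lem:minus judge 2}) \rightarrow (\ref{lem:minus judge 1})$, since condition~(\ref{lem:minus judge 2}) simply asks for such a $w_{0}$ together with an extra intersection property. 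So the only work is the converse: we must show that any witness $w_{0}$ produced by the proposition automatically satisfies $A+w_{0} \cap A+\pi(m) \neq \varnothing$.

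For the converse, assume (\ref{lem:minus judge 1}) and take $w_{0} \in J_{-}$ with $A+w_{0} \subseteq \cup_{w \in J_{+} \cup \{\pi(m)\}} A+w$, as given by Proposition~\ref{lem:plus minus con}. Suppose, for contradiction, that $A+w_{0} \cap A+\pi(m) = \varnothing$. Then no point of $A+w_{0}$ lies in $A+\pi(m)$, so the cover cannot use that translate. Hence $A+w_{0} \subseteq \cup_{w \in J_{+}} A+w$.

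Applying Proposition~\ref{lem:plus minus con} once more, now to $\Gamma_{0}$ itself (positive set $J_{+}$, negative set $J_{-}$), this inclusion with $w_{0} \in J_{-}$ shows that $\Gamma_{0}$ is inconsistent. That contradicts the hypothesis that $\Gamma_{0}$ is consistent. Therefore $A+w_{0} \cap A+\pi(m) \neq \varnothing$, and (\ref{lem:minus judge 2}) holds.

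The argument mirrors the proof of Lemma~\ref{lem:plus judge}, and there is no real obstacle. The only point needing care is the set-theoretic step: a subset disjoint from $A+\pi(m)$ that is contained in the union over $J_{+} \cup \{\pi(m)\}$ must already be contained in the union over $J_{+}$ alone.
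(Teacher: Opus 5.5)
Your proposal is correct and follows essentially the same route as the paper's proof: Proposition~\ref{lem:plus minus con} applied to the enlarged set gives $(\ref{lem:minus judge 2}) \rightarrow (\ref{lem:minus judge 1})$ directly. For the converse, you show that a witness $w_{0}$ whose translate is disjoint from $A+\pi(m)$ would already be covered by the translates indexed by $J_{+}$, which contradicts the consistency of $\Gamma_{0}$.
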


\begin{proof}
	From Proposition \ref{lem:plus minus con}, we have $(\ref{lem:minus judge 2}) \rightarrow (\ref{lem:minus judge 1})$. We prove the converse. Suppose $(\ref{lem:minus judge 1})$. From Proposition \ref{lem:plus minus con}, there exists some $w_{0} \in J_{-}$ such that $A+w_{0} \subseteq \cup_{w \in J_{+} \cup \{\pi(m)\}} A+w$. We prove $A+w_{0} \cap A+\pi(m) \neq \varnothing$. Assume $A+w_{0} \cap A+\pi(m)=\varnothing$. Then, we have $A+w_{0} \subseteq \cup_{w \in J_{+}} A+w$. From Proposition \ref{lem:plus minus con}, $\Gamma_{0}$ is inconsistent, which contradicts the consistency of $\Gamma_{0}$. Therefore, we have $A+w_{0} \cap A+\pi(m) \neq \varnothing$.
\end{proof}

	We estimate the computation time in step (\ref{set a}), (\ref{set b}) and (\ref{set c}). Prepare the list $\cover[w] \in \{0,1\}$, where $w \in \mathbb{Z}/r\mathbb{Z}$. First, we set $\cover[w]=0$ for all $w \in \mathbb{Z}/r\mathbb{Z}$. 
    After the procedure, for all $w \in \mathbb{Z}/r\mathbb{Z}$, $\cover[w]=1$ means $w \in \cup_{w_{0} \in J_{+}} A+w_{0}$.

	In step (\ref{set a}), from Lemma \ref{lem:plus judge}, it is enough to check if $A+\pi(m) \subseteq \cup_{w_{0} \in J_{+}} A+w_{0}$. By using the list $\cover[w]$, step (\ref{set a}) takes $|A+\pi(m)|=|A|$ steps. If $A+\pi(m) \subseteq \cup_{w_{0} \in J_{+}} A+w_{0}$, then $\Gamma_{0} \sqcup \{\lnot P_{\pi(m)}\}$ is inconsistent. Therefore, we add $\pi(m)$ to $J_{+}$. From $A+\pi(m) \subseteq \cup_{w_{0} \in J_{+}} A+w_{0}$, we need not update the list $\cover[w]$.
	
	In step (\ref{set b}), from Lemma \ref{lem:minus judge}, it is enough to check if there exists some $w_{0} \in J_{-}$ with $A+w_{0} \cap A+\pi(m) \neq \varnothing$ such that $A+w_{0} \subseteq \cup_{w \in J_{+} \cup \{\pi(m)\}} A+w$. We have the following equivalence transformation.
	\begin{align*}
		& A+w_{0} \cap A+\pi(m) \neq \varnothing \quad \Leftrightarrow \exists a_{0}, a_{1} \in A, a_{0}+w_{0}=a_{1}+\pi(m) \\
		\Leftrightarrow \quad & \exists a_{0}, a_{1} \in A, w_{0}=a_{1}-a_{0}+\pi(m) \quad \Rightarrow w_{0} \in [-k, k]+\pi(m) \quad (A \subseteq [0,k])
	\end{align*}
	We consider two cases.
	\begin{enumerate}
		\item If $|A| \le (2 \cdot k)^{1/2}$. From $\exists a_{0}, a_{1} \in A, w_{0}=a_{1}-a_{0}+\pi(m)$, the number of candidates of $w_{0} \in J_{-}$ is $|A|^2$. Therefore, step (\ref{set b}) takes $|A|^2 \cdot |A+w_{0}|=|A|^3$ steps.
		\item If $|A| \ge (2 \cdot k)^{1/2}$. From $w_{0} \in [-k, k]+\pi(m)$, the number of candidates of $w_{0} \in J_{-}$ is $|[-k, k]| \approx 2 \cdot k$. Therefore, step (\ref{set b}) takes $(2 \cdot k) \cdot |A+w_{0}|=2 \cdot k \cdot |A|$ steps.
	\end{enumerate}
	
	In step (\ref{set c}), majority rule takes one step. If we add $\pi(m)$ to $J_{+}$, we substitute $1$ into $\cover[w]$ for all $w \in A+\pi(m)$. If we add $\pi(m)$ to $J_{-}$, then do nothing. Therefore, step (\ref{set c}) takes $1+|A+\pi(m)| \approx |A|$ steps.
	
	Since we execute these processes $r$ times, the computation time is $O(r \cdot \min \{|A|^3,k \cdot |A|\})$ as follows.

\subsection{Sequential Majority Procedure in the case $A=[0,k]$}\label{subsec:red d1}

	In this subsection, we present an algorithm for $A=[0,k]$, which is more efficient than the one in Subsection \ref{subsec:red}.
    
    Prepare the list $\cvle[w], \cvri[w]$ for all $w \in \mathbb{Z}/r\mathbb{Z}$. Define
	\begin{align*}
		\cvle[w]:=\min \{t \in [0, k] \mid w-t \in J_{+}\}, \quad
		\cvri[w]:=\min \{t \in [0, k] \mid w+t \in J_{+}\}. \\
	\end{align*}
	If the corresponding sets are empty, define $\cvle[w]=k+1$ or $\cvri[w]=k+1$.
	
	Intuitively, $w-\cvle[w] \in J_{+}$ (resp. $w+\cvri[w] \in J_{+}$) is the closest index to $w$ within distance $k$ on the left (resp. right). Equivalently, among the intervals $\{A+s\}_{s \in J_{+}}$ which intersect $A+w$ from the left (resp. right), $A+(w-\cvle[w])$ is the rightmost one and $A+(w+\cvri[w])$ is the leftmost one. 
	
	The following lemma uses these two lists to convert the set cover problem into a simple numerical condition. In Figure \ref{fig:algo two inter}, $w_{0}$ and $w_{1}$ equal $w-\cvle[w]$ and $w+\cvri[w]$, respectively.
	
\begin{lemma}\label{lem:check one}
	For all $w \in \mathbb{Z}/r\mathbb{Z}$, $A+w \subseteq \cup_{w_{0} \in J_{+}} A+w_{0}$ if and only if $\cvle[w]+\cvri[w] \le k+1$. 
\end{lemma}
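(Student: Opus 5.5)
We work with $A=[0,k]$, so each translate $A+s=[s,s+k]$ is an interval of $k+1$ consecutive residues. Since $k<r/3$, all intervals in play are short arcs of the cycle. This lets us argue as if on the integer line $[w-k,\,w+2k]$ around $A+w$, which has length $3k+1\le r$ and so does not wrap around. The plan is to show that only two members of $J_{+}$ matter for covering $A+w=[w,w+k]$. These are the rightmost interval starting at or before $w$, namely the one starting at $w_0=w-\cvle[w]$, and the leftmost interval starting at or after $w$, namely the one starting at $w_1=w+\cvri[w]$.

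For the direction ($\Leftarrow$), suppose $\cvle[w]+\cvri[w]\le k+1$. Then both values are at most $k+1$. If either equals $0$, then $w\in J_{+}$ and $A+w$ is trivially covered. If either equals $k+1$, the other is $0$, which is the same case. Otherwise both lie in $[1,k]$, so $w_0,w_1\in J_{+}$. The interval $[w_0,w_0+k]$ covers $[w,\,w+k-\cvle[w]]$, and $[w_1,w_1+k]$ covers $[w+\cvri[w],\,w+k]$. These two pieces together cover $[w,w+k]$ exactly when $\cvri[w]\le k-\cvle[w]+1$, which is the hypothesis.

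For the direction ($\Rightarrow$), suppose $A+w\subseteq\bigcup_{s\in J_{+}}A+s$. Assume $w\notin J_{+}$, since otherwise $\cvle[w]=0$ and the claim is trivial. The point $w$ itself must be covered by some $[s,s+k]$ with $s\in[w-k,w-1]$, so $\cvle[w]\le k$. Now consider the point $p=w+k-\cvle[w]+1$. It lies in $A+w$ and, by maximality of $w_0$, it is not covered by any interval starting in $[w-k,w]$. Any interval starting at or before $w-k-1$ ends before $w$, so it cannot cover $p$ either; here $k<r/3$ ensures no wrap-around. Hence $p$ is covered by an interval starting in $[w+1,\,p]$. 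By minimality of $\cvri[w]$ we get $w+\cvri[w]\le p$, that is, $\cvle[w]+\cvri[w]\le k+1$.

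The main obstacle is to justify rigorously that arithmetic in $\mathbb{Z}/r\mathbb{Z}$ behaves like integer arithmetic on this window. We must rule out intervals reaching $A+w$ "from the other side" of the cycle, and we must make sure the order comparisons in the definitions of $\cvle$ and $\cvri$ are meaningful. I would handle this by lifting everything to representatives in $\{w-k,\dots,w+2k\}$ and using $3k+1\le r$ to ensure these residues are distinct. With that in place, every translate $A+s$ meeting $A+w$ has $s\in[-k,k]+w$, the bound already used in the earlier analysis of step (b).
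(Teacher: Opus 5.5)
Your proof is correct and is essentially the paper's own argument. The paper gives no formal proof of this lemma, only the remark that $A+(w-\cvle[w])$ is the rightmost translate from $J_{+}$ meeting $A+w$ from the left and $A+(w+\cvri[w])$ the leftmost one meeting it from the right (Figure~\ref{fig:algo two inter}), and your two directions, including the wrap-around check via representatives in a non-wrapping window that the paper leaves implicit, are a rigorous version of that picture (incidentally, $2k+1\le r$ already suffices for the no-wrap step).
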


\begin{figure}[t]
\centering
\begin{tikzpicture}[x=8mm,y=8mm, line cap=round, line join=round,scale=0.6]
  \def\xL{-3}   
  \def\xR{10.5}    
  \def\xO{0.0}    
  \def\xM{6.0}    
  \def\xA{3}    
  \def\xB{4.5}    

  \draw[black, very thick] (\xO,0) -- (\xM,0);
  \draw[black, very thick] (\xO,0.22) -- (\xO,-0.22);
  \draw[black, very thick] (\xM,0.22) -- (\xM,-0.22);
  \node[below,font=\tiny] at (\xO,-0.25) {$w$};
  \node[below,font=\tiny] at (\xM,-0.25) {$w+k$};
  \node[,font=\tiny] at ({(\xO+\xM)/2},-0.45) {$A+w$};

  \draw[red!80!black, very thick]  (\xL,1.2) -- (\xA,1.2);
  \draw[red!80!black, very thick]  (\xB,1.2) -- (\xR,1.2);
  \draw[red!80!black, very thick] (\xL,1.42) -- (\xL,0.98);
  \draw[red!80!black, very thick] (\xA,1.42) -- (\xA,0.98);
  \draw[red!80!black, very thick] (\xB,1.42) -- (\xB,0.98);
  \draw[red!80!black, very thick] (\xR,1.42) -- (\xR,0.98);

  \node[red!80!black,font=\tiny] at ({(\xL+\xL+\xA)/3},1.55) {$A+w_0$};
  \node[red!80!black,font=\tiny] at ({(\xB+\xR)/2},1.55) {$A+w_1$};

  \node[red!80!black,font=\tiny] at (\xL,0.1) {$w_{0}$};
  \node[red!80!black,font=\tiny] at (\xA,1.7) {$w_{0}+k$};
  \node[red!80!black,font=\tiny] at (\xB,0.5) {$w_{1}$};
  \node[red!80!black,font=\tiny] at (\xR,0.1) {$w_{1}+k$};
\end{tikzpicture}
\caption{A covering of $A+w$}
\label{fig:algo two inter}
\end{figure}

	We estimate the computation time in step (\ref{set a}), (\ref{set b}) and (\ref{set c}). First, we set $\cvle[w]=\cvri[w]=k+1$ for all $w \in \mathbb{Z}/r\mathbb{Z}$. 

	In step (\ref{set a}), from Lemma \ref{lem:plus judge}, it is enough to check if $A+\pi(m) \subseteq \cup_{w \in J_{+}} A+w$. From Lemma \ref{lem:check one}, it is enough to check if $\cvle[\pi(m)]+\cvri[\pi(m)] \le k+1$. This checking takes one step. If we add $\pi(m)$ to $J_{+}$, we must update the list $\cvle[w]$ and $\cvri[w]$. It is enough to do so for $w \in \mathbb{Z}/r\mathbb{Z}$ such that $A+w \cap A+\pi(m) \neq \varnothing$. $A+w \cap A+\pi(m) \neq \varnothing$ is equivalent to $w \in \pi(m)+[-k, k]$. Therefore, step (\ref{set a}) takes $2 \cdot k$ steps.
	
	In step (\ref{set b}), from Lemma \ref{lem:minus judge}, it is enough to check if there exists some $w_{0} \in J_{-}$ with $A+w_{0} \cap A+\pi(m) \neq \varnothing$ such that $A+w_{0} \subseteq \cup_{w \in J_{+} \cup \{\pi(m)\}} A+w$. By Lemma \ref{lem:check one}, this condition can be rewritten as follows. Note that from the definition of $J_{-}$ and $\pi(m)$, we have $w_{0} \neq \pi(m)$.
\begin{enumerate}
	\item See Figure \ref{fig:algo-two-inter-cases-minus}. Suppose $w_{0}-\pi(m) \in [-k, -1]$. If we add $\pi(m)$ to $J_{+}$, then $\cvle[w_{0}]$ and $\cvri[w_{0}]$ change to $\cvle[w_{0}]$ and $\min(\cvri[w_{0}], \pi(m)-w_{0})$, respectively. Therefore, it is enough to check if $\cvle[w_{0}]+\min(\cvri[w_{0}], \pi(m)-w_{0}) \le k+1$.
\item See Figure \ref{fig:algo-two-inter-cases-plus}. Suppose $w_{0}-\pi(m) \in [1, k]$. If we add $\pi(m)$ to $J_{+}$, then $\cvle[w_{0}]$ and $\cvri[w_{0}]$ change to $\min(\cvle[w_{0}], w_{0}-\pi(m))$ and $\cvri[w_{0}]$, respectively. Therefore, it is enough to check $\min(\cvle[w_{0}], w_{0}-\pi(m))+\cvri[w_{0}] \le k+1$.
\end{enumerate}
Therefore, step (\ref{set b}) takes $2 \cdot k$ steps.

\begin{figure}[t]
  \centering

  \begin{subfigure}[t]{0.48\textwidth}
    \centering
    \begin{tikzpicture}[x=8mm,y=8mm, line cap=round, line join=round,scale=0.6]
      \def\xL{-3}
      \def\xR{10.5}
      \def\xO{0}
      \def\xM{6}
      \def\xA{3}
      \def\xB{4.5}

      \draw[red!80!black, very thick] (\xO,0) -- (\xM,0);
      \draw[red!80!black, very thick] (\xO,0.22) -- (\xO,-0.22);
      \draw[red!80!black, very thick] (\xM,0.22) -- (\xM,-0.22);

      \node[red!80!black, below,font=\tiny] at (\xO,-0.25) {$\pi(m)$};
      \node[red!80!black, below,font=\tiny] at (\xM,-0.25) {$\pi(m)+k$};
      \node[red!80!black, font=\tiny] at ({(\xO+\xM)/2},-0.45) {$A+\pi(m)$};

      \draw[black, very thick] (\xL,1.2) -- (\xA,1.2);
      \draw[black, very thick] (\xL,1.42) -- (\xL,0.98);
      \draw[black, very thick] (\xA,1.42) -- (\xA,0.98);

      \node[black,font=\tiny] at ({(\xL+\xA)/2},1.55) {$A+w_0$};
      \node[black,font=\tiny] at (\xL,0.5) {$w_0$};
      \node[black,font=\tiny] at (\xA,1.7) {$w_0+k$};
    \end{tikzpicture}
    \caption{$w_{0}-\pi(m)\in[-k,-1]$}
    \label{fig:algo-two-inter-cases-minus}
  \end{subfigure}
  \hfill
  \begin{subfigure}[t]{0.48\textwidth}
    \centering
    \begin{tikzpicture}[x=8mm,y=8mm, line cap=round, line join=round,scale=0.6]
      \def\xL{-3}
      \def\xR{10.5}
      \def\xO{0}
      \def\xM{6}
      \def\xA{3}
      \def\xB{4.5}

      \draw[red!80!black, very thick] (\xO,0) -- (\xM,0);
      \draw[red!80!black, very thick] (\xO,0.22) -- (\xO,-0.22);
      \draw[red!80!black, very thick] (\xM,0.22) -- (\xM,-0.22);

      \node[red!80!black, below,font=\tiny] at (\xO,-0.25) {$\pi(m)$};
      \node[red!80!black, below,font=\tiny] at (\xM,-0.25) {$\pi(m)+k$};
      \node[red!80!black, font=\tiny] at ({(\xO+\xM)/2},-0.45) {$A+\pi(m)$};

      \draw[black, very thick] (\xB,1.2) -- (\xR,1.2);
      \draw[black, very thick] (\xB,1.42) -- (\xB,0.98);
      \draw[black, very thick] (\xR,1.42) -- (\xR,0.98);

      \node[black,font=\tiny] at ({(\xB+\xR)/2},1.55) {$A+w_0$};
      \node[black,font=\tiny] at (\xB,0.5) {$w_0$};
      \node[black,font=\tiny] at (\xR,0.5) {$w_0+k$};
    \end{tikzpicture}
    \caption{$w_{0}-\pi(m)\in[1,k]$}
    \label{fig:algo-two-inter-cases-plus}
  \end{subfigure}

  \caption{Two cases of the relative position of $A+w_0$ and $A+\pi(m)$.}
  \label{fig:algo-two-inter-cases}
\end{figure}
	
	In step (\ref{set c}), majority rule takes one step. If we add $\pi(m)$ to $J_{+}$, we must update the list $\cvle[w]$ and $\cvri[w]$. By the same argument as in step (\ref{set a}), step (\ref{set c}) takes $2 \cdot k$ steps. If we add $\pi(m)$ to $J_{-}$, then do nothing.
	
	Since we execute these processes $r$ times, the computation time is $O(r \cdot (k+2 \cdot k+2 \cdot k))=O(r \cdot k)$.

\section{Concluding Remarks}\label{sec:conclusion}
This paper shows that Arrow-type impossibility does not disappear when judgment aggregation is restricted to genuinely modal judgments.  
Even after setting aside fact-based aggregation, very sparse modal patterns already generate enough logical interconnection to force dictatorship.  
At the technical level, our results identify a concrete route from frame geometry to impossibility: semantic reduction turns modal formulas into a finite combinatorial form, and local overlap structure then yields the global connectivity required for dictatorship.  
The same reduction also provides the algorithmic leverage needed for efficient implementations of non-dictatorial aggregation procedures.

More broadly, our results clarify the scope of Arrow-type impossibility in computational social choice.  
They show that this phenomenon is not tied only to propositional agendas built from bare factual judgments.  
Even in a setting where the objects of aggregation are genuinely modal judgments, semantic structure alone can generate the interconnections needed for dictatorship.  
This suggests that modal semantics should be viewed not merely as a richer language for aggregation, but as a direct source of aggregation-theoretic obstruction.



\appendix

\begin{appendices}
The appendices are organized as follows.
Appendix \ref{sec:related} provides additional background and related work, while Appendices B--E collect definitions and proofs deferred from the main text.

\section{Additional background and related work}\label{sec:related}
\paragraph{Classical impossibility and abstraction.}
Logical interconnections have long been a central source of impossibility in social choice and judgment aggregation.
As shown in Table~\ref{tab:condorcet}, Condorcet's paradox \cite{Con} shows that issue-by-issue majority voting can turn individually rational preferences into a cyclic social preference.
Arrow \cite{Arrow} then shifted attention from a single paradoxical rule to an axiomatic analysis of all aggregation rules, proving that any independent preference aggregation is impossible unless it is dictatorial.
A parallel line of work begins with the doctrinal paradox \cite{Kornhauser1986} or the discursive dilemma \cite{Pettit}.
As shown in Table~\ref{tab:doctrinal}, majority voting on $p$, $q$, and $p \land q$ may turn consistent individual judgments into an inconsistent collective judgment.
List and Pettit \cite{Listdicur} generalized this paradox into an impossibility theorem and thereby founded judgment aggregation in propositional logic.

The focus then moved from particular paradoxes to more abstract frameworks.
Dietrich \cite{Dietrich} proposed a general framework for judgment aggregation across logics.
However, since Arrow's impossibility theorem does not overlap with the impossibility result of List and Pettit \cite{Listdicur}, it remained open how preference aggregation relates to judgment aggregation.
To bridge this gap, Dietrich and List \cite{List} showed, within Dietrich's framework, that any independent aggregation is dictatorial in a broad class of logics, including Arrow's framework.
Dokow and Holzman \cite{Dokow2010} further abstracted this line of work by formulating binary aggregation, where one studies the set of feasible truth-value vectors over an agenda.
They identified necessary and sufficient conditions on agendas under which Independence entails dictatorship.
Unlike \cite{List}, Dokow and Holzman \cite{Dokow2010} treat interconnections directly, without reference to the underlying logic.
Indeed, the frameworks of \cite{List} and \cite{Dokow2010} already include modal logic.
However, they take the logical interconnections of agendas as given, rather than asking how such interconnections are generated by the logic itself.
\begin{table}[hbt]
\centering
\scalebox{1}{
\begin{subtable}{0.55\linewidth}
\centering
\[
\begin{array}{ccccc}
                & a>b & b>c & c>a & \text{Preferences} \\
\text{Ind}1 & \text{Yes} & \text{Yes} & \text{No} & a>b>c \\
\text{Ind}2 & \text{No} & \text{Yes} & \text{Yes} & b>c>a \\
\text{Ind}3 & \text{Yes} & \text{No} & \text{Yes} & c>a>b \\
\text{MV}   & \text{Yes} & \text{Yes} & \text{Yes} & a>b>c>a
\end{array}
\]
\caption{Condorcet's paradox}
\label{tab:condorcet}
\end{subtable}
\hspace{3em}
\begin{subtable}{0.40\linewidth}
\centering
\[
\begin{array}{cccc}
                & p & q & p \land q \\
\text{Judge}1 & \text{Yes} & \text{Yes} & \text{Yes} \\
\text{Judge}2 & \text{Yes} & \text{No} & \text{No} \\
\text{Judge}3 & \text{No} & \text{Yes} & \text{No} \\
\text{MV}   & \text{Yes} & \text{Yes} & \text{No}
\end{array}
\]
\caption{The doctrinal paradox}
\label{tab:doctrinal}
\end{subtable}
}
\caption{Majority voting paradoxes}
\label{tab:paradoxes}
\end{table}
\paragraph{Logic-specific studies.}
Logic-specific properties have also been studied in judgment aggregation.
Dietrich \cite{Dietrich} provided a general framework across logics, and subsequent work within this framework has clarified how proof systems and consistency notions affect aggregation.
For example, Porello \cite{Porello} showed that, under some non-classical logics, majority voting yields consistent judgments, whereas Wen \cite{Wen} showed that Dietrich--List type impossibility persists even in non-monotonic logics.
These studies show that the choice of logic matters for judgment aggregation.
Our emphasis is different.
In contrast to Porello \cite{Porello} and Wen \cite{Wen}, we show not merely that changing the underlying logic changes consistency or impossibility, but that a semantic structure itself determines the logical interconnections that entail dictatorship.
This is specific to our setting, where frame geometry rather than syntactic richness drives the result.

\paragraph{Modal logic in judgment aggregation.}
Modal logic has also appeared in the judgment-aggregation literature, but mainly as a meta-language for describing aggregation.
By contrast, our paper treats modal propositions themselves as the objects of aggregation.
Novaro et al.\ \cite{Novaro} explain that such formalizations aim both to verify known results and to discover new ones by means of automated reasoning, and they identify Pauly \cite{Pauly} and \AA gotnes et al.\ \cite{Agotnes} as notable examples.
\AA gotnes et al.\ \cite{Agotnes} point out that Pauly \cite{Pauly} axiomatizes the result of judgment aggregation, whereas their framework internalizes profiles and aggregation rules.
As noted by Novaro et al.\ \cite{Novaro}, however, these approaches rely on newly designed logical languages and therefore do not directly support the use of existing automated-reasoning tools.
To address this issue, Novaro et al.\ \cite{Novaro} encoded judgment aggregation in Dynamic Logic of Propositional Assignments, originally introduced by Balbiani et al.\ \cite{Balbiani}.

\section{Definition of $\mathbb{Z}/r\mathbb{Z}$ and the addition operation $+$ on it}\label{appendix: addition}

In this section, we define $\mathbb{Z}/r\mathbb{Z}$ and the addition operation $+$ on it.

\begin{definition}
Let $r \in \mathbb{Z}_{\ge 2}$. For all $a, b \in \mathbb{Z}$, we set $a \equiv b$ if $a-b$ is divisible by $r$. That is, there exists some $m \in \mathbb{Z}$ such that $a-b=r \cdot m$.
\end{definition}

\begin{example}
	Suppose $r=3$. For all $m \in \mathbb{Z}$, we have $3 \cdot m \equiv 0$, $3 \cdot m+1 \equiv 1$, and $3 \cdot m +2 \equiv 2$.
\end{example}

\begin{lemma}\label{lem:equiv}
	The relation $\equiv$ is an equivalence relation on $\mathbb{Z}$. That is, the following assertions hold.
\begin{enumerate}
	\item \label{lem:equiv 1} $a \equiv a$.
	\item \label{lem:equiv 2} $a \equiv b$ implies $b \equiv a$.
	\item \label{lem:equiv 3} $a \equiv b$ and $b \equiv c$ imply $a \equiv c$.
\end{enumerate}
\end{lemma}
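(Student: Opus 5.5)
We need to prove that $\equiv$ is reflexive, symmetric and transitive. In each case we exhibit the required integer multiplier $m$ directly from the definition, using only ring arithmetic in $\mathbb{Z}$.

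\textbf{Reflexivity.} For (\ref{lem:equiv 1}) we note that $a-a=0=r\cdot 0$, so $m=0$ witnesses $a\equiv a$.

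\textbf{Symmetry.} For (\ref{lem:equiv 2}) we assume $a\equiv b$, so that $a-b=r\cdot m$ for some $m\in\mathbb{Z}$. Negating both sides gives
\[
b-a=-(a-b)=r\cdot(-m).
\]
Since $-m\in\mathbb{Z}$, this shows $b\equiv a$.

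\textbf{Transitivity.} For (\ref{lem:equiv 3}) we take witnesses $a-b=r\cdot m_1$ and $b-c=r\cdot m_2$. Adding the two equations gives
\[
a-c=(a-b)+(b-c)=r\cdot(m_1+m_2).
\]
Since $m_1+m_2\in\mathbb{Z}$, this shows $a\equiv c$.

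There is no real obstacle. The only point needing care is to check in each case that the new multiplier ($0$, $-m$, or $m_1+m_2$) is again an integer, which holds because $\mathbb{Z}$ is closed under negation and addition.
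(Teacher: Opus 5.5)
Your proof is correct and follows the paper's argument exactly: the witnesses $0$, $-m$, and $m_{1}+m_{2}$ are precisely the ones the paper uses for reflexivity, symmetry, and transitivity.
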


\begin{proof}
\begin{enumerate}
	\item From $a-a=0=r \cdot 0$, we have $a \equiv a$.
	\item Suppose $a \equiv b$. Then there exists some $m \in \mathbb{Z}$ such that $a-b=r \cdot m$. From $b-a=r \cdot (-m)$, we have $b \equiv a$.
	\item Suppose $a \equiv b$ and $b \equiv c$. Then we have the following two assertions.
	\begin{enumerate}
		\item There exists some $m_{1} \in \mathbb{Z}$ such that $a-b=r \cdot m_{1}$. 
		\item There exists some $m_{2} \in \mathbb{Z}$ such that $b-c=r \cdot m_{2}$. 
	\end{enumerate}
	From $a-c=(a-b)+(b-c)=r \cdot (m_{1}+m_{2})$, we have $a \equiv c$.
\end{enumerate}
\end{proof}

\begin{definition}\label{def:set and cano}
We define the set $\mathbb{Z}/r\mathbb{Z}$ as follows.
\begin{align*}
	\overline{a}:=& \quad \{c \in \mathbb{Z} \mid c \equiv a\} \quad (a \in \mathbb{Z}) \\
	\mathbb{Z}/r\mathbb{Z}:=& \quad \{\overline{a} \mid a \in \mathbb{Z}\} \\
\end{align*}
\end{definition}

Since all integers are of the form $r \cdot m+i$ with $i=0, 1, \cdots , r-1$, we have the following equation.
\begin{align*}
	\mathbb{Z}/r\mathbb{Z}=\{\overline{0}, \overline{1}, \cdots, \overline{r-1} \}
\end{align*}
Thus, two integers represent the same element of $\mathbb{Z}/r\mathbb{Z}$
exactly when they have the same remainder modulo $r$. That is, $\overline{a}=\overline{b}$ if and only if $a \equiv b$.

\begin{definition}
For all $a, b \in \mathbb{Z}$, we define the addition operator $+$ on $\mathbb{Z}/r\mathbb{Z}$ as follows.
\begin{align*}
	\overline{a}+\overline{b}:=\overline{a+b}
\end{align*}
\end{definition}

\begin{lemma}
	The addition operation $+$ on $\mathbb{Z}/r\mathbb{Z}$ is well-defined. That is, for all $a_{1}, a_{2}, b_{1}, b_{2} \in \mathbb{Z}$, $\overline{a_{1}}=\overline{a_{2}}$ and $\overline{b_{1}}=\overline{b_{2}}$ imply $\overline{a_{1}+b_{1}}=\overline{a_{2}+b_{2}}$.
\end{lemma}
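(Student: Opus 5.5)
We must show that if $\overline{a_1}=\overline{a_2}$ and $\overline{b_1}=\overline{b_2}$, then $\overline{a_1+b_1}=\overline{a_2+b_2}$. The plan is to reduce everything to the congruence relation $\equiv$ and use only the facts already recorded after Definition~\ref{def:set and cano}. The key one is that $\overline{a}=\overline{b}$ holds if and only if $a\equiv b$.

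First, I translate the hypotheses. From $\overline{a_1}=\overline{a_2}$ we get $a_1\equiv a_2$. Concretely, since $a_1\in\overline{a_1}=\overline{a_2}$, we have $a_1\equiv a_2$ by the definition of $\overline{a_2}$. Hence there is $m_1\in\mathbb{Z}$ with $a_1-a_2=r\cdot m_1$. In the same way, $\overline{b_1}=\overline{b_2}$ gives $m_2\in\mathbb{Z}$ with $b_1-b_2=r\cdot m_2$.

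Second, I compute directly:
\[
(a_1+b_1)-(a_2+b_2)=(a_1-a_2)+(b_1-b_2)=r\cdot(m_1+m_2).
\]
This is the same manipulation used for transitivity in Lemma~\ref{lem:equiv}. It shows $a_1+b_1\equiv a_2+b_2$.

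Third, I convert back to equality of classes. Take any $c\in\overline{a_1+b_1}$, so $c\equiv a_1+b_1$. By item~\ref{lem:equiv 3} of Lemma~\ref{lem:equiv}, $c\equiv a_2+b_2$, so $\overline{a_1+b_1}\subseteq\overline{a_2+b_2}$. Using the symmetry in item~\ref{lem:equiv 2} gives the reverse inclusion, and therefore $\overline{a_1+b_1}=\overline{a_2+b_2}$.

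There is no real obstacle here. The only point needing care is the passage between equality of residue classes and the congruence $\equiv$ in the first and third steps. That passage is exactly what Lemma~\ref{lem:equiv} provides, so I would cite it explicitly rather than treat it as obvious.
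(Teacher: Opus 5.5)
Your proof is correct and follows the paper's argument. You convert the class equalities to congruences with witnesses $m_1,m_2$, observe $(a_1+b_1)-(a_2+b_2)=r\cdot(m_1+m_2)$, and convert back to classes; you simply spell out that last conversion via reflexivity, symmetry and transitivity from Lemma~\ref{lem:equiv}, which the paper takes directly from the remark after Definition~\ref{def:set and cano}.
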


\begin{proof}
Suppose $\overline{a_{1}}=\overline{a_{2}}$ and $\overline{b_{1}}=\overline{b_{2}}$. Then we have $a_{1} \equiv a_{2}$ and $b_{1} \equiv b_{2}$. Thus, we have the following assertions.
\begin{enumerate}
	\item There exists some $m \in \mathbb{Z}$ such that $a_{1}-a_{2}=r \cdot m$. 
	\item There exists some $n \in \mathbb{Z}$ such that $b_{1}-b_{2}=r \cdot n$. 
\end{enumerate}
	From $(a_{1}+b_{1})-(a_{2}+b_{2})=(a_{1}-a_{2})+(b_{1}-b_{2})=r \cdot (m+n)$, we have $a_{1}+b_{1} \equiv a_{2}+b_{2}$, and thus $\overline{a_{1}+b_{1}}=\overline{a_{2}+b_{2}}$.
\end{proof}

\begin{definition}
For all $a \in \mathbb{Z}_{\ge 0}$ and $b \in \mathbb{Z}$, we define $a \cdot \overline{b}$ as follows.
\begin{align*}
	a \cdot \overline{b}:=\underset{\text{$a$ times}}{\overline{b}+\overline{b}+\cdots+\overline{b}}
\end{align*}
Trivially, we have $a \cdot \overline{b}=\overline{a \cdot b}$.
\end{definition}

\section{Proof of Theorem \ref{th:calcu reduc pro}}\label{sec:reduc modal}

	In this section, we fix a Frame 2 $(\mathbb{Z}/r\mathbb{Z}, R)$ and prove Theorem \ref{th:calcu reduc pro}, which we introduce in Subsection \ref{sec:semantic_reduction}. We restate Frame 2, $k$-symmetry, and Theorem \ref{th:calcu reduc pro}.

Frame 2 is as follows.
\begin{align*}
W:=\mathbb{Z}/r\mathbb{Z}, \quad R:=\{(w, w+a) \mid w \in \mathbb{Z}/r\mathbb{Z}, a \in A\}
\end{align*}

\begin{definition}
	The set $A$ is $k$-symmetric if for all $a \in A$, $k-a \in A$.
\end{definition}

\begin{theorem}\label{th:calcu reduc pro ag}
	Suppose $(W, R)$ is Frame 2. If the set $A$ is $k$-symmetric, then for all worlds $w \in \mathbb{Z}/r\mathbb{Z}$, modal formulas $\Phi$, and valuations $V$, $M=(\mathbb{Z}/r\mathbb{Z}, R, V) \vDash w : \Box \Diamond \Box \Phi$ if and only if $M=(\mathbb{Z}/r\mathbb{Z}, R, V) \vDash w+k : \Box \Phi$.
\end{theorem}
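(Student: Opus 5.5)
The plan is to unfold both sides of the equivalence directly via the Kripke clauses on Frame~2, and then use $k$-symmetry to match the resulting sums of elements of $A$. In Frame~2 the successors of a world $u$ are exactly the worlds $u+a$ with $a \in A$. Hence, writing $M$ for $(\mathbb{Z}/r\mathbb{Z},R,V)$, we get
\begin{align*}
M \vDash w : \Box\Diamond\Box\Phi \quad &\Leftrightarrow \quad \forall a \in A\ \exists b \in A\ \forall c \in A:\ M \vDash w+a+b+c : \Phi,\\
M \vDash w+k : \Box\Phi \quad &\Leftrightarrow \quad \forall c \in A:\ M \vDash w+k+c : \Phi .
\end{align*}
Here $\Phi$ is an arbitrary modal formula, treated as a black box. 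So no induction on $\Phi$ is needed, and only the three outer operators are unfolded. Both directions then reduce to bookkeeping in $\mathbb{Z}/r\mathbb{Z}$.

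\emph{($\Leftarrow$).} Assume $M \vDash w+k : \Box\Phi$ and let $a \in A$ be arbitrary. By $k$-symmetry, $b := k-a \in A$. Then $w+a+b = w+k$, so $M \vDash w+a+b : \Box\Phi$. This exhibits the required witness $b$ for every $a$, hence $M \vDash w : \Box\Diamond\Box\Phi$.

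\emph{($\Rightarrow$).} Assume $M \vDash w : \Box\Diamond\Box\Phi$ and fix a target $c \in A$; we must show $M \vDash w+k+c : \Phi$. Since the theorem does not assume $0 \in A$, we cannot simply take $a=0$; instead we instantiate the universal quantifier with $a := c$. This yields some $b \in A$ such that $M \vDash w+c+b+c' : \Phi$ for all $c' \in A$. We now choose $c' := k-b$, which lies in $A$ by $k$-symmetry. This gives $w+c+b+(k-b) = w+k+c$, as desired. Since $c$ was arbitrary, $M \vDash w+k : \Box\Phi$.

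The only real obstacle is this forward direction. The existential witness $b$ is not under our control, so the naive attempt of hoping that $a+b=k$ fails. The fix is to apply symmetry at the innermost $\Box$, cancelling $b$ through $c'$, while choosing the outermost $a$ to equal the target $c$. The backward direction and the unfolding of the Kripke clauses are routine.
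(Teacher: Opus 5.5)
Your proof is correct and is essentially the paper's argument. The paper first isolates your two moves as Lemma~\ref{lem:leq}: $M \vDash w+k:\Phi$ implies $M \vDash w:\Box\Diamond\Phi$ via the witness $k-a$, and $M \vDash w:\Diamond\Box\Phi$ implies $M \vDash w+k:\Phi$ via cancelling $b$ with $k-b$. For the forward direction it applies the second part at $w+a$ for an arbitrary $a\in A$, which is exactly your choice of instantiating the outer $\Box$ at the target $c$; unfolding all three operators at once, as you do, simply inlines that lemma.
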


	Before proving Theorem \ref{th:calcu reduc pro ag}, we show Lemma \ref{lem:leq}.

\begin{lemma}\label{lem:leq}
	Suppose $(W, R)$ is Frame 2. If the set $A$ is $k$-symmetric, then for all worlds $w \in \mathbb{Z}/r\mathbb{Z}$, modal formulas $\Phi$, and valuations $V \subseteq \mathbb{Z}/r\mathbb{Z}$, $(\mathbb{Z}/r\mathbb{Z}, R, V) \vDash w+k : \Phi$ implies $(\mathbb{Z}/r\mathbb{Z}, R, V) \vDash w : \Box \Diamond \Phi$ and $(\mathbb{Z}/r\mathbb{Z}, R, V) \vDash w : \Diamond \Box \Phi$ implies $(\mathbb{Z}/r\mathbb{Z}, R, V) \vDash w+k : \Phi$.
\end{lemma}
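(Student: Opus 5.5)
We argue directly from the Kripke semantics of Frame~2; no induction on $\Phi$ is needed, since both claims only unfold the outermost two modal operators and treat $\Phi$ as an arbitrary subformula. The key geometric fact is that $k$-symmetry provides, for every successor $w+a$ of $w$, a successor of $w+a$ equal to $w+k$. Indeed, if $a \in A$ then $k-a \in A$, and $(w+a)+(k-a) = w+k$, so $(w+a, w+k) \in R$.

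\emph{First implication.} Assume $M \vDash w+k : \Phi$. To show $M \vDash w : \Box\Diamond\Phi$, take an arbitrary $R$-successor $w'$ of $w$. By the definition of Frame~2, $w' = w+a$ for some $a \in A$. Since $k-a \in A$, the world $w+k$ is an $R$-successor of $w'$, and $\Phi$ holds there by hypothesis. Hence $M \vDash w' : \Diamond\Phi$. As $w'$ was arbitrary, $M \vDash w : \Box\Diamond\Phi$.

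\emph{Second implication.} Assume $M \vDash w : \Diamond\Box\Phi$. Then there is a successor $w' = w+a$ of $w$, with $a \in A$, such that $M \vDash w' : \Box\Phi$. By the same symmetry, $w+k = w'+(k-a)$ is an $R$-successor of $w'$. Therefore $\Phi$ holds at $w+k$, that is, $M \vDash w+k : \Phi$.

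The only step requiring care is the identification $(w+a)+(k-a) = w+k$ in $\mathbb{Z}/r\mathbb{Z}$. It holds by the well-definedness of addition on residue classes established in Appendix~\ref{appendix: addition}, together with the definition of $R$ in Frame~2 as $\{(w,w+a) \mid w \in \mathbb{Z}/r\mathbb{Z}, a \in A\}$.

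This lemma then feeds into Theorem~\ref{th:calcu reduc pro ag}. Applying the first implication with $\Box\Phi$ in place of $\Phi$ gives the direction from $M \vDash w+k : \Box\Phi$ to $M \vDash w : \Box\Diamond\Box\Phi$. For the converse direction, one uses seriality of the frame: $A \neq \varnothing$, so every world has at least one successor, and hence $\Box\Psi$ implies $\Diamond\Psi$. Taking $\Psi = \Diamond\Box\Phi$, the assumption $M \vDash w : \Box\Diamond\Box\Phi$ yields $M \vDash w : \Diamond\Diamond\Box\Phi$. Unpacking this gives a world $w+a$ at which $\Diamond\Box\Phi$ holds. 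The second implication, applied at $w+a$, then gives $M \vDash w+a+k : \Box\Phi$.

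That conclusion is at $w+a+k$ rather than at $w+k$, so it does not immediately finish the converse. Completing the theorem needs a more careful argument, but the present lemma itself is fully settled by the direct unfolding above.
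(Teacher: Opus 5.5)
Your proof is correct and is essentially the paper's own argument: $k$-symmetry makes $w+k=(w+a)+(k-a)$ an $R$-successor of every successor $w+a$ of $w$, and both implications follow by directly unfolding $\Box\Diamond$ and $\Diamond\Box$. Your closing aside about Theorem~\ref{th:calcu reduc pro ag} has an error: applying the second implication at $w+a$ gives $\Phi$, not $\Box\Phi$, at $w+a+k$. Seriality is also unnecessary there. The outer $\Box$ lets you apply the second implication at \emph{every} $w+a$ with $a\in A$, which gives $\Phi$ at every $(w+k)+a$, i.e.\ $M \vDash w+k : \Box\Phi$; this is exactly how the paper completes the theorem.
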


Example \ref{ex:lemeq} illustrates that $(\mathbb{Z}/r\mathbb{Z}, R, V) \vDash w+k : \Phi$ implies $(\mathbb{Z}/r\mathbb{Z}, R, V) \vDash w : \Box \Diamond \Phi$ in Lemma \ref{lem:leq}.

\begin{example}\label{ex:lemeq}
	Suppose $W=\mathbb{Z}/6\mathbb{Z}$, $k=2$, $A=\{0,1,2\}$, and $w=0$. Then $A$ is $k$-symmetric. We expand $0 : \Box \Diamond \Phi$ as follows.
	\begin{equation}
	\begin{aligned}
		0 : \Box \Diamond \Phi 
		\Leftrightarrow \land_{w_{2} \in \mathbb{Z}/r\mathbb{Z}, (0, w_{2}) \in R} w_{2} : \Diamond \Phi
		\Leftrightarrow \land_{w_{2} \in 0+A=\{0,1,2\}} w_{2} : \Diamond \Phi
		\Leftrightarrow 0 : \Diamond \Phi \land 1 : \Diamond \Phi \land 2 : \Diamond \Phi \label{eq:k0sym 1}
	\end{aligned}
	\end{equation}
	We further expand $0 : \Diamond \Phi$, $1 : \Diamond \Phi$, and $2 :\Diamond \Phi$ as follows.
	\begin{equation}
	\begin{aligned}
		& 0 : \Diamond \Phi \Leftrightarrow \lor_{w_{2} \in 0+A} w_{2} : \Phi
		\Leftrightarrow 0:\Phi \lor 1:\Phi \lor \textcolor{red}{2} : \Phi \\
		& 1:\Diamond \Phi \Leftrightarrow \lor_{w_{2} \in 1+A} w_{2} : \Phi
		\Leftrightarrow 1 : \Phi \lor \textcolor{red}{2} : \Phi \lor 3 : \Phi \\
		& 2 : \Diamond \Phi \Leftrightarrow \lor_{w_{2} \in 2+A} w_{2} :\Phi
		\Leftrightarrow \textcolor{red}{2} : \Phi \lor 3 : \Phi \lor 4 : \Phi
		\label{eq:k0sym 2}
	\end{aligned}
	\end{equation}
	Suppose $k : \Phi$ holds, where $k=2$. From relations (\ref{eq:k0sym 2}), we have $0 : \Diamond \Phi$, $1 : \Diamond \Phi$, and $2 : \Diamond \Phi$. From relation (\ref{eq:k0sym 1}), we have $0 : \Box \Diamond \Phi$. $0 : \Diamond \Phi$, $1 : \Diamond \Phi$, and $2 : \Diamond \Phi$ follow from $k \in 0+A$, $k \in 1+A$, and $k \in 2+A$, which coincides with $k$-symmetry.
    
	See Figure \ref{fig:kzerosym}. From $k$-symmetry, for all $a \in A$, the transition $0 \rightarrow a$ (solid edges) can be followed by the transition $a \rightarrow a+(k-a)=k$ (dashed edges). This reachability results in $M \vDash w+k : \Phi \Rightarrow M \vDash w : \Box \Diamond \Phi$. The illustration of $M \vDash w : \Diamond \Box \Phi \Rightarrow M \vDash w+k : \Phi$ is the dual.

\begin{figure}[t]
\centering
\begin{tikzpicture}[
    every node/.style={circle, draw, minimum size=5mm},
    every edge/.style={draw, ->}
]

	\node (0) at (0,0) {0};
    	\node (1) at (2,1) {1};
	\node (2) at (4,0) {2};
	
    	\path (0) edge (1);
    	\path [dashed] (0) edge[bend left=10] (2);
    	\path (0) edge[bend right=10] (2);
    	\path [dashed] (1) edge (2);
    	\path (0) edge[loop above] (0);
    	\path (2) edge[dashed, loop right] (2);

\end{tikzpicture}
\caption{$k$-symmetry with $W=\mathbb{Z}/6\mathbb{Z}$, $k=2$, and $A=\{0,1,2\}$}
\label{fig:kzerosym}
\end{figure}
\end{example}

\begin{proof}[Proof of Lemma \ref{lem:leq}]
    We write $M$ for $(\mathbb{Z}/r\mathbb{Z}, R, V)$.
	\begin{enumerate}
		\item We prove that $M \vDash w+k : \Phi$ implies $M \vDash w : \Box \Diamond \Phi$. Suppose $M \vDash w+k : \Phi$. Take an arbitrary $a \in A$. From $M \vDash w+k : \Phi$, we have $M \vDash (w+a)+(k-a) : \Phi$. From $k-a \in A$ ($k$-symmetry), we have $M \vDash w+a : \Diamond \Phi$. Since $a \in A$ is arbitrary, we have $M \vDash w : \Box \Diamond \Phi$.
		\item We prove that $M \vDash w : \Diamond \Box \Phi$ implies $M \vDash w+k : \Phi$. Suppose $M \vDash w : \Diamond \Box \Phi$. From the definition of $\Diamond$, there exists some $a \in A$ such that $M \vDash w+a : \Box \Phi$. From $k-a \in A$ ($k$-symmetry), we have $M \vDash (w+a)+(k-a) : \Phi$, and thus $M \vDash w+k : \Phi$.
	\end{enumerate}
\end{proof}

	Now we are ready to prove Theorem \ref{th:calcu reduc pro ag}.
    
\begin{proof}[Proof of Theorem \ref{th:calcu reduc pro ag}]
	In Lemma \ref{lem:leq}, replace $\Phi$ with $\Box \Phi$ in the statement that $M \vDash w+k : \Phi$ implies $M \vDash w : \Box \Diamond \Phi$. Then, $M \vDash w+k : \Box \Phi$ implies $M \vDash w : \Box \Diamond \Box \Phi$. 
    
    Conversely, suppose $M \vDash w : \Box \Diamond \Box \Phi$. Take an arbitrary $a \in A$. Then, we have $M \vDash w+a : \Diamond \Box \Phi$. In Lemma \ref{lem:leq}, replace $w$ with $w+a$ in the statement that $M \vDash w : \Diamond \Box \Phi$ implies $M \vDash w+k : \Phi$. Then, $M \vDash w+a : \Diamond \Box \Phi$ implies $M \vDash (w+a)+k : \Phi$. Since $M \vDash w+a : \Diamond \Box \Phi$ is assumed, we have $M \vDash (w+a)+k : \Phi$, and thus $M \vDash (w+k)+a : \Phi$. Since $a \in A$ is arbitrary, we have $M \vDash w+k : \Box \Phi$.
\end{proof}

	If we set $k=0$, then the accessibility relation $R$ is symmetric. According to Exercise 3.1.1 in \cite{Blackburn}, for any Kripke frame $(W, R)$, the relation $R$ is symmetric if and only if for all valuations $V$ and $w \in W$, $M \vDash w : p$ implies $M \vDash w : \Box \Diamond p$. This implication relies on the standard assumption that formulas are evaluated at a fixed state. In contrast, the implication proved in Lemma \ref{lem:leq} involves an explicit shift of the evaluation state, and thereby cannot be derived from the standard symmetric-frame argument.

\section{Proof of Theorem \ref{th:P and DN is D}}\label{appendix: arrow}

	In this section, we prove Theorem \ref{th:P and DN is D}, a general criterion for dictatorship. We restate this theorem. Under an impossibility frame $(W, R)$, $\Gamma$ is logically interconnected enough to entail dictatorship. Recall that $F$ is dictatorial if there exists some individual $i_{0} \in N$ such that for all $f \in \mathcal{J}^{N}$, $F(f)=f(i_{0})$.

\begin{theorem}\label{th:P and DN is D app}
	Suppose $(W, R)$ is an impossibility frame under $\Gamma$. If $F:\mathcal{J}^{N} \to \mathcal{J}$ satisfies Unanimity and Positive-Negative Neutrality, then it is dictatorial.
\end{theorem}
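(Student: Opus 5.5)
The proof follows the standard Dietrich--List route: show that $F$ is determined by a single family of winning coalitions, show that this family is an ultrafilter on $N$, and conclude that it is principal since $N$ is finite.

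\textbf{Step 1 (Independence and a single family $\mathcal{W}$).} First I would establish Lemma~\ref{lem:PN imp I}: take $g=f$ in the hypothesis $f^{-1}(\Phi)=g^{-1}(\lnot\Phi)$ after swapping the roles of $\Phi$ and $\lnot\Phi$. Completeness and consistency of $F(f)$ give $\lnot\Phi\in F(g)\Leftrightarrow\Phi\notin F(g)$, and this yields Independence. So for each $\Phi$ there is a family $\mathcal{W}_\Phi\subseteq 2^N$ with $\Phi\in F(f)\Leftrightarrow f^{-1}(\Phi)\in\mathcal{W}_\Phi$. Positive-Negative Neutrality then gives $\mathcal{W}_\Phi=\mathcal{W}_{\lnot\Phi}$. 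Completeness and consistency of outputs give the complement property: $C\in\mathcal{W}_\Phi$ iff $N\setminus C\notin\mathcal{W}_\Phi$. Unanimity gives $N\in\mathcal{W}_\Phi$.

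\textbf{Step 2 (contagion along $<_0$).} Next I would show $\mathcal{W}_\Phi\subseteq\mathcal{W}_\Psi$ whenever $\Phi<_0\Psi$. Take $C\in\mathcal{W}_\Phi$ and the witness $\Gamma_0$ with $\Gamma_0\cup\{\Phi,\lnot\Psi\}$ minimally inconsistent and $\Gamma_0\cup\{\lnot\Phi,\Psi\}$ consistent. Build a profile as follows. Individuals in $C$ hold a rational extension of $\Gamma_0\cup\{\Phi,\Psi\}$, which is consistent by minimality. Individuals outside $C$ hold a rational extension of $\Gamma_0\cup\{\lnot\Phi,\lnot\Psi\}$. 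The delicate point is the consistency of this second set; the plan is to build it by flipping $\Phi$ within the extension of $\Gamma_0\cup\{\lnot\Phi,\Psi\}$, and if that fails I would use the set $\Gamma_0\cup\{\lnot\Phi,\Psi\}$ for a third group and adjust the coalitions. With the two-group profile, everyone accepts every member of $\Gamma_0$, so Unanimity puts $\Gamma_0\subseteq F(f)$. Also $\Phi\in F(f)$ because $f^{-1}(\Phi)=C\in\mathcal{W}_\Phi$. Consistency of $F(f)$ then forces $\Psi\in F(f)$, and since $f^{-1}(\Psi)=C$, we get $C\in\mathcal{W}_\Psi$. Strong path-connectedness then gives $\mathcal{W}_\Phi=\mathcal{W}_\Psi=:\mathcal{W}$ for all negation-free $\Phi,\Psi$, and Step 1 extends this to their negations.

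\textbf{Step 3 (ultrafilter and dictator).} Minimal connectedness supplies a minimally inconsistent $\Gamma_0$ with $|\Gamma_0|\ge3$. I would use it to prove that $\mathcal{W}$ is closed under intersection. Given $C_1,C_2\in\mathcal{W}$, choose three elements $\Phi_1,\Phi_2,\Phi_3\in\Gamma_0$ and give the individuals in the three regions $C_1\cap C_2$, $C_1\setminus C_2$, and $N\setminus C_1$ judgment sets that each reject exactly one element of $\Gamma_0$. Each such set $\Gamma_0\setminus\{\cdot\}$ is consistent by minimality. Assign them so that $\Phi_1$ is accepted exactly by $C_1$, $\Phi_2$ exactly by $C_2\cup(N\setminus C_1)$ (or a similar arrangement), and the remaining members of $\Gamma_0$ are accepted unanimously. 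Because $F(f)$ cannot contain all of $\Gamma_0$, the acceptance of $\Phi_3$, which is accepted exactly by $C_1\cap C_2$, is forced. A cleaner route would be to show first that $\mathcal{W}$ is upward closed and then use the standard argument; this step needs the most care.

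Together with $N\in\mathcal{W}$, $\varnothing\notin\mathcal{W}$, and the complement property, $\mathcal{W}$ is an ultrafilter on the finite set $N$. It is therefore principal, generated by some $\{i_0\}$. Hence $\Phi\in F(f)\Leftrightarrow i_0\in f^{-1}(\Phi)$, which means $F(f)=f(i_0)$.

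I expect the main obstacle to be Step 2 (or the intersection argument in Step 3): constructing profiles with rational individual sets in this specific frame setting, in particular securing consistency of the required combinations of judgments.
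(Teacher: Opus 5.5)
Your outline follows the paper's route (Independence, contagion to a single family $\mathcal{W}$, closure under intersection, principal ultrafilter), but it has a genuine gap: you never show that $\mathcal{W}$ is upward closed, and your intersection step depends on it. In your arrangement the support of $\Phi_2$ is $C_2\cup(N\setminus C_1)$, not $C_2$. So concluding $\Phi_2\in F(f)$ requires $C_2\cup(N\setminus C_1)\in\mathcal{W}$. No ``similar arrangement'' avoids this: in profiles where each individual rejects exactly one member of the minimally inconsistent set, two supports equal $C_1$ and $C_2$ exactly only when $C_1\cup C_2=N$. (Also, it is $\lnot\Phi_3$, not $\Phi_3$, that is accepted exactly by $C_1\cap C_2$.)

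Your two-group Step~2 yields only $\mathcal{W}_\Phi\subseteq\mathcal{W}_\Psi$ for a fixed coalition, and the facts you have at that point cannot produce monotonicity. For $N=\{1,2,3\}$, the family $\{N,\{1\},\{2\},\{3\}\}$ contains $N$ and has the complement property. It also survives every profile in which each individual rejects exactly one member of a minimally inconsistent set: all supports $N\setminus G_i$ lie in it only if every $|G_i|\in\{0,2\}$, which is impossible since the $|G_i|$ sum to $3$. Yet it is not an ultrafilter.

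The missing idea is the second clause in the definition of $<_0$, which you use only as a fallback. Since $\Gamma_0\cup\{\lnot\Phi,\Psi\}$ is consistent, for $C\subseteq D$ you can build a three-group profile:
\begin{itemize}
\item $C$ holds an extension of $\Gamma_0\cup\{\Phi,\Psi\}$;
\item $D\setminus C$ holds an extension of $\Gamma_0\cup\{\lnot\Phi,\Psi\}$;
\item $N\setminus D$ holds an extension of $\Gamma_0\cup\{\lnot\Phi,\lnot\Psi\}$.
\end{itemize}
Unanimity puts $\Gamma_0$ in the output, and $C\in\mathcal{W}_\Phi$ puts $\Phi$ in. Inconsistency of $\Gamma_0\cup\{\Phi,\lnot\Psi\}$ then forces $\Psi$, whose support is $D$, so $D\in\mathcal{W}_\Psi$. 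This is Proposition~\ref{prop:sys and mono}, which gives contagion and monotonicity at once; after it, the three-region intersection argument goes through.

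Two smaller points. First, the ``delicate'' consistency of $\Gamma_0\cup\{\lnot\Phi,\lnot\Psi\}$ in Step~2 is immediate. $\Gamma_0\cup\{\lnot\Psi\}$ is a proper subset of a minimally inconsistent set, so it has a model, and that model must falsify $\Phi$. Symmetrically, a model of $\Gamma_0\cup\{\Phi\}$ must satisfy $\Psi$. No flipping is needed.

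Second, your derivation of Independence in Step~1 does not work as written. Positive-Negative Neutrality never applies to the pair $(f,f)$, because $f^{-1}(\Phi)$ and $f^{-1}(\lnot\Phi)$ are complementary and hence never equal. As in Lemma~\ref{lem:PN imp I}, given $f^{-1}(\Phi)=g^{-1}(\Phi)$ you need an auxiliary profile $h$ with $h^{-1}(\lnot\Phi)=f^{-1}(\Phi)$, built from rational sets containing $\lnot\Phi$ and $\Phi$. Two applications of the axiom then give $\Phi\in F(f)\Leftrightarrow\lnot\Phi\in F(h)\Leftrightarrow\Phi\in F(g)$.
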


	From now on, we assume all conditions in Theorem \ref{th:P and DN is D app}. To prove this theorem, we adopt a strategy similar to that in \cite{List}. First, by using the logical interconnections among propositions, we propagate decisiveness from one proposition to another (Proposition \ref{prop:sys and mono}). In other words, if a group is decisive over a certain proposition, the group is decisive over propositions logically connected with the original one. Second, we show that the family of decisive groups is closed under intersection (Proposition \ref{prop:intersection}). That is, if two groups are decisive, their intersection is also decisive. Third, by using this Intersection property, we prove that a single individual is decisive (Proposition \ref{prop:ultra prin}). Hence, the aggregation rule is dictatorial.

    \smallskip

Before the proof of Theorem \ref{th:P and DN is D app}, we prepare Lemma \ref{lem:cons extend} and \ref{lem:PN imp I}. We use Lemma \ref{lem:cons extend} to construct profiles of judgments in the proof of Lemma \ref{lem:PN imp I}, Proposition \ref{prop:sys and mono}, and Proposition \ref{prop:intersection}.

\begin{lemma}\label{lem:cons extend}
	For all $\Gamma_{0} \subseteq \Gamma$, if $\Gamma_{0}$ is consistent, then there exists some $\Gamma_{1} \in \mathcal{J}$ such that $\Gamma_{0} \subseteq \Gamma_{1}$.
\end{lemma}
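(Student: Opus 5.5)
The plan is to use the definition of consistency directly: a consistent set has a witnessing valuation, and that valuation decides every formula of the agenda. Since $\Gamma_{0}$ is consistent under $(W,R)$, we fix a valuation $V \subseteq W$ with $(W,R,V) \vDash \ini : \Phi$ for all $\Phi \in \Gamma_{0}$. We then set
\[
\Gamma_{1} := \{\Phi \in \Gamma \mid (W,R,V) \vDash \ini : \Phi\},
\]
that is, the complete theory of the single pointed model $(W,R,V,\ini)$ restricted to the agenda.

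Three things must then be checked. First, $\Gamma_{0} \subseteq \Gamma_{1}$, which is immediate from the choice of $V$. Second, $\Gamma_{1}$ is consistent, witnessed by the same $V$. Third, $\Gamma_{1}$ is complete in $\Gamma$. For completeness, take any $\Phi \in \Gamma$. Either $(W,R,V) \vDash \ini : \Phi$, or $(W,R,V) \nvDash \ini : \Phi$. In the second case the semantics of negation gives $(W,R,V) \vDash \ini : \lnot\Phi$, so $\lnot\Phi \in \Gamma_{1}$, provided $\lnot\Phi \in \Gamma$.

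That proviso is the only point needing care: we need $\Gamma$ closed under negation. Both agendas considered here have the form $\{\Phi_{j}, \lnot\Phi_{j}\}_{j}$. If $\Phi = \lnot\Psi$, then $\lnot\Phi$ is identified with $\Psi$ by the convention $\lnot\lnot\Psi = \Psi$, and $\Psi \in \Gamma$. So for every $\Phi \in \Gamma$ we have $\lnot\Phi \in \Gamma$, and the case split gives $\Phi \in \Gamma_{1}$ or $\lnot\Phi \in \Gamma_{1}$.

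Hence $\Gamma_{1}$ is complete and consistent, i.e. rational, so $\Gamma_{1} \in \mathcal{J}$ and $\Gamma_{0} \subseteq \Gamma_{1}$. The argument does not depend on whether $\Gamma_{0}$ is finite or on the syntactic infinitude of $\Gamma$, since a single valuation settles all formulas at once. I expect no real obstacle; the only subtle step is the negation closure of $\Gamma$ under the $\lnot\lnot$ identification.
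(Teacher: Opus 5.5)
Your proposal is correct and follows the paper's proof essentially verbatim. You fix a witnessing valuation, take $\Gamma_{1}$ to be the set of agenda formulas true at $\ini$, and check inclusion, consistency, and completeness. Your explicit check that $\Gamma$ is closed under negation (via the $\lnot\lnot\Phi=\Phi$ identification) is a step the paper uses only implicitly.
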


\begin{proof}
	Since $\Gamma_{0}$ is consistent, there exists some valuation $V_{0}$ such that for all $\Phi \in \Gamma_{0}$, $M_{0}=(W, R, V_{0}) \vDash \ini : \Phi$. Set $\Gamma_{1}:=\{\Phi \in \Gamma \mid M_{0}=(W, R, V_{0}) \vDash \ini : \Phi\}$. Trivially, we have $\Gamma_{0} \subseteq \Gamma_{1}$ and $\Gamma_{1}$ is consistent. It remains to prove that $\Gamma_{1}$ is complete. Take $\Phi \in \Gamma$. If $M_{0} \vDash \ini : \Phi$, then $\Phi \in \Gamma_{1}$. If $M_{0} \nvDash \ini : \Phi$, then $M_{0} \vDash \ini : \lnot \Phi$, and thus $\lnot \Phi \in \Gamma_{1}$.
\end{proof}

\begin{lemma}\label{lem:PN imp I}
	Positive-Negative Neutrality implies Independence.
\end{lemma}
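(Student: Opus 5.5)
The plan is to prove Independence directly from Positive-Negative Neutrality by applying the latter twice, passing through an auxiliary profile in which the roles of $\Phi$ and $\lnot\Phi$ are swapped. Fix $f,g\in\mathcal{J}^N$ and $\Phi\in\Gamma$ with $f^{-1}(\Phi)=g^{-1}(\Phi)$. We must show that $\Phi\in F(f)$ if and only if $\Phi\in F(g)$.

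First, I will construct a profile $h\in\mathcal{J}^N$ with $h^{-1}(\lnot\Phi)=f^{-1}(\Phi)$, equivalently $h^{-1}(\Phi)=N\setminus f^{-1}(\Phi)$. For each individual $i$, set $h(i)$ to be some rational judgment set containing $\lnot\Phi$ if $i\in f^{-1}(\Phi)$, and containing $\Phi$ otherwise. This requires both singletons $\{\Phi\}$ and $\{\lnot\Phi\}$ to be consistent, so that Lemma~\ref{lem:cons extend} extends each to an element of $\mathcal{J}$.

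Establishing these two consistencies is the main point needing care. If $\{\Phi\}$ is inconsistent, then every element of $\mathcal{J}$ contains $\lnot\Phi$. Then $f^{-1}(\Phi)=\varnothing$, and completeness of $F(f)$ together with consistency of $F(f)$ forces $\Phi\notin F(f)$; the same holds for $g$. So both sides of the desired equivalence are false and it holds trivially. The case where $\{\lnot\Phi\}$ is inconsistent is symmetric: every element of $\mathcal{J}$ contains $\Phi$, so $\Phi\in F(f)$ and $\Phi\in F(g)$ both hold. This argument only needs that $\mathcal{J}$ is nonempty, which is implicit since $F$ is a function on $\mathcal{J}^N$. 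Otherwise both singletons are consistent and $h$ exists.

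Finally, I will apply Positive-Negative Neutrality twice. Since $f^{-1}(\Phi)=h^{-1}(\lnot\Phi)$, Neutrality gives $\Phi\in F(f)\Leftrightarrow \lnot\Phi\in F(h)$. Also $g^{-1}(\Phi)=f^{-1}(\Phi)=h^{-1}(\lnot\Phi)$, so Neutrality gives $\Phi\in F(g)\Leftrightarrow\lnot\Phi\in F(h)$. Chaining these two equivalences yields $\Phi\in F(f)\Leftrightarrow\Phi\in F(g)$, which is Independence.

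The same argument applies verbatim when $\Phi$ is a negated formula, using the convention $\lnot\lnot\Phi=\Phi$.
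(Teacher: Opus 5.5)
Your proof is correct and follows the paper's argument. You build an auxiliary profile $h$ with $h^{-1}(\lnot\Phi)=f^{-1}(\Phi)=g^{-1}(\Phi)$ from rational sets containing $\Phi$ and $\lnot\Phi$ (via Lemma~\ref{lem:cons extend}), then apply Positive-Negative Neutrality twice and chain the equivalences. Your extra case analysis for when $\{\Phi\}$ or $\{\lnot\Phi\}$ is inconsistent is a small improvement: the paper simply asserts that both are consistent, which does hold for its agendas (witnessed by $V=W$ and $V=\varnothing$).
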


\begin{proof}
	Take $f,g \in \mathcal{J}^{N}$ and $\Phi \in \Gamma$ with $f^{-1}(\Phi)=g^{-1}(\Phi)=A$. Since both $\Phi$ and $\lnot \Phi$ are consistent, from Lemma \ref{lem:cons extend}, there exist some $\Gamma_{1}, \Gamma_{2} \in \mathcal{J}$ such that $\Phi \in \Gamma_{1}$ and $\lnot \Phi \in \Gamma_{2}$. We define $h \in \mathcal{J}^{N}$ as follows.
	\begin{align*}
		h(i):=\begin{cases}
				\Gamma_{2} & \quad (i \in A) \\
				\Gamma_{1} & \quad (i \in N \setminus A)
				\end{cases}
	\end{align*}
	Then we have $f^{-1}(\Phi)=A=h^{-1}(\lnot \Phi)$ and $g^{-1}(\Phi)=A=h^{-1}(\lnot \Phi)$. From Positive-Negative Neutrality, we have $\Phi \in F(f) \leftrightarrow \lnot \Phi \in F(h) \leftrightarrow \Phi \in F(g)$.
\end{proof}

	From Independence, for all $\Phi \in \Gamma$, we have the following equation.

\begin{align*}
	\{A \subseteq N \mid \forall f \in \mathcal{J}^{N}, f^{-1}(\Phi)=A \rightarrow \Phi \in F(f)\}
	=\{A \subseteq N \mid \exists f \in \mathcal{J}^{N}, f^{-1}(\Phi)=A \land \Phi \in F(f)\}
\end{align*}

	We write the above set as $\mathcal{U}_{\Phi}$. $\mathcal{U}_{\Phi}$ are called sets of decisive coalitions. From Positive-Negative Neutrality, for all $\Phi \in \Gamma$, we have $\mathcal{U}_{\Phi}=\mathcal{U}_{\lnot \Phi}$.

    Now we begin to prove Theorem \ref{th:P and DN is D app}. First, we prove Proposition \ref{prop:sys and mono}.

\begin{prop}\label{prop:sys and mono}
	For all $\Phi, \Psi \in \Gamma$ with no negation operator satisfying $\Phi <_{0} \Psi$, $A \in \mathcal{U}_{\Phi}$ and $A \subseteq B \subseteq N$ imply $B \in \mathcal{U}_{\Psi}$.
\end{prop}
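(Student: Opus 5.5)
We follow the standard Dietrich--List argument: build one profile in which coalition $A$ accepts $\Phi$, coalition $B$ accepts $\Psi$, and everyone accepts the auxiliary set $\Gamma_{0}$. Then Unanimity, Independence and rationality of the output force $\Psi$ to be collectively accepted.

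Fix $\Phi <_{0} \Psi$ with witness $\Gamma_{0}$. So $\Gamma_{0}\cup\{\Phi,\lnot\Psi\}$ is minimally inconsistent, and $\Gamma_{0}\cup\{\lnot\Phi,\Psi\}$ is consistent. Three sets are consistent by minimal inconsistency (each omits one member), and the fourth is consistent by hypothesis:
\[
\Gamma_{0}\cup\{\Phi,\Psi\}, \qquad \Gamma_{0}\cup\{\lnot\Phi,\lnot\Psi\}, \qquad \Gamma_{0}\cup\{\lnot\Phi,\Psi\}.
\]
For the first, note it is consistent if and only if $\Gamma_{0}\cup\{\Phi\}$ is consistent, because $\Gamma_{0}\cup\{\Phi\}$ forces $\Psi$. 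For the second, $\Gamma_{0}\cup\{\lnot\Psi\}$ is consistent and forces $\lnot\Phi$. By Lemma \ref{lem:cons extend}, each of these extends to a rational judgment set; call them $\Gamma^{++}$, $\Gamma^{--}$ and $\Gamma^{-+}$.

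Define the profile $f$ by
\[
f(i)=\Gamma^{++} \text{ for } i\in A,\qquad f(i)=\Gamma^{-+} \text{ for } i\in B\setminus A,\qquad f(i)=\Gamma^{--} \text{ for } i\in N\setminus B.
\]
Then $f^{-1}(\Phi)=A$, $f^{-1}(\Psi)=B$, and every formula of $\Gamma_{0}$ is accepted by all of $N$. Since $A\in\mathcal{U}_{\Phi}$, we get $\Phi\in F(f)$. Unanimity gives $\Gamma_{0}\subseteq F(f)$. Because $F(f)$ is rational, it is complete and consistent, and $\Gamma_{0}\cup\{\Phi,\lnot\Psi\}$ is inconsistent, so $\Psi\in F(f)$. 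By Independence (Lemma \ref{lem:PN imp I}) and the description of $\mathcal{U}_{\Psi}$ through an existential witness profile, $B\in\mathcal{U}_{\Psi}$.

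The main obstacle is the care needed with degenerate coalitions and with the membership of $\Gamma_{0}$. When $B\setminus A$ or $N\setminus B$ is empty, the profile simply omits those types, which is harmless. The existence of each $\Gamma^{\pm\pm}$ must be derived from minimal inconsistency as above, not assumed. $\Gamma_{0}$ may contain negated formulas, but Unanimity applies to any formula of $\Gamma$, so this causes no problem. Finally, the case $A=\varnothing$ needs no special treatment: the construction still goes through with $f^{-1}(\Phi)=\varnothing$.
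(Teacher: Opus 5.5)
Your proof is correct and follows essentially the same route as the paper. You build the same profile, assigning rational extensions of $\Gamma_{0}\cup\{\Phi,\Psi\}$, $\Gamma_{0}\cup\{\lnot\Phi,\Psi\}$ and $\Gamma_{0}\cup\{\lnot\Phi,\lnot\Psi\}$ to $A$, $B\setminus A$ and $N\setminus B$ respectively, and then use Unanimity, decisiveness of $A$ on $\Phi$, and rationality of the output to force $\Psi$. Your forcing argument for the consistency of the first and third sets, via the consistent proper subsets $\Gamma_{0}\cup\{\Phi\}$ and $\Gamma_{0}\cup\{\lnot\Psi\}$, usefully spells out what the paper compresses into ``from minimal inconsistency''. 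One small slip: your phrase ``three sets \dots and the fourth'' miscounts a list that has only three sets.
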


\begin{proof}
	From $\Phi <_{0} \Psi$, take $\Gamma_{0} \subseteq \Gamma$ such that $\Gamma_{0} \cup \{\Phi, \lnot \Psi\}$ is minimally inconsistent and $\Gamma_{0} \cup \{\lnot \Phi, \Psi\}$ is consistent. From minimal inconsistency, $\Gamma_{0} \cup \{\Phi, \Psi\}$ and $\Gamma_{0} \cup \{\lnot \Phi, \lnot \Psi \}$ are consistent. By using Lemma \ref{lem:cons extend}, take $\Gamma_{1}, \Gamma_{2}, \Gamma_{3} \in \mathcal{J}$ such that $\Gamma_{0} \cup \{\Phi, \Psi\} \subseteq \Gamma_{1}$ and $\Gamma_{0} \cup \{\lnot \Phi, \lnot \Psi \} \subseteq \Gamma_{2}$ and $\Gamma_{0} \cup \{\lnot \Phi, \Psi\} \subseteq \Gamma_{3}$.

	We define $h \in \mathcal{J}^{N}$ as follows.
\begin{align*}
	h(i):=\begin{cases}
			\Gamma_{1} & \quad (i \in A) \\
			\Gamma_{3} & \quad (i \in B \setminus A) \\
			\Gamma_{2} & \quad (i \in N \setminus B)
			\end{cases}
\end{align*}

	We prove $\Gamma_{0} \cup \{\Phi\} \subseteq F(h)$. From Unanimity, we have $\Gamma_{0} \subseteq F(h)$. From $A \in \mathcal{U}_{\Phi}$ and $h^{-1}(\Phi)=A$, we have $\Phi \in F(h)$.
    
    Since we have $\Gamma_{0} \cup \{\Phi\} \subseteq F(h)$ and $\Gamma_{0} \cup \{\Phi, \lnot \Psi\}$ is inconsistent, we have $\Psi \in F(h)$. Together with $h^{-1}(\Psi)=B$, we have $B \in \mathcal{U}_{\Psi}$.
\end{proof}

	From Proposition \ref{prop:sys and mono} and Strong Path-Connectedness and Positive-Negative Neutrality, for all $\Phi, \Psi \in \Gamma$, we have $\mathcal{U}_{\Phi}=\mathcal{U}_{\Psi}$. Therefore, we set $\mathcal{U}:=\mathcal{U}_{\Phi}$. Also, for all $A \subseteq B \subseteq N$, $A \in \mathcal{U}$ implies $B \in \mathcal{U}$, which is called Monotonicity.

    Second, we prove Proposition \ref{prop:intersection}.

\begin{prop}\label{prop:intersection}
	For all $A, B \subseteq N$, $A \in \mathcal{U}$ and $B \in \mathcal{U}$ imply $A \cap B \in \mathcal{U}$.
\end{prop}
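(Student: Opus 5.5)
We follow the standard intersection argument of Dietrich and List, using the minimally inconsistent set supplied by minimal connectedness. Fix $A, B \in \mathcal{U}$. By minimal connectedness, choose a minimally inconsistent $\Gamma_{0} \subseteq \Gamma$ with $|\Gamma_{0}| \ge 3$, and pick three distinct members $\Phi_{1}, \Phi_{2}, \Phi_{3} \in \Gamma_{0}$. Write $\Delta := \Gamma_{0} \setminus \{\Phi_{1}, \Phi_{2}, \Phi_{3}\}$.

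By minimality, each of the three sets
\[
\Delta \cup \{\Phi_{2},\Phi_{3}\} \cup \{\lnot\Phi_{1}\},\quad
\Delta \cup \{\Phi_{1},\Phi_{3}\} \cup \{\lnot\Phi_{2}\},\quad
\Delta \cup \{\Phi_{1},\Phi_{2}\} \cup \{\lnot\Phi_{3}\}
\]
is consistent. Indeed, $\Gamma_{0} \setminus \{\Phi_{i}\}$ is consistent, and any valuation witnessing it must falsify $\Phi_{i}$, since otherwise $\Gamma_{0}$ would be consistent. By Lemma~\ref{lem:cons extend}, extend these three sets to rational judgment sets $\Gamma_{1}, \Gamma_{2}, \Gamma_{3} \in \mathcal{J}$.

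Define a profile $h$ by
\begin{itemize}
\item $h(i) = \Gamma_{3}$ for $i \in A \cap B$,
\item $h(i) = \Gamma_{2}$ for $i \in A \setminus B$,
\item $h(i) = \Gamma_{1}$ for $i \in N \setminus A$.
\end{itemize}
Under $h$, we get:
\begin{itemize}
\item every individual accepts each member of $\Delta$;
\item the supporters of $\Phi_{1}$ are exactly $A$;
\item the supporters of $\Phi_{2}$ are exactly $(A \cap B) \cup (N \setminus A) \supseteq B$;
\item the supporters of $\Phi_{3}$ are exactly $A \cap B$, so the supporters of $\lnot\Phi_{3}$ are exactly $N \setminus (A\cap B)$.
\end{itemize}
Unanimity gives $\Delta \subseteq F(h)$. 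Since $A \in \mathcal{U} = \mathcal{U}_{\Phi_{1}}$, we get $\Phi_{1} \in F(h)$. By Monotonicity (established above) and $B \in \mathcal{U}$, the supporter set of $\Phi_{2}$ lies in $\mathcal{U}$, so $\Phi_{2} \in F(h)$. Since $F(h)$ is consistent and $\Gamma_{0}$ is inconsistent, $\Phi_{3} \notin F(h)$. By completeness, $\lnot\Phi_{3} \in F(h)$.

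This shows $N \setminus (A \cap B) \in \mathcal{U}_{\lnot\Phi_{3}}$, which is not yet what we want. Positive-Negative Neutrality gives $\mathcal{U}_{\lnot\Phi} = \mathcal{U}_{\Phi}$, where $\mathcal{U}_{\lnot\Phi}$ means the coalitions $C$ such that acceptance of $\lnot\Phi$ by exactly $C$ forces collective acceptance of $\lnot\Phi$. So $N \setminus (A\cap B) \in \mathcal{U}_{\Phi_3}=\mathcal{U}$. We now want to pass to the complement.

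\textbf{Main obstacle: passing to the complement.} We need the \emph{complement property}: $C \in \mathcal{U}$ if and only if $N \setminus C \notin \mathcal{U}$. Take any profile in which the supporters of $\Phi_{3}$ are exactly $A\cap B$ (such profiles exist, e.g.\ $h$ itself). Then $\lnot\Phi_{3}$ is supported by exactly $N\setminus(A\cap B) \in \mathcal{U}_{\lnot \Phi_{3}}$, so $\lnot\Phi_{3}$ is collectively accepted, and $\Phi_{3}$ is not. Hence $A\cap B \notin \mathcal{U}$, which is the opposite of the desired conclusion. So the assignment must be rearranged so that the forced conclusion is positive acceptance by $A\cap B$.

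The fix is to swap roles: use $\Gamma_{3}$, which contains $\Phi_{1},\Phi_{2},\lnot\Phi_{3}$, on $N\setminus A$ and $A\setminus B$ appropriately, and arrange that $\lnot\Phi_{3}$ is accepted exactly by a superset-of-complement pattern, so that the forced outcome is that $\Phi_3$ is rejected by exactly $N\setminus(A\cap B)$. Concretely, this means applying the argument to the negated formulas, which is legitimate because $\mathcal{U}_{\Phi}=\mathcal{U}_{\lnot\Phi}$ and minimal inconsistency is symmetric under relabeling. I expect the bookkeeping here — checking which of $\Phi_3$ or $\lnot\Phi_3$ is forced, and invoking $\mathcal{U}_{\lnot\Phi}=\mathcal{U}_\Phi$ together with the complement property derived from completeness and consistency of $F(h)$ — to be the only delicate step. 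The remaining steps are routine profile constructions via Lemma~\ref{lem:cons extend}.
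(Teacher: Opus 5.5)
Your proposal has a genuine error. It is a bookkeeping slip, not a wrong strategy. In your profile $h$, the individuals in $A\cap B$ hold $\Gamma_3$, which contains $\lnot\Phi_3$. The individuals in $A\setminus B$ and $N\setminus A$ hold $\Gamma_2$ and $\Gamma_1$, both of which contain $\Phi_3$. Hence $h^{-1}(\Phi_3)=N\setminus(A\cap B)$ and $h^{-1}(\lnot\Phi_3)=A\cap B$, the reverse of what you wrote in your last bullet.

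Everything after that line rests on this miscount. The claim $N\setminus(A\cap B)\in\mathcal{U}_{\lnot\Phi_3}$ is unsupported. Your own deduction from it, $A\cap B\notin\mathcal{U}$, would refute the proposition outright given $A,B\in\mathcal{U}$; that should have signalled an error in the count rather than a need to rearrange the profile. The proposed ``fix'' is not a proof: it specifies no profile. Moreover, ``applying the argument to the negated formulas'' is not legitimate in general, since $\{\lnot\Phi : \Phi\in\Gamma_0\}$ need not be minimally inconsistent. For instance, negating $\{\lnot P_{w_0}\}\cup\{P_s\}_{s\in S_0}$ gives a set that is typically consistent.

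With the correct count, your argument closes immediately and no complement step is needed. You have $\Delta\cup\{\Phi_1,\Phi_2\}\subseteq F(h)$, so $\lnot\Phi_3\in F(h)$ by consistency and completeness of $F(h)$. Since $h^{-1}(\lnot\Phi_3)=A\cap B$, this gives $A\cap B\in\mathcal{U}_{\lnot\Phi_3}=\mathcal{U}$.

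This is exactly the paper's proof with the indices $1$ and $3$ interchanged. The paper assigns $\Gamma_1$ to $A\cap B$, $\Gamma_2$ to $A\setminus B$, and $\Gamma_3$ to $N\setminus A$. It obtains $\Phi_3$ from $A\in\mathcal{U}$, and $\Phi_2$ from $B\in\mathcal{U}$ plus Monotonicity. It then concludes $\lnot\Phi_1\in F(h)$ with $h^{-1}(\lnot\Phi_1)=A\cap B$. So correct the last bullet to the right supporter sets and delete the ``main obstacle'' discussion entirely.
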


\begin{proof}
	From minimal connectedness, take $\Gamma_{0} \subseteq \Gamma$ with $|\Gamma_{0}| \ge 3$ such that $\Gamma_{0}$ is minimally inconsistent. Take distinct $\Phi_{1}, \Phi_{2}, \Phi_{3} \in \Gamma_{0}$. From minimal inconsistency, $(\Gamma_{0} \setminus \{\Phi_{i}\}) \cup \{\lnot \Phi_{i}\}$ is consistent for all $i=1,2,3$. By using Lemma \ref{lem:cons extend}, for all $i=1,2,3$, take $\Gamma_{i} \in \mathcal{J}$ such that $(\Gamma_{0} \setminus \{\Phi_{i}\}) \cup \{\lnot \Phi_{i}\} \subseteq \Gamma_{i}$.

	We define $h \in \mathcal{J}^{N}$ as follows.
\begin{align*}
	h(i):=\begin{cases}
			\Gamma_{1} & \quad (i \in A \cap B) \\
			\Gamma_{2} & \quad (i \in A \setminus B) \\
			\Gamma_{3} & \quad (i \in N \setminus A)
		\end{cases}
\end{align*}

We prove $\Gamma_{0} \setminus \{\Phi_{1}\} \subseteq F(h)$ by the following arguments.
\begin{enumerate}
	\item From Unanimity, we have $\Gamma_{0} \setminus \{\Phi_{1},\Phi_{2},\Phi_{3}\} \subseteq F(h)$. 
    \item From $A \in \mathcal{U}=\mathcal{U}_{\Phi_{3}}$ and $A=h^{-1}(\Phi_{3})$, we have $\Phi_{3} \in F(h)$. 
    \item From $B \in \mathcal{U}=\mathcal{U}_{\Phi_{2}}$ and $B \subseteq (A \cap B) \cup (N \setminus A)$ and Monotonicity, we have $(A \cap B) \cup (N \setminus A) \in \mathcal{U}$. Together with $(A \cap B) \cup (N \setminus A)=h^{-1}(\Phi_{2})$, we have $\Phi_{2} \in F(h)$. 
\end{enumerate}

    Therefore, we have $\Gamma_{0} \setminus \{\Phi_{1}\} \subseteq F(h)$. Together with the inconsistency of $\Gamma_{0}$, we have $\lnot \Phi_{1} \in F(h)$. Also, from $\lnot \Phi_{1} \in \Gamma_{1}$ and $\Phi_{1} \in \Gamma_{2}$ and $\Phi_{1} \in \Gamma_{3}$, we have $A \cap B =h^{-1}(\lnot \Phi_{1})$. From $\lnot \Phi_{1} \in F(h)$ and $A \cap B=h^{-1}(\lnot \Phi_{1})$, we have $A \cap B \in \mathcal{U}_{\lnot \Phi_{1}}=\mathcal{U}$.
\end{proof}

	From Proposition \ref{prop:sys and mono} and \ref{prop:intersection}, $\mathcal{U} \subseteq 2^N$ is an ultrafilter on $N$. That is, it satisfies the following four conditions.

\begin{enumerate}
	\item From Unanimity, we have $N \in \mathcal{U}$ and $\varnothing \notin \mathcal{U}$.
	\item From Monotonicity, for all $A \subseteq B \subseteq N$, $A \in \mathcal{U}$ implies $B \in \mathcal{U}$.
	\item From Proposition \ref{prop:intersection}, for all $A, B \subseteq N$, $A \in \mathcal{U}$ and $B \in \mathcal{U}$ imply $A \cap B \in \mathcal{U}$, which is called Intersection.
	\item From Positive-Negative Neutrality, we have $\mathcal{U}_{\Phi}=\mathcal{U}_{\lnot \Phi}$. Therefore, for all $A \subseteq N$, we have $A \in \mathcal{U}$ or $N \setminus A \in \mathcal{U}$.
\end{enumerate}

    Ultrafilters have been treated in social choice theory. Fishburn \cite{Fishburn1970} first used ultrafilters to construct non-dictatorial arrowian preference aggregations in infinite population. Kirman and Sondermann \cite{Kirman1972} then established correspondence between arrowian preference aggregations and ultrafilters. In judgment aggregation, Herzberg and Eckert \cite{Herzberg2012} proved that under some conditions, dictatorship is inevitable in infinite population even if ultrafilter aggregations are admitted.

	Third, we prove Proposition \ref{prop:ultra prin}. By this proposition, $F$ is dictatorial.

\begin{prop}\label{prop:ultra prin}
	There exists some $i_{0} \in N$ such that $\mathcal{U}=\{A \subseteq N \mid i_{0} \in A\}$.
\end{prop}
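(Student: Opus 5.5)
The plan is to use the standard fact that an ultrafilter on a finite set is principal. We already know that $\mathcal{U}$ satisfies four properties: $N \in \mathcal{U}$ and $\varnothing \notin \mathcal{U}$, Monotonicity, Intersection (Proposition~\ref{prop:intersection}), and completeness, namely $A \in \mathcal{U}$ or $N\setminus A \in \mathcal{U}$.

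First, consider $i_{0}$-membership. Since $N$ is finite and nonempty, $\mathcal{U}$ is a finite nonempty family because it contains $N$. Set $U_{0} := \bigcap_{A \in \mathcal{U}} A$. Applying Intersection finitely many times, by induction on the number of sets, gives $U_{0} \in \mathcal{U}$. Since $\varnothing \notin \mathcal{U}$, $U_{0}$ is nonempty, so we may pick $i_{0} \in U_{0}$. By construction every $A \in \mathcal{U}$ contains $U_{0}$ and hence contains $i_{0}$. This gives $\mathcal{U} \subseteq \{A \subseteq N \mid i_{0} \in A\}$.

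For the converse inclusion, take $A$ with $i_{0} \in A$ and suppose $A \notin \mathcal{U}$. By completeness $N \setminus A \in \mathcal{U}$. Then $i_{0} \in N\setminus A$ by the first step, which contradicts $i_{0} \in A$. (Alternatively, one can show $U_{0} = \{i_{0}\}$: if $U_{0}$ had two elements, splitting it by completeness and Intersection would yield a strictly smaller member. Monotonicity would then finish the argument. The completeness argument above is shorter.)

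No step is a serious obstacle. The only point to check is that completeness really holds for every $A \subseteq N$, including the edge cases. This requires a profile $f$ with $f^{-1}(\Phi) = A$ for some $\Phi$, which Lemma~\ref{lem:cons extend} supplies because both $\Phi$ and $\lnot\Phi$ are consistent. Then either $\Phi \in F(f)$, so $A \in \mathcal{U}_{\Phi}$, or $\lnot\Phi \in F(f)$ with $f^{-1}(\lnot\Phi) = N\setminus A$, so $N\setminus A \in \mathcal{U}_{\lnot\Phi} = \mathcal{U}$. Once this is confirmed, the proposition follows. Dictatorship of $i_{0}$ then follows: for every $\Phi \in \Gamma$ we have $\Phi \in F(f)$ if and only if $f^{-1}(\Phi) \in \mathcal{U}$, which holds if and only if $\Phi \in f(i_{0})$, so $F(f) = f(i_{0})$.
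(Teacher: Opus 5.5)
Your proof is correct; it differs from the paper's only in how the distinguished individual is found.

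\textbf{The paper's route.} The paper argues by contradiction over the $n$ co-singletons. If every $N\setminus\{i\}$ were in $\mathcal{U}$, finite Intersection would put $\varnothing$ in $\mathcal{U}$. So some $N\setminus\{i_0\}\notin\mathcal{U}$, and completeness gives $\{i_0\}\in\mathcal{U}$. Monotonicity then puts every superset of $\{i_0\}$ in $\mathcal{U}$. For the reverse inclusion, the paper intersects an arbitrary $A\in\mathcal{U}$ with $\{i_0\}$ and uses $\varnothing\notin\mathcal{U}$ to get $i_0\in A$.

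\textbf{Your route.} You intersect the whole finite family $\mathcal{U}$ to get its kernel $U_0\in\mathcal{U}$. This gives $\mathcal{U}\subseteq\{A\subseteq N\mid i_0\in A\}$ at once, and completeness handles the other inclusion.

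\textbf{Comparison.} Both arguments rest on the same three facts: closure under finite intersection, $\varnothing\notin\mathcal{U}$, and completeness. Yours never uses Monotonicity and never needs $\{i_0\}\in\mathcal{U}$ explicitly. The paper's intersects only $n$ sets rather than up to $2^n$, which does not matter for finite $N$.

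Your explicit check that completeness holds for every $A\subseteq N$ is a small improvement. The paper derives item 4 of its ultrafilter list directly from $\mathcal{U}_\Phi=\mathcal{U}_{\lnot\Phi}$ and leaves implicit that a profile with $f^{-1}(\Phi)=A$ exists. You supply that profile via Lemma~\ref{lem:cons extend}.
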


\begin{proof}
    We prove that there exists some $i_{0} \in N$ such that $\{A \subseteq N \mid i_{0} \in A\} \subseteq \mathcal{U}$. Assume that for all $i \in N$, we have $N \setminus \{i\} \in \mathcal{U}$. Since $N$ is finite, from Intersection, we have $\varnothing=\cap_{i \in N} N \setminus \{i\} \in \mathcal{U}$, which contradicts $\varnothing \notin \mathcal{U}$. Therefore, there exist some $i_{0} \in N$ such that $N \setminus \{i_{0}\} \notin \mathcal{U}$. Together with Positive-Negative Neutrality, we have $\{i_{0}\} \in \mathcal{U}$. From Monotonicity, we have $\{A \subseteq N \mid i_{0} \in A\} \subseteq \mathcal{U}$. 

    Conversely, let $A \in \mathcal{U}$. From $\{i_{0}\} \in \mathcal{U}$ and Intersection, we have $A \cap \{i_{0}\} \in \mathcal{U}$. Together with $\varnothing \notin \mathcal{U}$, we have $A \cap \{i_{0}\} \neq \varnothing$, and thus $i_{0} \in A$.
\end{proof}

\section{Proofs omitted from Section \ref{sec:impossible agenda}}\label{sec:impossible agenda pro}

	In this section, we prove one lemma and two propositions which we do not prove in Section \ref{sec:impossible agenda}. Recall that proposition $P_{w}$ means $A+w \subseteq V$, where $V$ is the valuation of the Kripke model.
    
    The following proposition restates Proposition \ref{lem:plus minus con}, which corresponds to Contribution \ref{con2}.

\begin{prop}\label{lem:plus minus con app}
	Let $J_{+}, J_{-} \subseteq \mathbb{Z}/r\mathbb{Z}$. Then $\Gamma_{0}:=\{P_{w}\}_{w \in J_{+}} \sqcup \{\lnot P_{w_{0}}\}_{w_{0} \in J_{-}}$ is consistent if and only if for all $w_{0} \in J_{-}$, $A+w_{0} \not\subseteq \cup_{w \in J_{+}} A+w$.
\end{prop}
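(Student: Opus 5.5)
The plan is to prove both directions directly from the meaning of $P_{w}$, namely $A+w \subseteq V$. This meaning was derived at the start of Section~\ref{sec:impossible agenda}: for Frame~1 directly from the accessibility relation, and for Frame~2 via Theorem~\ref{th:calcu reduc pro}. Consistency of $\Gamma_{0}$ under $(W,R)$ then amounts to the existence of a valuation $V \subseteq W$ with $A+w \subseteq V$ for every $w \in J_{+}$ and $A+w_{0} \not\subseteq V$ for every $w_{0} \in J_{-}$. For Frame~1, worlds outside $\mathbb{Z}/r\mathbb{Z}$ (namely $x$) play no role in any $P_{w}$, so we may freely restrict attention to $V \cap \mathbb{Z}/r\mathbb{Z}$. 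With this, the problem becomes purely set-theoretic.

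For the ``only if'' direction, I will argue by contraposition. Suppose some $w_{0} \in J_{-}$ satisfies $A+w_{0} \subseteq \cup_{w \in J_{+}} A+w$. Then any valuation $V$ making all $P_{w}$ ($w \in J_{+}$) true contains $\cup_{w \in J_{+}} A+w$, and hence contains $A+w_{0}$. So $P_{w_{0}}$ is true and $\lnot P_{w_{0}}$ fails, which means $\Gamma_{0}$ is inconsistent. This mirrors the second half of Example~\ref{ex:lem:plus minus con}.

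For the ``if'' direction, I take the minimal candidate valuation $V := \cup_{w \in J_{+}} A+w$, as in the first half of Example~\ref{ex:lem:plus minus con}. Every $P_{w}$ with $w \in J_{+}$ holds by construction. For each $w_{0} \in J_{-}$, the hypothesis $A+w_{0} \not\subseteq V$ says exactly that $P_{w_{0}}$ is false, so $\lnot P_{w_{0}}$ holds. Thus $V$ witnesses consistency at $\ini$.

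The only delicate point is the translation step: confirming that, at $\ini$, each agenda formula is genuinely equivalent to the corresponding $P_{w}$ for arbitrary valuations, and that negation commutes with this equivalence. Since the equivalences in Section~\ref{sec:impossible agenda} hold for every $V$, the negated formulas translate to $\lnot P_{w}$. The remaining steps are routine.
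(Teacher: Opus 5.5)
Your proposal is correct and follows essentially the same argument as the paper. The forward direction rests on the fact that any witnessing valuation contains $\cup_{w \in J_{+}} A+w$; the paper argues this directly and you argue it by contraposition. The converse uses the minimal valuation $V := \cup_{w \in J_{+}} A+w$, exactly as the paper does. Your remark that the translation to $P_{w}$ holds for every valuation, so negations translate correctly and the union is a legitimate valuation in $W$ for Frame~1, makes explicit what the paper's proof uses implicitly.
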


\begin{proof}
	Suppose $\Gamma_{0}$ is consistent. Then there exists some valuation $V \subseteq W$ such that for all $w \in J_{+}$, $P_{w}$ is true, and for all $w_{0} \in J_{-}$, $P_{w_{0}}$ is false. Therefore, for all $w \in J_{+}$, we have $A+w \subseteq V$, and thus $\cup_{w \in J_{+}} A+w \subseteq V$. Also, take an arbitrary $w_{0} \in J_{-}$. From the condition on $V$, we have $A+w_{0} \not\subseteq V$. Together with $\cup_{w \in J_{+}} A+w \subseteq V$, we have $A+w_{0} \not\subseteq \cup_{w \in J_{+}} A+w$.
	
	Conversely, suppose for all $w_{0} \in J_{-}$, we have $A+w_{0} \not\subseteq \cup_{w \in J_{+}} A+w$. Set $V:=\cup_{w \in J_{+}} A+w$. For all $w \in J_{+}$, we have $A+w \subseteq V$, and thus $P_{w}$ is true. For all $w_{0} \in J_{-}$, we have $A+w_{0} \not\subseteq V$, and thus $\lnot P_{w_{0}}$ is true.
\end{proof}

The following lemma restates Lemma \ref{lem:min cov good}.

\begin{lemma}\label{lem:min cov good app}
If $(w_{0}, w_{1},S_{0})$ is a pointed minimal cover, then the following statements hold.
\begin{enumerate}
	\item \label{lem:min cov good app 1} $\{\lnot P_{w_{0}}\} \cup \{P_{s}\}_{s \in S_{0}}$ is minimally inconsistent.
	\item $P_{w_{1}} <_{0} P_{w_{0}}$.
\end{enumerate}
\end{lemma}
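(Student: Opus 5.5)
Both parts reduce to Proposition~\ref{lem:plus minus con app}, which turns every consistency question about sets of the form $\{P_w\}_{w\in J_+}\sqcup\{\lnot P_{w}\}_{w\in J_-}$ into a covering question about the translates $A+w$. I will therefore check the covering conditions supplied by Definition~\ref{def:point min cov} directly.

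For part~(\ref{lem:min cov good app 1}), set $\Delta:=\{\lnot P_{w_0}\}\cup\{P_s\}_{s\in S_0}$.

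\emph{$\Delta$ is inconsistent.} Take $J_+=S_0$ and $J_-=\{w_0\}$. Minimality with $S_1=S_0$ gives $A+w_0\subseteq\cup_{s\in S_0}A+s$, so Proposition~\ref{lem:plus minus con app} shows $\Delta$ is inconsistent.

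\emph{Every proper subset is consistent.} Let $\Delta_0\subsetneq\Delta$. If $\lnot P_{w_0}\notin\Delta_0$, then $J_-=\varnothing$ and the covering condition holds vacuously; concretely, $V=W$ satisfies $\Delta_0$. Otherwise $\Delta_0=\{\lnot P_{w_0}\}\cup\{P_s\}_{s\in S_1}$ for some $S_1\subsetneq S_0$. Minimality then gives $A+w_0\not\subseteq\cup_{s\in S_1}A+s$, so $\Delta_0$ is consistent by the proposition. Hence $\Delta$ is minimally inconsistent. I should also confirm that $w_0\notin S_0$, so that $\Delta$ really has the displayed elements. If $w_0\in S_0$, then $S_1=\{w_0\}$ would already cover $A+w_0$. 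Minimality would force $S_0=\{w_0\}$, contradicting $|S_0|\ge 2$.

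For part~(2), put $\Gamma_0:=\{P_s\}_{s\in S_0\setminus\{w_1\}}$.

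\emph{First condition of $<_0$.} Then $\Gamma_0\cup\{P_{w_1},\lnot P_{w_0}\}$ is exactly $\Delta$ from part~(1), hence minimally inconsistent.

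\emph{Second condition of $<_0$.} It remains to show that $\Gamma_0\cup\{\lnot P_{w_1},P_{w_0}\}$ is consistent. Take $J_+=(S_0\setminus\{w_1\})\cup\{w_0\}$ and $J_-=\{w_1\}$. Irreducibility says precisely $A+w_1\not\subseteq\cup_{w\in J_+}A+w$, so Proposition~\ref{lem:plus minus con app} gives consistency.

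\emph{Well-definedness.} One point needs care: $\Gamma_0\cup\{\lnot P_{w_1},P_{w_0}\}$ must not accidentally contain both $P_{w_1}$ and $\lnot P_{w_1}$. This cannot happen: $w_1\notin S_0\setminus\{w_1\}$, and $w_1\neq w_0$ because $w_1\in S_0$ while $w_0\notin S_0$. So $J_+\cap J_-=\varnothing$ and the proposition applies as stated.

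The only real obstacle is this bookkeeping of distinctness, namely $w_0\notin S_0$ and $w_1\neq w_0$, together with the empty-$J_-$ case. Once these are handled, everything follows mechanically from the covering characterization.
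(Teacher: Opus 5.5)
Your proposal is correct and follows the paper's proof essentially step for step: inconsistency of $\{\lnot P_{w_0}\}\cup\{P_s\}_{s\in S_0}$ from the covering proposition, $V=W$ for subsets omitting $\lnot P_{w_0}$, Minimality for the remaining proper subsets, and $\Gamma_0=\{P_s\}_{s\in S_0\setminus\{w_1\}}$ with Irreducibility for $P_{w_1}<_0 P_{w_0}$. Your extra bookkeeping ($w_0\notin S_0$, $w_1\neq w_0$) is valid but not strictly needed, since the covering criterion already declares a set inconsistent whenever $J_+\cap J_-\neq\varnothing$.
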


\begin{proof}
\begin{enumerate}
	\item From $A+w_{0} \subseteq \cup_{s \in S_{0}} A+s$ and Proposition \ref{lem:plus minus con app}, $\{\lnot P_{w_{0}}\} \cup \{P_{s}\}_{s \in S_{0}}$ is inconsistent. Conversely, let $\Gamma_{0} \subseteq \{\lnot P_{w_{0}}\} \cup \{P_{s}\}_{s \in S_{0}}$ be inconsistent. We prove $\lnot P_{w_{0}} \in \Gamma_{0}$ by contradiction. Assume $\Gamma_{0} \subseteq \{P_{s}\}_{s \in S_{1}}$. If we set $V:=W$, all formulas in $\Gamma_{0}$ are true, which is a contradiction. Therefore, we have $\lnot P_{w_{0}} \in \Gamma_{0}$. Suppose $\Gamma_{0}=\{\lnot P_{w_{0}}\} \cup \{P_{s}\}_{s \in S_{1}}$, where $S_{1} \subseteq S_{0}$. Since $\Gamma_{0}$ is inconsistent, from Proposition \ref{lem:plus minus con app}, we have $A+w_{0} \subseteq \cup_{s \in S_{1}} A+s$. From Minimality, we have $S_{1}=S_{0}$, and thus $\Gamma_{0}=\{\lnot P_{w_{0}}\} \cup \{P_{s}\}_{s \in S_{0}}$.
	
	\item Take $\Gamma_{0}=\{P_{s}\}_{s \in S_{0} \setminus \{w_{1}\}}$ in the definition of $<_{0}$. From Irreducibility and Proposition \ref{lem:plus minus con app}, $\{\lnot P_{w_{1}}\} \cup \{P_{w}\}_{w \in (S_{0} \setminus \{w_{1}\}) \cup \{w_{0}\}}=\Gamma_{0} \cup \{\lnot P_{w_{1}}, P_{w_{0}}\}$ is consistent. Also, from Lemma \ref{lem:min cov good app} (\ref{lem:min cov good app 1}), $\Gamma_{0} \cup \{P_{w_{1}}, \lnot P_{w_{0}}\}$ is minimally inconsistent. Therefore, we have $P_{w_{1}} <_{0} P_{w_{0}}$.
\end{enumerate}
\end{proof}

The following proposition restates Proposition \ref{prop:ex of minimal cover}, which constructs pointed minimal covers. Recall that for any integers $a,b \in \mathbb{Z}$ with $a \le b$, let $[a,b]=\{\overline{a},\overline{a+1},\cdots,\overline{b}\} \subseteq \mathbb{Z}/r\mathbb{Z}$. In the proof, we write $\overline{0}$ and $\overline{k}$ instead of abbreviating it as $0$ and $k$.

\begin{prop}\label{prop:ex of minimal cover app}
	For all $w \in \mathbb{Z}/r\mathbb{Z}$, there exists a subset $S_{0} \subseteq \mathbb{Z}/r\mathbb{Z}$ with $|S_{0}| \ge 2$ and $w+\overline{k} \in S_{0}$ such that $(w, w+\overline{k}, S_{0})$ is a pointed minimal cover.
\end{prop}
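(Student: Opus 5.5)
By translation invariance it suffices to treat $w=\overline{0}$. Every covering and non-covering relation among the translates $A+s$ is preserved under adding a fixed $w$ to all indices. So if $(\overline{0},\overline{k},S_{0})$ is a pointed minimal cover, then $(w,w+\overline{k},S_{0}+w)$ is one too. The plan is to build $S_{0}$ explicitly from the elements of $A$, then check Minimality and Irreducibility. The checks use interval bookkeeping inside $\mathbb{Z}/r\mathbb{Z}$, with the hypothesis $3k<r$ ruling out wrap-around.

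\textbf{Step 1 (a candidate cover).} Put
\[
T:=\{\overline{k}\}\cup\{\overline{a-k}\mid a\in A\setminus\{\overline{k}\}\},
\]
where each $a\in A\setminus\{\overline k\}$ is read as an integer in $[0,k-1]$. Because $0,k\in A\subseteq[0,k]$, we have $A+\overline{k}\ni\overline{k}$. For each such $a$, the translate $A+(a-k)\subseteq[a-k,a]$ contains $a=k+(a-k)$. Hence $A\subseteq\bigcup_{s\in T}A+s$.

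The key observation is that $\overline{k}$ lies in no $A+(a-k)$ with $a\le k-1$. This translate sits in the integer interval $[a-k,a]\subseteq[-k,k-1]$, and since $2k<r$ no element of this interval is congruent to $k$. So $\overline{k}$ is covered only by $A+\overline{k}$.

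\textbf{Step 2 (Minimality).} Since $T$ is finite, choose an inclusion-minimal $S_{0}\subseteq T$ with $A\subseteq\bigcup_{s\in S_{0}}A+s$. Inclusion-minimality gives exactly the Minimality condition: any proper subset of $S_0$ fails to cover, and $S_0$ itself covers. By the key observation, $\overline{k}\in S_{0}$. Also $\overline{0}\in A\setminus\{\overline{k}\}$ is not covered by $A+\overline{k}\subseteq[k,2k]$, because $2k<r$ means no element of $[k,2k]$ is congruent to $0$. So $S_{0}$ must contain a further element and $|S_{0}|\ge2$.

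\textbf{Step 3 (Irreducibility).} This is the step where the bound $k<r/3$ really matters, and it is where I expect the main care is needed. I would exhibit the witness $\overline{2k}\in A+\overline{k}$. We must show it lies neither in $A+\overline{0}\subseteq[0,k]$ nor in any $A+(a-k)\subseteq[-k,k-1]$ with $s\in S_{0}\setminus\{\overline{k}\}$. As residues, the union of these sets lies in the integer window $[-k,k]$. The residue $2k$ would have to equal some $t\in[-k,k]$ modulo $r$, forcing $r\mid 2k-t$. But $0<k\le 2k-t\le 3k<r$, so this is impossible. This gives Irreducibility and completes the construction.
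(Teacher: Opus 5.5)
Your proposal is correct and follows the paper's proof essentially step for step. It uses the same reduction to $w=\overline{0}$ and the same candidate cover $\{\overline{k}\}\cup\{a-\overline{k}\mid a\in A\setminus\{\overline{k}\}\}$ pruned to a minimal subcover. The arguments that $\overline{k}\in S_0$ (since $\overline{k}$ is covered only by $A+\overline{k}$) and that $|S_0|\ge 2$ (since $\overline{0}\notin A+\overline{k}$) are also the same, as is the witness $2\cdot\overline{k}$ for Irreducibility. The differences are cosmetic: you spell out translation invariance explicitly, and you handle both exclusions for $2\cdot\overline{k}$ at once through the window $[-k,k]$, where the paper checks $A+s$ and $A+\overline{0}$ separately.
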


\begin{proof}
	It is enough to prove Proposition \ref{prop:ex of minimal cover app} in the case $w=\overline{0}$.
    
	We have $A+\overline{0}=A \subseteq A+\overline{k} \cup (\cup_{a \in A \setminus \{\overline{k}\}} A+(a-\overline{k}))$. Suppose $a \in A$. If $a=\overline{k}$, then we have $\overline{k}=\overline{0}+\overline{k} \in A+\overline{k}$. If $a \in A \setminus \{\overline{k}\}$, then we have $a=\overline{k}+(a-\overline{k}) \in A+(a-\overline{k})$.
	
	Since the above cover is a finite cover, there exists some $S_{0} \subseteq \{\overline{k}\} \cup \{a-\overline{k} \mid a \in A \setminus \{\overline{k}\}\}$ such that $\cup_{s \in S_{0}} A+s$ is a minimal cover of $A$. The following arguments show that the triple $(\overline{0},\overline{k},S_{0})$ is a pointed minimal cover.
	\begin{enumerate}
		\item We prove $\overline{k} \in S_{0}$. From $\overline{k} \in A \subseteq \cup_{s \in S_{0}} A+s$, there exists some $s_{0} \in S_{0}$ such that $\overline{k} \in A+s_{0}$. We prove $s_{0}=\overline{k}$ by contradiction. Assume $s_{0} \in S_{0} \setminus \{\overline{k}\}$. From $S_{0} \subseteq \{\overline{k}\} \cup \{a-\overline{k} \mid a \in A \setminus \{\overline{k}\}\}$, $s_{0}$ is of the form $a-\overline{k}$ with $a \in A \setminus \{\overline{k}\}$. Thus, we have the following relation in $\mathbb{Z}/r\mathbb{Z}$.
        \begin{align*}
            \overline{k} &\in A+s_{0} \quad (\text{Choice of $s_{0}$}) \\
            &=A+(a-\overline{k}) \quad (\text{Assumption $s_{0}=a-\overline{k}$}) \\
            &\subseteq [0, k]+[0, k-1]-\overline{k} \quad (A \subseteq [0, k], a \in A \setminus \{\overline{k}\} \subseteq [0, k-1]) \\
            &=[-k, k-1]
        \end{align*}
        Therefore, we have $\overline{0} \in [-2 \cdot k, -1]$ in $\mathbb{Z}/r\mathbb{Z}$. However, from $0 <k < \dfrac{r}{3}$, we have $-r<-2 \cdot k<-1<0$, which contradicts $\overline{0} \in [-2 \cdot k, -1]$ in $\mathbb{Z}/r\mathbb{Z}$. Therefore, we have $s_{0}=\overline{k}$, and thus $\overline{k} \in S_{0}$.
        
		\item We prove $|S_{0}| \ge 2$. It is enough to show the existence of $s_{0} \in S_{0} \setminus \{\overline{k}\}$. From $\overline{0} \in A \subseteq \cup_{s \in S_{0}} A+s$, there exists some $s_{0} \in S_{0}$ such that $\overline{0} \in A+s_{0}$. We prove $s_{0} \in S_{0} \setminus \{\overline{k}\}$ by contradiction. Assume $s_{0}=\overline{k}$. From $\overline{0} \in A+s_{0}$, $A \subseteq [0, k]$, and $s_{0}=\overline{k}$, we have $\overline{0} \in A+\overline{k} \subseteq [k, 2 \cdot k]$ in $\mathbb{Z}/r\mathbb{Z}$. However, from $0 <k < \dfrac{r}{3}$, we have $0<k<2 \cdot k<r$, which contradicts $\overline{0} \in [k, 2 \cdot k]$ in $\mathbb{Z}/r\mathbb{Z}$. Therefore, we have $s_{0} \in S_{0} \setminus \{\overline{k}\}$, and thus $|S_{0}| \ge 2$.
        
		\item Minimality: From the definition of $S_{0}$, $\cup_{s \in S_{0}} A+s$ is a minimal cover of $A+\overline{0}=A$.
        
		\item Irreducibility: It is enough to show $2 \cdot \overline{k} \in A+\overline{k} \setminus (\cup_{w \in (S_{0} \setminus \{\overline{k}\}) \cup \{\overline{0}\}} A+w)$. First, we have $2 \cdot \overline{k}=\overline{k}+\overline{k} \in A+\overline{k}$. 
		
		Second, we prove $2 \cdot \overline{k} \notin A+s$ for all $s \in S_{0} \setminus \{\overline{k}\}$ by contradiction. Assume $2 \cdot \overline{k} \in A+s$. From $s \in S_{0} \setminus \{\overline{k}\} \subseteq \{a-\overline{k} \mid a \in A \setminus \{\overline{k}\}\}$, $s$ is of the form $a-\overline{k}$ with $a \in A \setminus \{\overline{k}\}$.  Together with $A \subseteq [0, k]$, we have the following relation in $\mathbb{Z}/r\mathbb{Z}$.
		\begin{align*}
			2 \cdot \overline{k} &\in A+s \quad (\text{Assumption}) \\
            &=A+a-\overline{k} \\
			&\subseteq A+A-\overline{k} \quad (a \in A \setminus \{\overline{k}\}) \\
            &\subseteq [0, k]+[0, k]-\overline{k} \quad (A \subseteq [0,k]) \\
			&\subseteq [-k, k]
		\end{align*}
		Therefore, we have $\overline{0} \in [-3 \cdot k, -k]$ in $\mathbb{Z}/r\mathbb{Z}$. However, from $0<k < \dfrac{r}{3}$, we have $-r<-3 \cdot k<-k<0$, which contradicts $\overline{0} \in [-3 \cdot k, -k]$ in $\mathbb{Z}/r\mathbb{Z}$. Therefore, we have $2 \cdot \overline{k} \notin A+s$.
		
		Third, we prove $2 \cdot \overline{k} \notin A+0=A$ by contradiction. Assume $2 \cdot \overline{k} \in A+\overline{0}=A$. Together with $A \subseteq [0, k]$, we have $2 \cdot \overline{k} \in [0, k]$, and thus $\overline{0} \in [-2 \cdot k, -k]$ in $\mathbb{Z}/r\mathbb{Z}$. However, from $0<k < \dfrac{r}{3}$, we have $-r<-2 \cdot k<-k<0$, which contradicts $\overline{0} \in [-2 \cdot k, -k]$ in $\mathbb{Z}/r\mathbb{Z}$. Therefore, we have $2 \cdot \overline{k} \notin A+0=A$.
	\end{enumerate}
\end{proof}
\end{appendices}

\bibliographystyle{plain}
\bibliography{ref}

\end{document}